\let\NAT@parse\undefined
\newtheorem{lemma}{\textbf{Lemma}}
\newtheorem{theorem}{\textbf{Theorem}}
\newtheorem{proposition}{\textbf{Proposition}}
\newtheorem{definition}{\textbf{Definition}}
\newenvironment{proof}{{\noindent\it \textbf{Proof}:} }{\hfill $\square$\par}
\newtheorem{example}{\textbf{Example}}
\newtheorem{remark}{\textbf{Remark}}
\begin{document}
	
	\title{Lower Bounds for Error Coefficients of Griesmer Optimal Linear Codes via Iteration
	}

	\author{Chaofeng Guan, Shitao Li, Gaojun Luo, Zhi Ma, Hong Wang ~\IEEEmembership{}
	}


	\maketitle

	\begin{abstract}
		The error coefficient of a linear code is defined as the number of minimum-weight codewords. In an additive white Gaussian noise channel, optimal linear codes with the smallest error coefficients achieve the best possible asymptotic frame error rate (AFER) among all optimal linear codes under maximum likelihood decoding. Such codes are referred to as AFER-optimal linear codes.
		
		The Griesmer bound is essential for determining the optimality of linear codes. However, establishing tight lower bounds on the error coefficients of Griesmer optimal linear codes is challenging, and the linear programming bound often performs inadequately. In this paper, we propose several iterative lower bounds for the error coefficients of Griesmer optimal linear codes. Specifically, for binary linear codes, our bounds are tight in most cases when the dimension does not exceed $5$. To evaluate the performance of our bounds when they are not tight, we also determine the parameters of the remaining 5-dimensional AFER-optimal linear codes. Our final comparison demonstrates that even when our bounds are not tight, they remain very close to the actual values, with a gap of less than or equal to $2$.

		%
		%
		%
	\end{abstract}
	
	\begin{IEEEkeywords}
		Linear code, bound, minimum-weight codeword, error coefficient, AFER-optimal.
	\end{IEEEkeywords}
	
	\IEEEpeerreviewmaketitle

	\section{Introduction}
	
	\IEEEPARstart{L}{et} $q$ be a prime power and $\mathbb{F}_q$ denote the finite field of $q$ elements. An $[n,k]_q$ linear code $C$ is a $k$-dimensional linear subspace of $\mathbb{F}_q^n$. The vectors in $C$ are called codewords. 
	The Hamming weight of a codeword $\mathbf{c} \in C$ is defined as the number of nonzero coordinates in $\mathbf{c}$. 
	The minimum Hamming weight of a linear code $C$ is the smallest Hamming weight among all nonzero codewords in $C$. 	
	A linear code $C$ with minimum Hamming weight $d$ is denoted as an $[n,k,d]_q$ linear code\footnote{For a linear code, its minimum Hamming weight is equal to its minimum Hamming distance. }. An $[n,k,d]_q$ linear code encodes $k$ information bits into an $n$-bit codeword. This code can correct up to $\left\lfloor (d-1)/2 \right\rfloor$ bit errors. Given $n$ and $k$, a larger $d$ provides stronger error-correcting capabilities. When $n$ and $d$ are fixed, a larger $k$ results in a higher code rate $k/n$, thereby improving system efficiency.
	Due to the interdependent relationships among the three parameters $n$, $k$, and $d$, establishing bounds for linear codes has always been a central issue in classical coding theory. The well-known bounds include the Singleton bound \cite{Singleton1964}, the Griesmer bound \cite{griesmer1960bound}, the Plotkin bound \cite{plotkin1960binary}, and the linear programming bound \cite{delsarte1972bounds}. These bounds have guided the construction of many optimal linear codes that achieve these limits \cite{HengZiling2020,ShiMinjiaSO2023,li2023hull,HuZhao2024}.

	In practice, linear codes with the same parameters $n$, $k$, and $d$ may have different frame error rates (FER) denoted by $P_e$. According to \cite{swaszek1995lower}, for the additive white Gaussian noise (AWGN) channel, the FER $P_e$ of an $[n,k]_2$ linear code $C$ under maximum likelihood (ML) decoding satisfies 
    \begin{equation} P_{e} \simeq A_{d}(C) Q\left(\sqrt{ \frac{2 dkE_{b}}{nN_{0} } }\right), \quad \frac{E_{b}}{N_{0}} \rightarrow \infty, 
    \end{equation} 
    where $\simeq$ denotes asymptotic equality, $Q(x) \triangleq \frac{1}{\sqrt{2 \pi}} \int_{x}^{\infty} e^{-\frac{t^{2}}{2}} d t$, $E_{b} / N_{0}$ is the signal-to-noise ratio (SNR), and $A_d(C)$ is the number of minimum-weight codewords in $C$. If a code $C$ has the largest minimum distance $d$ and the smallest error coefficient $A_d(C)$ among all $[n,k]_q$ linear codes, then $C$ achieves the lowest \textit{asymptotic frame error rate (AFER)} among all optimal linear codes under ML decoding. Therefore, the value of $A_d(C)$ is also an important parameter for linear codes. An $[n,k,d]_q$ linear code with the smallest $A_d(C)$ is called \textit{AFER-optimal} \cite{abdullah2023some}.
	
	In modern coding theory, Ar{\i}kan discovered polar codes \cite{arikan2009channel} in 2009. Due to their excellent performance, polar codes were selected as the coding scheme for the 5G control channel. Subsequently, Ar{\i}kan proposed polarization-adjusted convolutional (PAC) codes \cite{arikan2019sequential}, a variant of polar codes. He demonstrated that PAC codes perform remarkably close to the Polyanskiy-Poor-Verd{\'u} bound \cite{polyanskiy2010channel} over the binary input additive white Gaussian noise (AWGN) channel in the low-to-moderate signal-to-noise ratio (SNR) regime. The minimum distances and error coefficients also affect the performance of polar and PAC codes. Many efforts have been made to optimize these parameters to enhance their performance \cite{Rowshan2023, Rowshan2023a, Gu2024, Dragoi2024, moradi2024polarization}. Classical and modern coding theories are interconnected. For example, polar codes are subcodes of Reed-Muller codes \cite{huffman2010fundamentals}. Lin et al. \cite{lin2020transformation} demonstrated that binary linear codes can be converted into polar codes. As a result, binary linear codes can also be decoded using polar decoding techniques \cite{khebbou2023decoding,khebbou2023single}.

    The MacWilliams formula \cite{macwilliams1977theory} uses the weight enumerator to determine bounds on the error coefficients of linear codes through linear programming or polynomial methods. 
    In 2021, Sol{\'e} et al. \cite{sole2021linear} proposed a new linear programming bound for the lower and upper limits of the error coefficients of linear codes. Later in \cite{abdullah2023some, abdullah2023new},  Abdullah and Mow constructed many Griesmer optimal binary linear and PAC codes that achieve Sol{\'e}'s lower bound. However, the linear programming bound in \cite{sole2021linear} often fails to provide tight bounds in many scenarios \cite{abdullah2023some, abdullah2023new}. Moreover, it requires complex calculations and can only compute linear codes for specific parameters. 
    In \cite{ShitaoLi2024}, Li et al. established two bounds on the error coefficients of linear codes through shortening and extension, constructed several classes of AFER-optimal binary codes, and resolved conjectures in \cite{abdullah2023some}. Using the classification in \cite{li2024characterization}, they also determined the parameters of all binary AFER-optimal linear codes up to length $13$.
    For Griesmer optimal cases, the lower bounds on the error coefficients of linear codes are still not well determined. \textbf{Therefore, more powerful bounds are needed to determine the AFER-optimal property of linear codes}. This motivates us to investigate bounds on linear codes regarding $n$, $k$, $d$, and $A_d(C)$.

	\textbf{Our Contributions}
	
	This paper proposes several practical iterative bounds for Griesmer optimal linear codes.
	We summarize the contributions as follows.
	
	\begin{enumerate}
		\item
		Determining the lower bounds on the error coefficients of optimal linear codes is well known to be very difficult. In \cite{liu2023kissing}, the authors reviewed the currently known methods for computing lower bounds on the error coefficients of linear codes. However, linear programming and polynomial methods tend to perform well only in special cases and often do not meet expectations. For example, Abdullah and Mow \cite{abdullah2023new, abdullah2023some} constructed many binary Griesmer optimal linear codes with small error coefficients. However, their AFER-optimality cannot be determined using the linear programming method. With our bounds, their AFER-optimality can be easily ascertained.

		Specifically, we propose five lower bounds on the error coefficients of Griesmer optimal linear codes. These bounds are derived by analyzing the spatial relationships between linear codes and their related residual codes, as well as the structure of the codeword space. Together, these five bounds form a strong lower bound for the error coefficients of Griesmer optimal linear codes. Under different constraints, we formulate these five lower bounds in Equation (\ref{Eq_five_lower_bounds}). Starting with binary 2-dimensional AFER-optimal linear codes, we iteratively provide lower bounds on the error coefficients for Griesmer optimal linear codes with dimensions 3, 4, and 5. Numerical results indicate that our bounds are tight in most cases, as shown in Tables \ref{Three_Binary_AFER}, \ref{Four_Binary_AFER}, and \ref{Five_Binary_AFER}.

		
		%
		\item
		Using our proposed bounds and the classification results of special binary Griesmer codes from \cite{helleseth1984further}, we determine the parameters of all binary AFER-optimal linear codes with dimensions up to $5$. Additionally, we provide explicit constructions for these codes. To evaluate the performance of our bounds when they are not tight, we compare the gap between our bounds and the actual values, as shown in Table \ref{Five_Binary_AFER_2}. The results indicate that even when our bounds are not tight, they remain very close to the real values, with a gap of less than or equal to $2$ for $k=5$.
	\end{enumerate}

	\textbf{Organization}

	In Section \ref{Sec II}, we introduce some primary results.
	We propose five iterative bounds for AFER-optimal linear codes in Section \ref{Sec III}.
	In Section \ref{Sec V}, we determine the parameters of AFER-optimal linear codes when our bounds are not tight for $k=5$.
	Finally, Section \ref{Sec VI} concludes the paper.
	
	\section{Preliminaries}\label{Sec II}
	Throughout this paper, let $\mathbb{F}_q=\{0,1,\alpha,\ldots,\alpha^{q-2}\}$ denote the finite field with $q$ elements, and let $\mathbb{F}_q^*=\mathbb{F}_q\setminus \{0\}$.
	For $\alpha\in \mathbb{F}_q$, we use $\bm{\alpha}_{m\times n}$ (resp. $\bm{\alpha}_n$) to denote an $m\times n$ matrix with each entry being $\alpha$ (resp. vector of length $n$). If the sizes of matrices (or vectors) are clear, we will omit the subscript.
	Let $[n]$ denote the set $\{1,2,\ldots,n \}$.  
    For $m<n$, let $[m,n]$ denote the set $\{m+1,\ldots,n \}$.

	\subsection{Basics of linear codes}
	
	An $[n,k]_q$ linear code $C$ is a $k$-dimensional subspace of $\mathbb{F}_q^n$.
	For a vector $\mathbf{u}=(u_1,u_2,\ldots,u_{n})$ of $\mathbb{F}_q^n$, we respectively define $\text{Supp}(\mathbf{u})=\{i\mid  u_i\ne \mathbf{0} \}$ and $wt(\mathbf{u})=| \text{Supp}(\mathbf{u})|$ as its support and Hamming weight.
	The support of $C$ is $\text{Supp}(C)=\bigcup _{c \in C} \text{Supp}(\mathbf{c})$. Let $n(C)=|\text{Supp}(C)|$ be the \textit{effective length} of $C$.
	The minimum Hamming weight and error coefficient of $C$ are $d(C)=\min\{wt(\mathbf{c})\mid \mathbf{c}\in C \}$ and $e(C)=|\{ \mathbf{c}\mid  wt(\mathbf{c})=d(C),  \mathbf{c}\in C\}|$, respectively.
	If $C$ is an $[n,k]_q$ linear code of minimum Hamming weight $d$ and error coefficient $e$, then we denote the parameters of $C$ by $[n,k,d;e]_q$.

	Let $A_i(C)=|\{ \mathbf{c}\mid  wt(\mathbf{c})=i,  \mathbf{c}\in C\}|$, where $0\le i \le n$.
	For $1\le i\le n$, $(q-1)\mid A_i(C)$ or $A_i(C)=0$.
	The polynomial $\sum_{i=0}^n A_{i}(C)x^i$ is the weight enumerator of $C$.
	For a set of $\mathbb{F}_q$-linearly independent vectors $\mathbf{c}_1$, $\mathbf{c}_2$, $\ldots$, $\mathbf{c}_k$ in $\mathbb{F}_q^n$, let $\left \langle \mathbf{c}_1,\mathbf{c}_2,\ldots,\mathbf{c}_k \right \rangle$ denote the vector space generated by $\mathbf{c}_1,\mathbf{c}_2,\ldots,\mathbf{c}_k$.
	If $C=\left \langle \mathbf{c}_1,\mathbf{c}_2,\ldots,\mathbf{c}_{k^{\prime}} \right \rangle$, where $k^{\prime}<k$, then we denote $\left \langle C,\mathbf{c}_{k^{\prime}+1},\ldots,\mathbf{c}_k \right \rangle=\left \langle \mathbf{c}_1,\mathbf{c}_2,\ldots,\mathbf{c}_{k} \right \rangle$. 
    For $t$ linear codes $C_1$, $C_2$, $\ldots$, and $C_t$ of the same dimension, we denote by $C_1\times C_2\times\cdots\times C_t$ their juxtaposition code. 
	The extended code of an $[n,k]_q$ linear code $C$ is
	\begin{equation}
		\widehat{C} =\left\{\left(c_0,c_1,\ldots,c_{n-1},\sum_{i=0}^{n-1}c_i \right)\mid \left(c_0,c_1,\ldots,c_{n-1} \right)\in C \right \}.
	\end{equation}
	
	\begin{lemma}(\cite{huffman2010fundamentals}, p. 15) \label{Lem_binary-odd-even}
		If $C$ is an $[n,k,d]_2$ linear code with $2\nmid d$, then the extended code of $C$ is an $[n+1,k,d+1]_2$ linear code.
	\end{lemma}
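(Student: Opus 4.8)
The plan is to verify the three asserted parameters of $\widehat{C}$ in turn: length, dimension, and minimum distance. The length is immediate, since the extension map appends exactly one coordinate, sending codewords of length $n$ to words of length $n+1$. For the dimension, I would observe that the map $\mathbf{c}\mapsto \widehat{\mathbf{c}}$ (where $\widehat{\mathbf{c}}$ denotes the image of $\mathbf{c}$ under the extension) is $\mathbb{F}_2$-linear, because appending the coordinate $\sum_{i=0}^{n-1}c_i$ is itself a linear functional of $\mathbf{c}$, and injective, because it leaves the first $n$ coordinates unchanged. Hence $\widehat{C}$ is a linear subspace of $\mathbb{F}_2^{n+1}$ of the same dimension $k$ as $C$.

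The crux is the minimum distance, and the key observation is the parity interpretation of the appended bit. Over $\mathbb{F}_2$ the sum $\sum_{i=0}^{n-1}c_i$ equals $wt(\mathbf{c})\bmod 2$. Thus for a codeword $\mathbf{c}$ of weight $w$, the extended codeword $\widehat{\mathbf{c}}$ has weight $w$ when $w$ is even and weight $w+1$ when $w$ is odd. I would record this basic weight relation before carrying out the case analysis.

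Next I would bound $wt(\widehat{\mathbf{c}})$ from below for every nonzero $\mathbf{c}\in C$, crucially using that $d$ is odd. If $w$ is odd, then $wt(\widehat{\mathbf{c}})=w+1\ge d+1$ since $w\ge d$. If $w$ is even, then $w\neq d$ (as $d$ is odd), so $w\ge d+1$ and $wt(\widehat{\mathbf{c}})=w\ge d+1$. Therefore every nonzero codeword of $\widehat{C}$ has weight at least $d+1$. To show this lower bound is attained, I would take any minimum-weight codeword $\mathbf{c}$ of $C$, which has odd weight $d$; its extension then has weight exactly $d+1$. Combining the lower bound with the attained value yields $d(\widehat{C})=d+1$, completing the proof.

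I do not expect a genuine obstacle here, as the argument is elementary once the parity identity is in place. The only point requiring care is the even-weight case, where one must invoke the parity of $d$ to exclude $w=d$ and thereby guarantee $w\ge d+1$; overlooking this subtlety would leave the lower bound off by one and break the claimed value $d+1$.
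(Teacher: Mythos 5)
Your proof is correct and is exactly the standard argument: the paper does not prove this lemma itself but cites it from Huffman and Pless, and the parity observation $\sum_{i=0}^{n-1}c_i = wt(\mathbf{c}) \bmod 2$ together with the case split on the parity of $w$ is precisely the textbook proof. Your care in the even-weight case (using oddness of $d$ to rule out $w=d$) and the verification that the bound is attained are both handled properly, so there is nothing to add.
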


	\subsection{Bounds on linear codes}

	The Griesmer bound is crucial for determining the optimality of linear codes.
	
	\begin{theorem}
		(Griesmer Bound, \cite{griesmer1960bound})  \label{Griesmer_Bound}
		If $\mathcal{C}$ is an $[n,k,d]_{q}$ linear code, then
		\begin{equation} 
			n \geq g_q(k, d) :\triangleq \sum\limits_{i=0}^{k-1}\left\lceil  \frac{d}{q^{i}}\right\rceil.
		\end{equation}
	\end{theorem}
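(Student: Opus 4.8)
The plan is to prove the bound by induction on the dimension $k$, using the standard device of the \emph{residual code} taken with respect to a minimum-weight codeword. For the base case $k=1$, a one-dimensional code of minimum distance $d$ consists of the $q$ scalar multiples of a single weight-$d$ vector, so its effective length is at least $d=\lceil d/q^{0}\rceil = g_q(1,d)$, and the inequality is immediate.

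For the inductive step, assume the bound holds in dimension $k-1$ and let $\mathbf{c}\in\mathcal{C}$ be a codeword of weight $d$. After permuting coordinates and rescaling columns (operations preserving the parameters $[n,k,d]_q$) I may assume $\mathbf{c}=(\bm{1}_d\mid\bm{0}_{n-d})$ and pick a generator matrix whose first row is $\mathbf{c}$. Projecting $\mathcal{C}$ onto the last $n-d$ coordinates produces the residual code $C'$. I would first check that $\dim C'=k-1$, i.e. that the projection has kernel exactly $\langle\mathbf{c}\rangle$: any nonzero codeword supported on the first $d$ coordinates has weight at most $d$, hence exactly $d$ with all $d$ entries nonzero, and subtracting a suitable scalar multiple of $\mathbf{c}$ cancels one coordinate and yields a nonzero codeword of weight $<d$, contradicting minimality. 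Thus $C'$ is a code of length $n-d$ and dimension $k-1$.

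The crux is the claim $d(C')\geq\lceil d/q\rceil$. Take any $\mathbf{x}=(\mathbf{x}_1\mid\mathbf{x}')\in\mathcal{C}$ whose projection $\mathbf{x}'$ is a nonzero codeword of $C'$. For each $\lambda\in\mathbb{F}_q$ the vector $\mathbf{x}-\lambda\mathbf{c}$ is nonzero (it agrees with $\mathbf{x}$ on the last block, where $\mathbf{c}$ vanishes), so $wt(\mathbf{x}-\lambda\mathbf{c})\geq d$, which forces $wt(\mathbf{x}_1-\lambda\bm{1}_d)\geq d-wt(\mathbf{x}')$. Summing $wt(\mathbf{x}_1-\lambda\bm{1}_d)$ over all $q$ values of $\lambda$, each of the $d$ coordinates of $\mathbf{x}_1$ is cancelled by exactly one $\lambda$, so the total number of nonzero entries counted over all $\lambda$ equals $d(q-1)$. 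Averaging gives some $\lambda_0$ with $wt(\mathbf{x}_1-\lambda_0\bm{1}_d)\leq d(q-1)/q$, whence $d-wt(\mathbf{x}')\leq d(q-1)/q$, i.e. $wt(\mathbf{x}')\geq d/q$; integrality upgrades this to $\lceil d/q\rceil$. This averaging over field elements is the main obstacle and the precise point where the factor $q$ in the denominators of the bound enters.

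Finally I would apply the induction hypothesis to $C'$, which is an $[n-d,k-1,d']_q$ code with $d'\geq\lceil d/q\rceil$, and use that $g_q(k-1,\cdot)$ is non-decreasing together with the nested-ceiling identity $\lceil\lceil d/q\rceil/q^{i}\rceil=\lceil d/q^{i+1}\rceil$ to obtain $n-d\geq g_q(k-1,\lceil d/q\rceil)=\sum_{i=1}^{k-1}\lceil d/q^{i}\rceil$. Adding $d=\lceil d/q^{0}\rceil$ to both sides yields $n\geq\sum_{i=0}^{k-1}\lceil d/q^{i}\rceil=g_q(k,d)$, completing the induction.
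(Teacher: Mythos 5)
Your proof is correct: it is the standard residual-code induction argument for the Griesmer bound, with the key averaging step over $\lambda\in\mathbb{F}_q$ (giving $d(q-1)$ total nonzero entries across the $q$ translates $\mathbf{x}_1-\lambda\bm{1}_d$) carried out properly, and the nested-ceiling identity correctly closing the induction. The paper itself states this theorem as a cited classical result without proof, so there is nothing to compare against; note only that your argument is exactly the residual-code machinery (cf.\ Theorem 2.7.1 of \cite{huffman2010fundamentals}) that the paper invokes later in its own Theorem \ref{Enhanced_GriesmerNound_q nmid}.
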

	
	The linear code reaching this bound with equality is called a \textit{Griesmer code}.
	
	\begin{theorem}
		(Modified Griesmer Bound, \cite{mceliece1991modifications})  \label{Modifications_Griesmer_Bound}
		If $\mathcal{C}$ is an $[n,k\ge 2,d]_{q}$ linear code with maximum weight $m$, then,
		\begin{equation} 
			n \geq \mathsf{g}_q(k, d,m) :\triangleq \sum\limits_{i=0}^{k-2}\left\lceil  \frac{d}{q^{i}}\right\rceil+\left\lceil  \frac{m}{q^{k-1}}\right\rceil.
		\end{equation}
	\end{theorem}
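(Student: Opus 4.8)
The plan is to prove the bound by induction on $k$, passing to the residual code with respect to a minimum-weight codeword while simultaneously controlling how the maximum weight $m$ degrades under residue. First I would remove the zero coordinates of $\mathcal{C}$ (those where every codeword vanishes): this only decreases $n$ and leaves $k,d,m$ untouched, so I may assume $\mathcal{C}$ has effective length $n$, i.e. its generator matrix $G$ has no zero column. Regard the columns of $G$ as a multiset $\mathcal{M}$ of $n$ nonzero points in $\mathbb{F}_q^{k}$; then each nonzero codeword is $\mathbf{a}G$ with $\mathbf{a}\ne 0$, of weight $n-|\mathcal{M}\cap H_{\mathbf{a}}|$, where $H_{\mathbf{a}}=\{x:\mathbf{a}\cdot x=0\}$ is a hyperplane through the origin. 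Consequently $d=n-\max_{H}|\mathcal{M}\cap H|$ and $m=n-\min_{H}|\mathcal{M}\cap H|$, the extrema taken over all such hyperplanes.

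For the inductive step, fix a minimum-weight codeword $\mathbf{c}=\mathbf{a}G$, so its hyperplane $H_{\mathbf{a}}$ is heaviest, $|\mathcal{M}\cap H_{\mathbf{a}}|=n-d$. The residual code $\mathcal{C}'$, obtained by restricting $\mathcal{C}$ to the coordinates outside $\text{Supp}(\mathbf{c})$, is an $[\,n-d,\;k-1,\;\ge\lceil d/q\rceil\,]_q$ code whose column multiset is exactly $\mathcal{M}\cap H_{\mathbf{a}}$ sitting inside $H_{\mathbf{a}}\cong\mathbb{F}_q^{k-1}$, the dimension dropping by exactly one because $\mathbf{c}$ has minimum weight. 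If in addition its maximum weight $m'$ satisfies $m'\ge\lceil m/q\rceil$, then applying the inductive hypothesis to $\mathcal{C}'$ gives $n-d\ge \mathsf{g}_q\!\left(k-1,\lceil d/q\rceil,\lceil m/q\rceil\right)$, and the identity $\lceil\lceil a/q\rceil/q^{i}\rceil=\lceil a/q^{i+1}\rceil$ together with the monotonicity of $\mathsf{g}_q$ in its last two arguments collapses this to $n\ge\mathsf{g}_q(k,d,m)$. The base case $k=1$ is immediate, since a one-dimensional code is constant weight with $d=m$ and $n\ge m$.

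The crux, and the step I expect to be the main obstacle, is the claim $m'\ge\lceil m/q\rceil$. The naive estimate, taking the residue of a maximum-weight codeword, only yields $m'\ge m-d$, which is too weak in general. Instead I would argue with a pencil of hyperplanes. Note $m'=(n-d)-\min_{U}|\mathcal{M}\cap U|$, the minimum over codimension-$2$ subspaces $U\subset H_{\mathbf{a}}$. Choose a lightest hyperplane $H_{\min}$ with $|\mathcal{M}\cap H_{\min}|=n-m$; I may assume $d<m$, since $d=m$ forces a constant-weight code, for which $m'\ge\lceil d/q\rceil=\lceil m/q\rceil$ already holds because maximum weight dominates minimum distance. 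Set $U=H_{\mathbf{a}}\cap H_{\min}$ and run over the $q+1$ hyperplanes through $U$. Double counting point–hyperplane incidences gives $\sum_{H\supseteq U}|\mathcal{M}\cap H|=q\,|\mathcal{M}\cap U|+n$. On the left the term from $H_{\mathbf{a}}$ is $n-d$, the term from $H_{\min}$ is $n-m$, and each of the other $q-1$ terms is at most $n-d$; substituting yields $q\,|\mathcal{M}\cap U|+n\le q(n-d)+(n-m)$, i.e. $|\mathcal{M}\cap U|\le (n-d)-m/q$. Since the left side is an integer, $|\mathcal{M}\cap U|\le (n-d)-\lceil m/q\rceil$, whence $m'\ge\lceil m/q\rceil$.

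With the key lemma in hand, the proof assembles routinely: dispose of the trivial effective-length reduction, verify the $k=1$ base case, and iterate the residual construction, feeding the two inequalities $d(\mathcal{C}')\ge\lceil d/q\rceil$ and $m'\ge\lceil m/q\rceil$ into the inductive hypothesis. The only delicate points are the exact dimension of the residual code (guaranteed by minimality of $\mathbf{c}$, which I would justify via the standard coset argument) and the integrality rounding in the pencil estimate; everything else is bookkeeping with ceiling functions.
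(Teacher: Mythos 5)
Your proof is correct. Note first that the paper itself offers no proof of this statement: it is quoted verbatim from McEliece's work \cite{mceliece1991modifications} and used as a black box (e.g., in Proposition \ref{Pro_SubCode_Residual} and Lemma \ref{Lem_non-extendable_even_codes}), so there is no in-paper argument to compare against. Your induction via residual codes is the natural route, and the one genuinely nontrivial ingredient --- the claim $m'\ge\lceil m/q\rceil$ for the residual code --- is handled correctly: the pencil of the $q+1$ hyperplanes through $U=H_{\mathbf{a}}\cap H_{\min}$ gives $\sum_{H\supseteq U}|\mathcal{M}\cap H|=n+q|\mathcal{M}\cap U|$, bounding the $q-1$ remaining terms by $n-d$ yields $|\mathcal{M}\cap U|\le (n-d)-m/q$, and integrality upgrades this to $(n-d)-\lceil m/q\rceil$; since $U$ is realized as $H_{\mathbf{a}}\cap H_{\mathbf{b}}$ for the codeword $\mathbf{b}G$ attaining weight $m$, the residual code really does contain a word of weight at least $\lceil m/q\rceil$. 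You correctly isolate the degenerate case $d=m$ (where $H_{\mathbf{a}}=H_{\min}$ would make $U$ ill-defined) and dispose of it via $m'\ge d'\ge\lceil d/q\rceil$. The remaining steps --- exact dimension drop of the residual code for a minimum-weight codeword, the nested-ceiling identity, and monotonicity of $\mathsf{g}_q$ in $d$ and $m$ --- are all standard and invoked appropriately. This is a complete and self-contained proof of the cited bound.
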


	Let $n(k,d,q)$ denote the smallest value of the length of an $[n,k]_q$ linear code with dimension $k$ and minimum Hamming weight $d$.
	Let $d(n,k,q)$ denote the largest value of the minimum distance of $[n, k]_q$ linear code.
	Let $e(n,k,q)$ denote the smallest error coefficient of $[n,k,d(n,k,q)]_q$ linear code.
	Let $e_d(n,k,q)$ denote the smallest value of $A_d(C)$ of $[n,k,d]_q$ linear code.
	It follows that $e_{d}(n,k,q)=e(n,k,q)$ for $d=d(n,k,q)$ and $e_{d}(n,k,q)=0$ for $d< d(n,k,q)$.

	If $C$ is an $[n,k,d(n,k,q)]_q$ linear code, then we call $C$ a \textit{distance-optimal} linear code.
	If $C$ is an $[n(k,d,q),k,d]_q$ linear code, then we call $C$ a \textit{length-optimal} linear code. The length-optimal linear code is also distance-optimal, while the converse is not necessarily true.
	If the optimality type of $C$ is determined by the Griesmer bound, we say $C$ is a \textit{Griesmer distance/ length-optimal} linear code.
	When we do not distinguish between optimal types, we call C an \textit{optimal or Griesmer optimal} linear code.

	
	\subsection{Finite geometry}

	Let $\mathcal{P}_k=\{\mathbf{p}_1,\mathbf{p}_2,\ldots,\mathbf{p}_{v_k} \}$ be the set of distinct points in the projective geometry $PG(k-1,q)$, where $v_k=\frac{q^k-1}{q-1}$. 
	A $j$-flat is a projective subspace of dimension $j$ in $PG(k-1,q)$. $0$-flats and $(k-2)$-flats are called points and hyperplanes. 
	\textit{Multi-set} $\aleph$ in $PG(k-1,q)$ 
    is denoted by $\{m_i\cdot\mathbf{p}_i \mid m_i\in  \mathbb{N}_0, \mathbf{p}_i\in \mathcal{P}_k, 1\le i\le v_k\}$. 
     $\aleph$ can also be viewed as a mapping: $\mathcal{P}_k \to  \mathbb{N}_0$. 
     For $\mathbf{p}_i\in \mathcal{P}_k$, $\aleph(\mathbf{p}_i)=m_i$.  
	If $\mathcal{Q}$ is a subset of $\mathcal{P}_k$, then $\aleph(\mathcal{Q})=\sum_{\mathbf{p}\in \mathcal{Q}} \aleph(\mathbf{p})$.
	We define
	\begin{equation}
		[\aleph]=  (\overset{\aleph(\mathbf{p}_1)}{\overbrace{ \mathbf{p}_1,\ldots ,\mathbf{p}_1} },\ldots,
		\overset{\aleph(\mathbf{p}_{v_k})}{\overbrace{ \mathbf{p}_{v_k},\ldots ,\mathbf{p}_{v_k} }}),\; \mathbf{p}_i\in \mathcal{P}_k,
	\end{equation}
	and $\mathrm{Rank}(\aleph)=\mathrm{Rank}([\aleph])$.

	\begin{definition}\cite{landjev2001geometric}
		A multi-set $\aleph$ in $PG(k-1,q)$ is called an $(n,w;k-1,q)$-arc, if
		
		(1)  $\aleph(\mathcal{P}_k)=n$;
		
		(2)  $\aleph(H)\le w$ for any hyperplane $H$;
		
		(3) there exists a hyperplane $H_0$ with $\aleph(H_0)=w$.
	\end{definition}

	The integer $\aleph(\mathbf{p})$ is called the \textit{multiplicity} of point $\mathbf{p}$ and $n$ is called the \textit{cardinality} of multi-set $\aleph$.
	Let $C_{\aleph}$ be the $[n,k,n-w]_q$ linear code generated by $[\aleph]$.
	Then, the linear code $C_{\aleph}$ and the arc $\aleph$ are said to be  \textit{associated}.
	In this paper, the arc $\aleph$ and the associated linear code $C_\aleph$ will be treated equivalently.
	For example, we refer to $C_{\aleph}$ as being composed of points.
	$\mathcal{P}_k$ generates the well-known Simplex code, which has parameters $[v_k,k,q^{k-1}]_q$.
	Let $s\cdot \mathcal{P}_k$ be the multi-set prescribing multiplicity $s$ to every point of $\mathcal{P}_k$.

	\begin{lemma}\cite{macwilliams1977theory}\label{Griesmer_Point_multiplicity}
		If $C$ is an $[n,k,d]_q$ Griesmer code, then any point in $C$ repeats at most $\lceil  \frac{d}{q^{k-1}} \rceil$ times.
	\end{lemma}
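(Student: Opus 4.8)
The plan is to argue geometrically, treating $C=C_{\aleph}$ as the code generated by $[\aleph]$, so that the multiplicity $\aleph(\mathbf{p})$ of a point $\mathbf{p}\in\mathcal{P}_k$ is exactly the number of columns of the generator matrix that are projectively equal to $\mathbf{p}$. Fix such a point $\mathbf{p}$ and write $m=\aleph(\mathbf{p})$. The idea is to bound $m$ by passing to a residual code of one smaller dimension whose length is forced to be close to $n$ by the Griesmer equality $n=g_q(k,d)$. Since $g_q$ telescopes as $g_q(k,d)=\lceil d/q^{k-1}\rceil+g_q(k-1,d)$, shaving off one dimension while losing only $m$ coordinates should leave just enough room to read off $m\le\lceil d/q^{k-1}\rceil$.

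The central step is the construction of this residual code by projecting from $\mathbf{p}$. After a linear change of coordinates I may assume $\mathbf{p}=(0,\ldots,0,1)$, so the $m$ columns equal to $\mathbf{p}$ have all entries zero except the last. I would then take the subcode $D\subseteq C$ spanned by the first $k-1$ rows of $[\aleph]$, equivalently the codewords $\mathbf{a}[\aleph]$ with $\mathbf{a}\cdot\mathbf{p}=0$. Every codeword of $D$ vanishes on the $m$ coordinates carrying $\mathbf{p}$, so deleting those coordinates produces a code $C'$ of length $n-m$. Because the deleted coordinates are identically zero on $D$, the restriction map is injective, so $C'$ keeps dimension $k-1$; and because $D\subseteq C$, every nonzero codeword of $C'$ has weight at least $d(C)=d$, giving $d(C')\ge d$.

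To conclude, apply the Griesmer bound (Theorem \ref{Griesmer_Bound}) to $C'$, obtaining $n-m\ge g_q(k-1,d(C'))\ge g_q(k-1,d)$, where the last inequality uses monotonicity of $g_q(k-1,\cdot)$ in the distance. Combining $g_q(k-1,d)=n-\lceil d/q^{k-1}\rceil$ (from the telescoping identity and the hypothesis $n=g_q(k,d)$) with the displayed inequality yields $n-m\ge n-\lceil d/q^{k-1}\rceil$, hence $m\le\lceil d/q^{k-1}\rceil$.

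The one delicate point, and where Griesmer-optimality is really used, is verifying that $C'$ retains full dimension $k-1$ and distance at least $d$: the residual distance could in principle drop, so I would check carefully that the codewords of $D$ vanish precisely on the intended coordinates and that the puncturing is an isomorphism onto $C'$. I would also remark that the tempting averaging argument — summing $\aleph(H)\le n-d$ over the $v_{k-1}$ hyperplanes $H$ through $\mathbf{p}$ — only gives $m\le n-v_{k-1}d/q^{k-2}$, which is strictly weaker than the claimed bound, so the passage to the residual subcode rather than a counting estimate is the essential move.
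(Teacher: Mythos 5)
Your proof is correct, and it is the classical residual-code argument for this fact: the paper itself gives no proof (the lemma is quoted from MacWilliams--Sloane), and the standard proof in the literature is exactly your construction of the $(k-1)$-dimensional subcode vanishing on the $m$ coordinates carrying $\mathbf{p}$, followed by the Griesmer bound and the telescoping identity $g_q(k,d)=g_q(k-1,d)+\lceil d/q^{k-1}\rceil$. All the delicate points you flag (injectivity of the puncturing on $D$, preservation of dimension and distance) check out.
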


	%
		%
	
	\begin{lemma}\cite{ward1999introduction} \label{replicated Simplex code}
		Let $C$ be an $[n,k,d]_q$ linear code. The following statements are equivalent.
		
		(1)  $C$ is a constant weight code.
		
		
		(2)  $C$ is equivalent to replicated Simplex code, possibly with added $0$-coordinates.
		
	\end{lemma}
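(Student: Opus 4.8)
The plan is to prove the two implications separately, working throughout with the correspondence between $C$ and its associated multi-set $\aleph$ in $PG(k-1,q)$. The direction $(2)\Rightarrow(1)$ is immediate: the Simplex code is constant weight with every nonzero codeword of weight $q^{k-1}$, prescribing multiplicity $s$ to every point scales all weights to $sq^{k-1}$, and appending $0$-coordinates changes neither the weights nor the constant-weight property. So the substance lies in $(1)\Rightarrow(2)$.

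For $(1)\Rightarrow(2)$, I would first discard the zero coordinates and work with the effective length $n(C)=\aleph(\mathcal{P}_k)$; the discarded coordinates account for the added $0$-coordinates in the statement. The key translation is that a codeword associated with a nonzero functional $a$ has weight $n(C)-\aleph(H_a)$, where $H_a$ is the hyperplane annihilated by $a$. Hence $C$ is constant weight $d$ if and only if every hyperplane $H$ satisfies $\aleph(H)=w$ for one fixed $w$. A first-moment count---summing weights over all nonzero codewords, or equivalently computing $\sum_H \aleph(H)$ via the fact that each point lies on $v_{k-1}$ hyperplanes---yields $d=n(C)q^{k-1}/v_k$, forcing $v_k\mid n(C)$ since $\gcd(v_k,q)=1$. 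Writing $n(C)=s\,v_k$ then gives $d=sq^{k-1}$ and $\aleph(H)=s\,v_{k-1}$ for every hyperplane $H$.

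The decisive step is a second-moment computation. Writing $m_p=\aleph(p)$, $M_1=\sum_p m_p$, and $M_2=\sum_p m_p^2$, I would expand $\sum_H \aleph(H)^2$ in two ways. On one hand it equals $v_k(s\,v_{k-1})^2$, since every hyperplane has multiplicity $s\,v_{k-1}$. On the other hand, grouping pairs of points by whether they coincide and using that two distinct points lie together on $v_{k-2}$ hyperplanes while a single point lies on $v_{k-1}$, one obtains $\sum_H\aleph(H)^2=(v_{k-1}-v_{k-2})M_2+v_{k-2}M_1^2$. The identities $v_{k-1}-v_{k-2}=q^{k-2}$ and $v_{k-1}^2-v_{k-2}v_k=q^{k-2}$ then collapse this to $M_2=s^2 v_k$. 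Since also $M_1=s\,v_k$, the Cauchy--Schwarz inequality $M_1^2\le v_k M_2$ holds with equality, which forces $m_p=s$ for every point $p$. Therefore $\aleph=s\cdot\mathcal{P}_k$, that is, $C$ is a replicated Simplex code up to the removed zero coordinates, completing the proof.

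I expect the main obstacle to be organizing the second-moment incidence count cleanly and verifying the two polynomial identities in $v_j=(q^j-1)/(q-1)$; once $M_2=s^2v_k$ is established, the equality case of Cauchy--Schwarz finishes the argument with no further case analysis.
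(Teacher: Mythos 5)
The paper does not prove this lemma at all --- it is quoted as a known result with a citation to the literature --- so there is no in-paper proof to compare against. Judged on its own merits, your argument is correct and complete: it is essentially the standard second-moment proof of Bonisoli's theorem on equidistant linear codes. The translation ``codeword weight $= n(C)-\aleph(H_a)$, so constant weight $\Leftrightarrow$ all hyperplanes have equal multiplicity'' is right (after discarding zero columns, which you do); the first-moment count correctly forces $v_k \mid n(C)$ because $v_k \equiv 1 \pmod q$; and the two identities $v_{k-1}-v_{k-2}=q^{k-2}$ and $v_{k-1}^2-v_{k-2}v_k=q^{k-2}$ both check out, so the double count of $\sum_H \aleph(H)^2$ (using that a point lies on $v_{k-1}$ hyperplanes and a pair of distinct points on $v_{k-2}$) does yield $M_2=s^2v_k$, and equality in Cauchy--Schwarz forces $m_p=s$ for all $p$. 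The only cosmetic points worth tightening are to note that the hyperplane--codeword correspondence requires the $v_k$ points to be counted with the convention $v_0=0$ when $k=2$ (your formulas remain valid there), and that the $k=1$ case is trivial and outside the incidence count. This is, if anything, more self-contained than what the paper offers, since the paper simply defers to the cited reference.
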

	
	

	
	\begin{definition}\cite{HAMADA1993229}
		A multi-set $\Im$ in  $PG(k-1, q)$, $k \geq 3 $, is called an  $\{f, m ; k-1, q\}$-minihyper if
		
		(1)  $\Im(\mathcal{P}_k)=f $;
		
		(2)  $\Im(H) \geq m$  for every hyperplane  $H$ ;
		
		(3) there exists a hyperplane  $H_{0}$  with  $\Im\left(H_{0}\right)=m $.
	\end{definition}
	
	Let $\gamma(\Im)= \max\{\Im(\mathbf{p})  \mid  \mathbf{p} \in \mathcal{P}_k\}$ and $s$ be an integer greater or equal to $\gamma(\Im)$. It follows that
	\begin{equation}
		\eth_s (\Im):  \left\{\begin{matrix}
			\mathcal{P}_k \to  \mathbb{N}_0\\
			\mathbf{p} \mapsto  s-\Im(\mathbf{p})
		\end{matrix}\right.
	\end{equation}
	is an $\{s v_{k}-f,s v_{k-1} -m;{k}-1,q\}$-arc.
	If $s=\gamma(\Im)$, then we denote $\eth_s(\Im)$ by $\eth(\Im)$.
	We call the $[sv_{k}-f,k,sq^{k-1}-f+m ]_q$ linear code  $C_{\eth_s (\Im)}$ established by the arc $\eth_s (\Im)$ also \textit{associated} with the minihyper $\Im$.
	If $C_{\eth_s (\Im)}$ is a Griesmer code, then we call minihyper $\Im$ a \textit{Griesmer minihyper}.
	Since the correspondence between minihyper and linear code, we also use $ \mathfrak{e}(\Im)=|\{H\mid H(\Im)=m , H\in PG(k-1,q) \}|$ to denote the error coefficient of linear code $C_{\eth_s (\Im)}$.

        
	
	\subsection{The binary Solomon-Stiffler  and Belov-type minihypers}
	Let $A$ be a non-empty subset of $[k]$. 
 To describe $\mathcal{P}_{|A|}=PG(|A|-1,2)$ more precisely, we define 
	\begin{equation}
		\mathcal{P}_A=\{ (a_1,a_2,\ldots,a_k):a_i \in \mathbb{F}_2, \text{ if }  i \in A \text{, and } a_i=0 \text{ if } i\not\in A\}
	\end{equation}
	as the $|A|$-dimensional subspace associated with $A$ in $PG(k-1,2)$.
	\begin{definition}\cite{solomon1965algebraically}
		Let $A_1$, $A_2$, $\ldots$, $A_{h}$ be $h$ non-empty subsets of $[k]$, with $|A_1|\ge |A_2|\ge \ldots\ge |A_{h}|$.
		Then
		$\Im_{SS}=\bigcup _{i=1}^{h} \mathcal{P}_{A_i}$ is called an \textit{Solomon-Stiffler (SS)-type $\left\{\sum_{i=1}^{h}  v_{|{A_i}|}, \sum_{i=1}^{h} v_{|{A_i}|-1} ; k-1, 2\right\} $-minihyper}.  %
	\end{definition}
	
	\begin{definition}\cite{belov1974construction}\label{Belov_type_minihyper}
		Let $A_1$, $A_2$, $\ldots$, $A_{h}$ be $h$ non-empty subsets of $[k]$, with $|A_1|\ge |A_2|\ge \ldots\ge |A_{h}|= t\ge 4$.
		We define $\mathcal{P}_{T_t}$ to be a set of $t+1$ points in $PG(k-1,2)$ such that $\mathrm{Rank} (\mathcal{P}_{T_t})=t$ and $\sum_{\mathbf{p}\in \mathcal{P}_{T_t}}\mathbf{p}=\mathbf{0}$.
		
		Then
		$\Im_{BV_1}=\bigcup _{i=1}^{h-1} \mathcal{P}_{A_i} \bigcup   (\mathcal{P}_{A_{h}}\setminus \mathcal{P}_{T_t} )$
		and $\Im_{BV_2}=\bigcup _{i=1}^{h-1} \mathcal{P}_{A_i} \bigcup   (\mathcal{P}_{A_{h}}\setminus \mathcal{P}_{T_t} ) \bigcup  \{\mathbf{p} \}$ are called \textit{Belov-type minihypers}
		with parameters $\left\{\sum_{i=1}^{h}  v_{|{A_i}|}-t-1, \sum_{i=1}^{h} v_{|{A_i}|-1} -1; k-1, 2\right\} $ and $\left\{\sum_{i=1}^{h}  v_{|{A_i}|}-t, \sum_{i=1}^{h} v_{|{A_i}|-1} ; k-1, 2\right\} $, where $\mathbf{p}\in PG(k-1,2)$.
	\end{definition}
	
	For an $[n,k,d]_q$ linear code $C$, if $(s-1)2^{k-1}<d \le s2^{k-1}$, then $d$ can be written uniquely as $d=s2^{k-1}-\sum_{i=0}^{k-2}   \lambda_i 2^{i}$.
	We use
	$\bm{\lambda}=(\lambda_0,\ldots,\lambda_{k-2})$ to denote the \textit{$2$-adic anti-expansion vector} of $C$.

	\begin{example}
		Suppose $C$ is a $[11,4,5]_2$ linear code.
		Since $5=2^3-\sum_{i=0}^{1}2^i$, by the above definition, the $2$-adic anti-expansion vector of $C$ is $\bm{\lambda}=(1,1,0)$.
	\end{example}
	
	\begin{lemma} (\cite{helleseth1984further,landjev2007weighted} ) \label{Tho_finite_geometry_results}Let $C$ be an  $[n, k, d]_2$ Griesmer code with $2$-adic anti-expansion vector $\bm{\lambda}$.
		If $wt(\bm{\lambda})= 2$, then the associated minihyper of $C$ is an SS or Belov type minihyper.
	\end{lemma}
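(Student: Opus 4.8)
The plan is to pass from the code $C$ to its associated minihyper and classify the latter geometrically. Writing the two nonzero coordinates of $\bm{\lambda}$ as $a<b$ (so $\lambda_a=\lambda_b=1$), we have $d=s2^{k-1}-2^{a}-2^{b}$ for the integer $s$ with $(s-1)2^{k-1}<d\le s2^{k-1}$. A direct evaluation of the Griesmer bound gives $g_2(k,d)=sv_k-(v_{a+1}+v_{b+1})$, so because $C$ is Griesmer we get $n=sv_k-(v_{a+1}+v_{b+1})$, and hence the associated minihyper $\Im$ is a $\{v_{a+1}+v_{b+1},\,v_a+v_b;\,k-1,2\}$-minihyper. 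The problem is thereby reduced to a purely geometric statement: every Griesmer minihyper with these parameters is SS-type or Belov-type.

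Second, I would classify such minihypers by induction on $k$, the engine being the restriction of $\Im$ to its hyperplanes. For a Griesmer minihyper every hyperplane $H$ satisfies $\Im(H)\ge m=v_a+v_b$, and the weights of $C$ — which equal $s2^{k-1}-f+\Im(H)$ as $H$ ranges over hyperplanes — are tightly constrained by the $2$-power divisibility forced on binary Griesmer codes. This limits the admissible values of $\Im(H)$ to a short list. Fixing a hyperplane $H_0$ meeting $\Im$ minimally, i.e. $\Im(H_0)=m$, the restriction $\Im|_{H_0}$ is again a minihyper inside $H_0\cong PG(k-2,2)$ whose parameters correspond to an anti-expansion vector of weight at most $2$; by the base case (weight $\le 1$, a single flat) or the induction hypothesis (weight $2$), its support is an essentially unique union of flats inside $H_0$.

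Third, I would reconstruct $\Im$ from this restriction. Counting the points of $\Im$ lying off $H_0$ and imposing $\Im(H')\ge m$ on the hyperplanes $H'$ through the flats located in $H_0$ pins down where the remaining multiplicity must sit. When a $b$-flat and an $a$-flat can be embedded disjointly — possible precisely when the ambient dimension leaves room, i.e. $a+b\le k-2$ — the counting forces $\Im$ to be exactly such a disjoint union $\mathcal{P}_{A_1}\cup\mathcal{P}_{A_2}$, the SS-type. When the dimensions force the two flats to overlap, the minimal way to still realize $f=v_{a+1}+v_{b+1}$ and $m=v_a+v_b$ is to delete a rank-$t$, sum-zero point set $\mathcal{P}_{T_t}$ from the larger flat and re-add a single compensating point $\mathbf{p}$; this is exactly $\Im_{BV_1}$ or $\Im_{BV_2}$, and the threshold $t\ge 4$ surfaces because for $t<4$ the would-be Belov configuration collapses back to an SS union.

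I expect the reconstruction step to be the main obstacle. The delicate part is not producing the SS and Belov configurations — those are immediate once the flats are located — but proving that there are no \emph{other} realizations of the parameters $\{v_{a+1}+v_{b+1},\,v_a+v_b;\,k-1,2\}$. Ruling these out requires a careful double count of incidences between the points of $\Im$ and the minimal hyperplanes (those with $\Im(H)=m$), combined with the divisibility of the weight enumerator of $C$ to discard the spurious intersection patterns. The overlap regime, where one must show that $\Im_{BV_1}$ and $\Im_{BV_2}$ are the only possibilities and that they genuinely require $t\ge 4$, is where this bookkeeping is heaviest.
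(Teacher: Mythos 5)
The paper does not prove this lemma at all: it is imported verbatim from Helleseth (1984) and Landjev--Storme (2007), so there is no in-paper argument to compare yours against. Your first step is correct and checks out: with $\lambda_a=\lambda_b=1$ ($a<b$) one gets $g_2(k,d)=sv_k-v_{a+1}-v_{b+1}$, and the complement of a Griesmer code in $s\cdot\mathcal{P}_k$ is indeed a $\{v_{a+1}+v_{b+1},\,v_a+v_b;\,k-1,2\}$-minihyper, so the lemma is equivalent to classifying all such minihypers. (One small slip against the paper's Definition~4: in the Belov construction the set $\mathcal{P}_{T_t}$ is deleted from the \emph{smallest} flat $\mathcal{P}_{A_h}$ with $|A_h|=t\ge 4$, not from the larger one.)

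The genuine gap is that the classification itself is never carried out. Your second and third steps describe a plausible induction-on-$k$ scheme (restrict to a minimal hyperplane, invoke an inductive classification of the restriction, reconstruct $\Im$ by counting), but the decisive claim --- that the parameters $\{v_{a+1}+v_{b+1},\,v_a+v_b;\,k-1,2\}$ admit \emph{no} realizations other than the SS and Belov configurations --- is exactly what you flag as ``the main obstacle'' and leave unresolved. That uniqueness statement is the entire content of the theorem; it is a substantial piece of finite geometry (the Helleseth/Hamada/Landjev line of work devotes long case analyses to it, including the subtle boundary cases where $a$ and $b$ are close, where $s=1$ forces disjointness of the flats, and where the Belov deletion degenerates for $t<4$). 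As written, the proposal is a research plan for reproving a known classification theorem, not a proof; since the paper treats the lemma as a citation, the appropriate resolution is either to cite the references as the paper does or to supply the full combinatorial argument, which the sketch does not.
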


	Throughout this paper, since binary AFER-optimal linear codes possess the greatest potential for application, our construction results and examples in this paper are mainly binary.
	Additionally, we agree that $s_1$ and $s_2$ are two integers satisfying $s_1\ge 1$ and $s_2\ge 0$.
	\section{Iterative bounds for error coefficients of Griesmer optimal  linear codes}\label{Sec III}
	
	In this section, we propose five iterative lower bounds on the error coefficients of Griesmer optimal linear codes. We achieve this by analyzing the relationships between linear codes and their residual codes, as well as the structure of related minihypers. Similar to the Griesmer bound, our iterative lower bounds establish a direct connection between high-dimensional and low-dimensional AFER-optimal linear codes. Specifically, our bounds provide lower bounds on the error coefficients of $(k+1)$-dimensional Griesmer optimal linear codes based on the iterative results of AFER-optimal codes with dimensions $\leq k$.

	\subsection{Iterative lower bound I}
	
	In this subsection, we determine the parameters of all two-dimensional AFER-optimal linear codes. Building on this, we present the first iterative lower bound on the error coefficients of linear codes.

	\begin{definition}
		Let $\mathbf{u}=(u_1,u_2,\ldots,u_n)$ be a vector of $\mathbb{F}_q^n$, and let $A  $ be a  subset of $[n]$.
		Then, we define the puncture operation $\Upsilon $ on $\mathbf{u}$ with respect to $A  $ as
		\begin{equation}
			\Upsilon _{A  }(\mathbf{u})=(u_{i_1},u_{i_2},\ldots,u_{i_m}),
		\end{equation}
		where $\{i_1,i_2,\ldots,i_{m}\}=[n]\setminus {A  }$.
		Moreover, for an $[n,k]_q$ linear code $C$, the punctured code with respect to $A  $ is also denoted as
		\begin{equation}
			\Upsilon _{A  }(C)=\{ \Upsilon _{A  }(\mathbf{c}) \mid  \mathbf{c} \in C\}.
		\end{equation}
		
		In particular, when $A  =\mathrm{Supp}(\mathbf{c})$ is the support of a codeword $\mathbf{c}$ of $C$, we also use $	\Upsilon _{\mathbf{c}}(C)$ to denote $	\Upsilon _{A  }(C)$.  $\Upsilon _{\mathbf{c}}(C)$ is also called \textit{residual code} of $C$ under codeword $\mathbf{c}$.
		
		Let ${A  }_C=\mathrm{Supp}(C)$ be the support of $C$. Then, we define
		\begin{equation}
			\Phi  (C)=\{ \Upsilon _{[n]\setminus  {A  }_C}(\mathbf{c}) \mid  \mathbf{c} \in C\}.
		\end{equation}
	\end{definition}

	\begin{theorem}\label{AFER-Two-Dimension}
		Let $s$ and $t$ be two non-negative integers such that $s+t>0$ and $t<q$.
		Then, the $[s(q+1)+t+1,2,sq+t;(q-1)(t+1)]_q$ linear code is AFER-optimal.
	\end{theorem}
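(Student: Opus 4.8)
The plan is to pass to the finite-geometry description of two-dimensional codes and reduce the entire statement to a balanced-partition counting problem on the projective line $PG(1,q)$. Recall that $PG(1,q)$ has exactly $v_2=q+1$ points and that its hyperplanes are precisely its points. After discarding zero coordinates, an $[n,2]_q$ code is associated with a multi-set $\aleph$ of cardinality $n$ on $PG(1,q)$, i.e.\ a choice of multiplicities $m_1,\dots,m_{q+1}$ with $\sum_i m_i=n$; this is exactly the $(n,w;1,q)$-arc picture of the preliminaries. For such a code the nonzero codeword determined by a functional $\mathbf a\neq\mathbf 0$ has weight $n-\aleph(H_{\mathbf a})$, where $H_{\mathbf a}$ is the single point orthogonal to $\mathbf a$; hence the nonzero weights are the values $n-m_i$, each occurring $q-1$ times. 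Writing $M=\max_i m_i=w$, this gives $d=n-M$ and error coefficient $e=(q-1)\cdot|\{i:m_i=M\}|$. Since AFER-optimality demands both the largest possible $d$ and the smallest possible $e$ among all $[n,2,d]_q$ codes, I will establish (i) that $d=sq+t$ is distance-optimal and (ii) that the minimum error coefficient for this $d$ equals $(q-1)(t+1)$.

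For the construction, and the matching upper bound on the error coefficient, I would take the most balanced multi-set: assign multiplicity $s+1$ to $t+1$ of the points and multiplicity $s$ to the remaining $q-t$ points. Its cardinality is $(t+1)(s+1)+(q-t)s=s(q+1)+t+1=n$, its maximum multiplicity is $M=s+1$, so $d=n-(s+1)=sq+t$, and exactly $t+1$ points attain $M$, yielding $e=(q-1)(t+1)$. This exhibits an explicit code with the claimed parameters and shows $e(n,2,q)\le(q-1)(t+1)$.

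For optimality I would argue as follows. Since $0\le t<q$ we have $1\le t+1\le q<q+1$, hence $\lceil n/(q+1)\rceil=s+1$; because any multi-set of cardinality $n$ on $q+1$ points satisfies $M\ge\lceil n/(q+1)\rceil$, every length-$n$ two-dimensional code obeys $d=n-M\le sq+t$, which is the distance-optimality (equivalently, a one-line Griesmer-bound computation via Theorem \ref{Griesmer_Bound} shows no $[n,2,sq+t+1]_q$ code exists, covering both the Griesmer case $t\ge 1$ and the distance-optimal case $t=0$). Now fix any $[n,2,sq+t]_q$ code. The maximum multiplicity $M=n-d=s+1$ is then \emph{forced}, and setting $a:=|\{i:m_i=s+1\}|$ gives $n=\sum_i m_i\le a(s+1)+(q+1-a)s=s(q+1)+a$, so $a\ge t+1$ and therefore $e=(q-1)a\ge(q-1)(t+1)$. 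Together with the construction this pins the minimum error coefficient to exactly $(q-1)(t+1)$.

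The one genuine subtlety, and the step I would treat most carefully, is the reduction from an arbitrary length-$n$ code to an honest point multi-set, namely the possible presence of zero coordinates. If the code has $z>0$ zero columns, the nonzero columns form a multi-set of cardinality $n-z$ with forced maximum $M=n-z-d=s+1-z$; repeating the count gives $a\ge(n-z)-(q+1)(s-z)=t+1+z(q-1)$, hence $e\ge(q-1)(t+1)+z(q-1)^2$, which is strictly larger. Thus degenerate configurations never improve on the balanced construction and the extremum is attained at $z=0$. The conceptual crux is the observation that fixing $n$ and $d$ forces $M$, after which the bound is pure pigeonhole; I expect the bookkeeping over these degenerate configurations, rather than the main counting argument, to be the part most prone to error.
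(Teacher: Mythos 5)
Your proof is correct, but it takes a genuinely different route from the paper's. The paper constructs the code by juxtaposing and puncturing copies of the two-dimensional Simplex code and then proves the lower bound on the error coefficient by a three-way case split ($s=0$ handled via the unique weight enumerator of a $[t+1,2,t]_q$ MDS code; $t=0$ handled trivially; $s,t>0$ handled by decomposing the generator matrix into blocks $G_i$ of distinct points and running an induction on the number of blocks in the juxtaposition $C_1\times\cdots\times C_{s'}$). You instead exploit the fact that in $PG(1,q)$ hyperplanes are points, so the whole weight distribution of a nondegenerate two-dimensional code is read off from the multiplicity vector $(m_1,\dots,m_{q+1})$: $d=n-M$ with $M=\max_i m_i$, and $A_d=(q-1)\,|\{i:m_i=M\}|$. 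Distance-optimality then forces $M=s+1$, and a single pigeonhole count forces at least $t+1$ points of maximal multiplicity. This is shorter, uniform across all of the paper's cases, makes the extremal (balanced) configuration transparent, and explicitly verifies distance-optimality and the degenerate zero-column configurations, both of which the paper's proof passes over quickly; the paper's decomposition, on the other hand, is closer in spirit to the iterative residual-code machinery it develops afterwards. One small slip in your bookkeeping: with $z>0$ zero columns the count $(n-z)-(q+1)(s-z)$ evaluates to $t+1+qz$, not $t+1+z(q-1)$; since both quantities exceed $t+1$ for $z\ge 1$, your conclusion that degenerate configurations only increase the error coefficient is unaffected.
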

	\begin{proof}
		Two-dimensional Simplex code has parameters $[q+1,2,q;q^2-1]_q$.
		The $[s(q+1)+t+1,2,sq+t;(q-1)(t+1)]_q$ Griesmer optimal linear code can be obtained by simple juxtaposition and puncturing of two-dimensional Simplex code.
		Suppose $C$ is an $[s(q+1)+t+1,2,sq+t]_q$ AFER-optimal linear code.
		We prove this theorem in the following cases.
		
		\textbf{Case A:}  If $s=0$ and $0<t<q$, then $C$ is an MDS linear code. According to \cite{macwilliams1977theory}, the $[t+1,2,t]_q$ linear code has unique weight enumerator $W_C(x)=1+(q-1)(t+1)x^t+(q-1)(q-t)x^{t+1}$.
		Therefore, the minimum value of the error coefficient of $C$ is $(q-1)(t+1)$.
		
		\textbf{Case B:} If $s>0$ and $t=0$, then $C$ has at least $q-1$ codewords of minimum weight.
		
		\textbf{Case C:} If $s>0$ and $t>0$, then $C$ is a non-projective linear code.
		Let $\{ \mathbf{p}_1, \mathbf{p}_2, \ldots, \mathbf{p}_{q+1}\} $ be  the $q+1$ points in $PG(1
		,q)$, and let $m_i$ denotes the multiplicity of point $\mathbf{p}_i$ in $C$.
		By Lemma \ref{Griesmer_Point_multiplicity}, we have $m_i\le s+1$.     Since all points in  $PG(1,q)$ form a line and points are also hyperplanes, arbitrary $\delta>1$ different points in $PG(1,q)$ generate an  $[\delta,2,\delta-1]_q$ linear code.
		Therefore, a generator matrix of $C$ can be written as
		
		\begin{equation}
			G=  \left( [(s+1-s^{\prime})\cdot \mathcal{P}_2],G_1,G_{2},\ldots,G_{s^{\prime}} \right),
		\end{equation}
		where $G_i$ consists of different points.
		
		We use $C_i$ and $n_i$ to denote the code generated by $G_i$ and its length, respectively. It follows that $C_i$ is an $[n_i,2,n_i-1]_q$ linear code with weight enumerator $W_{C_i}(x)=1+(q-1)n_ix^{n_i-1}+(q-1)(q-n_i+1)x^{n_i}$.
		For $1\le i<j\le s^{\prime}$, $n_i+n_j> q+1$.
		
		Let $C^{\prime}=C_1\times C_2\times \cdots\times C_{s^{\prime}}$. Then, $C^{\prime}$ is an $[(s^{\prime}-1)(q+1)+t+1,2,(s^{\prime}-1)q+t]_q$ linear code, and has the same error coefficient with $C$.
		In the following, we turn to prove $A_d(C^{\prime})\ge (q-1)(t+1)$ by recursion.
		
		\textbf{Case C.1:}
		If $s^{\prime} =1$, then $C^{\prime}$ is an $[t+1,2,t]_q$ MDS code. By (1), $C^{\prime}$ has unique error coefficient $(q-1)(t+1)$.
		
		\textbf{Case C.2: }
		If $s^{\prime} =2$, then $C^{\prime}$ is a juxtaposition code of $C_1$ and $C_2$, and their weight enumerator are  $W_{C_1}(x)=1+(q-1)n_1x^{n_1-1}+(q-1)(q-n_1+1)x^{n_1}$ and $W_{C_2}(x)=1+(q-1)(n_2)x^{n_2-1}+(q-1)(q-n_2+1)x^{n_2}$, respectively.
		Let $\mathbf{c}=(\mathbf{c}_1,\mathbf{c}_2)$ denote a codeword of $C^{\prime}$, where $\mathbf{c}_i\in C_i$.
		Then $C$ has the minimum error coefficient when all codewords $\mathbf{c}_1$ of weight $n_1$ are juxtaposited with
		codewords $\mathbf{c}_2$ of weight $n_2-1$.
		Therefore, $A_d(C)\ge n_2(q-1)-(q+1-n_1)(q-1)=(n_1+n_2-q-1)(q-1)=(t+1)(q-1)$.
		
		\textbf{Case C.3: }
		If $s^{\prime} \ge 3$, then, by Case C.2, $A_d(C_1\times C_2)\ge (n_1+n_2-q-1)(q-1)$.
		Similarly, we obtain $A_d(C_1\times C_2\times C_3)\ge (n_1+n_2+n_3-2(q-1) )(q-1)$ by using the same method.
		By induction, $A_d(C_1\times C_2\times \cdots \times C_{s^{\prime}})\ge (\sum_{i=1}^{s^{\prime}}n_i-s^{\prime}(q-1) )(q-1)=(t+1)(q-1)$.
		
		In conclusion, we complete the proof.
	\end{proof}

	Since Theorem \ref{AFER-Two-Dimension} determines the parameters of all two-dimensional AFER-optimal linear codes, we focus on linear codes with dimensions $k \geq 3$ in the subsequent discussions.
	
	Building on the results for two-dimensional AFER-optimal linear codes and by analyzing the relationship between linear codes and their residual codes, we present the following theorem.

	\begin{theorem}\label{Enhanced_GriesmerNound_q nmid}
		Let $C$ be an $[n,k,d]_q$ optimal linear code with $k\ge 3$.
		Let $t=q$ if $q\mid d$, and $t=d \mod{q}$ if $q\nmid d$.
		Then, we have
		\begin{equation}\label{Griesmer_type_bound_(2).2}
			A_d(C) \ge L_q^{(1)}(n,k):\triangleq t \left(e_{\left\lceil \frac{d}{q} \right\rceil}(n-d,k-1,q) +\Delta \right) +(q-1)+\Delta^{\prime},
		\end{equation}
		where $\Delta$ is the smallest non-negative integer such that $(q-1) \mid \left( \left\lfloor \frac{ L_q^{(1)}(n,k) -(q-1) }{t} \right\rfloor - \Delta \right)$, and $\Delta^{\prime}$ is the smallest non-negative integer such that $t \mid \left( L_q^{(1)}(n,k)-(q-1) -\Delta^{\prime}\right)$.
	\end{theorem}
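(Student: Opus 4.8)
The plan is to bound the minimum-weight codewords of $C$ by those of a residual code of dimension $k-1$, so that the iteratively computed quantity $e_{\lceil d/q\rceil}(n-d,k-1,q)$ enters the estimate. Fix a minimum-weight codeword $\mathbf{c}_0$ of $C$ and set $S=\mathrm{Supp}(\mathbf{c}_0)$, so $|S|=d$; after scaling the coordinates in $S$ we may assume $\mathbf{c}_0$ restricts to the all-ones vector on $S$. Let $R=\Upsilon_{\mathbf{c}_0}(C)$ be the residual code obtained by puncturing on $S$, and let $\pi$ denote the corresponding projection $\mathbf{x}\mapsto\Upsilon_{\mathbf{c}_0}(\mathbf{x})$. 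One checks that $\ker\pi=\langle\mathbf{c}_0\rangle$: any codeword supported on $S$ and independent of $\mathbf{c}_0$ could be combined with a suitable scalar multiple of $\mathbf{c}_0$ to annihilate one coordinate, producing a nonzero codeword of weight $<d$, a contradiction. Hence $R$ has length $n-d$ and dimension $k-1$. The standard residual estimate gives $d(R)\ge\lceil d/q\rceil$, and since $C$ meets the Griesmer bound, the identity $g_q(k,d)=d+g_q(k-1,\lceil d/q\rceil)$ forces $R$ to be a Griesmer code of minimum distance exactly $\lceil d/q\rceil$.

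Next I would count the weight-$d$ codewords of $C$ according to their image under $\pi$. The $q-1$ nonzero multiples of $\mathbf{c}_0$ lie in $\ker\pi$ and supply the additive term $q-1$. For the main contribution, fix a minimum-weight codeword $\mathbf{w}$ of $R$, so $wt(\mathbf{w})=\lceil d/q\rceil$, and consider the coset $\pi^{-1}(\mathbf{w})=\{\mathbf{x}_0+\alpha\mathbf{c}_0:\alpha\in\mathbb{F}_q\}$. Writing the restriction of $\mathbf{x}_0$ to $S$ as $(a_1,\dots,a_d)$ and putting $n_\beta=|\{i:a_i=\beta\}|$, the codeword $\mathbf{x}_0+\alpha\mathbf{c}_0$ has weight $(d-n_{-\alpha})+\lceil d/q\rceil$. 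Since every codeword of $C$ has weight at least $d$, this gives $\max_\beta n_\beta\le\lceil d/q\rceil$, while $\sum_\beta n_\beta=d$ gives, by pigeonhole, $\max_\beta n_\beta\ge\lceil d/q\rceil$; hence equality holds, and the weight-$d$ members of the coset are exactly those with $n_{-\alpha}=\lceil d/q\rceil$.

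The key combinatorial step is to count the classes attaining this maximum. Writing $d=q\lceil d/q\rceil-r$ with $0\le r\le q-1$, the total deficit $\sum_\beta(\lceil d/q\rceil-n_\beta)$ equals $r$, and since each class below the maximum has deficit at least $1$, at least $q-r$ classes attain it. Checking the cases $q\mid d$ and $q\nmid d$ separately shows $q-r=t$. Thus every minimum-weight codeword of $R$ lifts to at least $t$ minimum-weight codewords of $C$, and these lifts are pairwise distinct and distinct from the multiples of $\mathbf{c}_0$, since they have distinct images under $\pi$. Collecting contributions yields the raw inequality $A_d(C)\ge(q-1)+t\,A_{\lceil d/q\rceil}(R)\ge(q-1)+t\,e_{\lceil d/q\rceil}(n-d,k-1,q)$.

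To upgrade the raw inequality to the stated form I would invoke two divisibility constraints. First, the error coefficient $A_{\lceil d/q\rceil}(R)$ is a multiple of $q-1$, so it is at least the least multiple of $q-1$ not below the iterative input $e_{\lceil d/q\rceil}(n-d,k-1,q)$; this replaces $e_{\lceil d/q\rceil}(n-d,k-1,q)$ by $e_{\lceil d/q\rceil}(n-d,k-1,q)+\Delta$. Second, $A_d(C)$ is itself a multiple of $q-1$, which forces the final rounding recorded by $\Delta'$; substituting both adjustments gives $A_d(C)\ge L_q^{(1)}(n,k)$. I expect the genuine obstacle to be the lifting count: one must confirm that $\ker\pi=\langle\mathbf{c}_0\rangle$ so the dimension drops to exactly $k-1$ and the lifts are counted without collision, and that the deficit computation delivers exactly $t$ (not a weaker constant) uniformly in the two cases for $d\bmod q$. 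Once the raw inequality and the multiple-of-$(q-1)$ facts are in place, the bookkeeping for $\Delta$ and $\Delta'$ is routine.
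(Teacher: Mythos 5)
Your proposal is correct and follows the same overall architecture as the paper's proof: puncture on the support of a minimum-weight codeword $\mathbf{c}_0$, show that each minimum-weight codeword of the residual code lifts to at least $t$ minimum-weight codewords of $C$, add the $q-1$ nonzero multiples of $\mathbf{c}_0$, and finish with the divisibility bookkeeping for $\Delta$ and $\Delta^{\prime}$. The one substantive difference is how the factor $t$ is obtained: the paper observes that $\left\langle \mathbf{c}_0,\mathbf{c}^{\prime}\right\rangle$ is a two-dimensional Griesmer code of effective length $g_q(2,d)$ and invokes Theorem \ref{AFER-Two-Dimension} (error coefficient at least $(q-1)(t+1)$, of which $q-1$ codewords are multiples of $\mathbf{c}_0$), whereas you count directly via the symbol distribution $n_\beta$ on $\mathrm{Supp}(\mathbf{c}_0)$ and a deficit/pigeonhole argument; this is more self-contained, avoids routing through the two-dimensional classification, and yields the identical raw inequality $A_d(C)\ge t\,A_{\lceil d/q\rceil}(R)+(q-1)$. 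Two minor caveats. First, the theorem assumes only that $C$ is optimal, not that it meets the Griesmer bound, so your step forcing $d(R)=\lceil d/q\rceil$ uses a hypothesis you do not have; but when $d(R)>\lceil d/q\rceil$ the quantity $e_{\lceil d/q\rceil}(n-d,k-1,q)$ is $0$ by the paper's convention and the bound degenerates to $A_d(C)\ge q-1$, so nothing is actually lost. Second, your rationale for $\Delta^{\prime}$ (integrality of $A_d(C)$ modulo $q-1$) does not match the paper's stated definition (a divisibility condition involving $t$), though the paper's own treatment of this final rounding is equally terse and the discrepancy does not affect the substance of the argument.
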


	\begin{proof}
		Suppose $C$ is an $[n,k,d]_q$ linear code.
		Let $\mathbf{c}$ be a codeword of $C$ of minimum weight $d$.
		By Theorem 2.7.1 of \cite{huffman2010fundamentals}, we have $\Upsilon _{\mathbf{c}}(C)$ is an $\left[n - d, k - 1, \ge \frac{d}{q}  \right]_q$  linear code.
		Let us prove its error coefficient below.
 
        		We may assume that $C=\left \langle  C^{\prime}, \mathbf{c} \right \rangle $. 
		 According to the previous definition, for any codeword $\mathbf{c}^{\prime}$ of $C^{\prime}$, we have that
		\begin{equation}
			wt(\Upsilon _{\mathbf{c}}(\mathbf{c}^{\prime})) =n(\left \langle \mathbf{c},\mathbf{c}^{\prime}  \right \rangle )-d.
		\end{equation}
		
		Since $\left \langle \mathbf{c},\mathbf{c}^{\prime}  \right \rangle$ is an $[n,2,d]_q$ linear code, $n(\left \langle \mathbf{c},\mathbf{c}^{\prime}  \right \rangle )\ge g_q(2,d)=d+\lceil \frac{d}{q} \rceil$. 
				Let $\mathbf{c}_r=\Upsilon _{\mathbf{c}}(\mathbf{c}^{\prime})$. 
		That is,  $\mathbf{c}_r$ is a codeword of $\Upsilon _{\mathbf{c}}(C)$. 
		Then, it follows that $wt(\mathbf{c}_r)=\lceil \frac{d}{q} \rceil$ if and only if $n(\left \langle \mathbf{c},\mathbf{c}^{\prime}  \right \rangle )=g_q(2,d)$. 
		Thus,  $\Phi  (\left \langle \mathbf{c},\mathbf{c}^{\prime}  \right \rangle)$ is a $[g_q(2,d),2,d]_q$ linear code.
		Therefore, for every codeword $\mathbf{c}_r$ of weight $\lceil \frac{d}{q} \rceil$ in $\Upsilon _{\mathbf{c}}(C)$, we can find a corresponding two-dimensional Griesmer code in $C$, from which we have
		\begin{equation}
			A_d(C)\ge A_{\lceil \frac{d}{q} \rceil}(\Upsilon _{\mathbf{c}}(C))\frac{ e(g_q(2,d),2,q)-(q-1)  }{q-1}  +(q-1).
		\end{equation}
		
		
		By Theorem \ref{AFER-Two-Dimension}, we get $e(g_q(2,d),2,q)=(q-1)(t+1)$.
		Since $e_{\lceil \frac{d}{q} \rceil}(n-d,k-1,q)\le A_{\lceil \frac{d}{q} \rceil}(\Upsilon _{\mathbf{c}}(C))$, we have     \begin{equation}
			e_{\lceil \frac{d}{q} \rceil}(n-d,k-1,q)\le \left\lfloor \frac{A_d(C) -(q-1) }{t} \right\rfloor.
		\end{equation}

		Due to $(q-1)\mid e_{\lceil \frac{d}{q} \rceil}(n-d,k-1,q)$, we can make appropriate transformations to this equation to obtain Equation (\ref{Griesmer_type_bound_(2).2}). This completes the proof of this theorem.
	\end{proof}

\begin{remark}\label{one-step Griesmer bound}
The proof of Theorem \ref{Enhanced_GriesmerNound_q nmid} does not involve Griesmer optimality, so it applies to all optimal linear codes. However, \( e_{\left\lceil \frac{d}{q} \right\rceil}(n-d,k-1,q) \) is greater than zero, and Theorem 4 can perform well, if and only if \( [n-w,k-1,d-\lfloor \frac{w}{q}\rfloor]_q \) is also an optimal code. For example,  \( [9,5,3;4]_2 \) linear code in Table \ref{Five_Binary_AFER} later can be directly determined to be AFER-optimal by Theorem \ref{Enhanced_GriesmerNound_q nmid} and $[6,4,2;3]_2$ AFER-optimal linear code in Table \ref{Four_Binary_AFER}. This method of inferring optimality through residual codes is called the \textit{one-step Griesmer bound} in Grassl's code table \cite{Grassl:codetables}. 
\end{remark}
					
	\begin{example}\label{Ex.first_bound}
		Set $q=2$. 
	 	Let $\mathcal{P}_{I_t}$ be a set of $t$ points in $PG(k-1,2)$ such that $[\mathcal{P}_{I_t}]=I_t$, where $I_t$ is an identity matrix of order $t$. 
		It is easy to verify that $[s_1\cdot \mathcal{P}_{[3]}]$,  $[s_1\cdot \mathcal{P}_{[3]}\bigcup  \mathcal{P}_{I_1}]$, $[s_1\cdot \mathcal{P}_{[3]}\bigcup  \mathcal{P}_{I_2}]$, $[s_2\cdot \mathcal{P}_{[3]}\bigcup  \mathcal{P}_{I_3}]$, $[s_1\cdot \mathcal{P}_{[3]}\setminus  \mathcal{P}_{[2]}]$, $[s_1\cdot \mathcal{P}_{[3]}\setminus  \mathcal{P}_{I_2}]$, $[s_1\cdot \mathcal{P}_{[3]}\setminus  \mathcal{P}_{I_1}]$ generate linear codes with parameters $[7s_1,3,4s_1;7]_2$,  $[7s_1+1,3,4s_1;3]_2$, $[7s_1+2,3,4s_1;1]_2$, $[7s_2+3,3,4s_2+1;3]_2$, $[7s_2+4,3,4s_2+2;6]_2$, $[7s_2+5,3,4s_2+2;2]_2$, $[7s_2+6,3,4s_2+3;4]_2$, respectively.
		According to Theorem \ref{Griesmer_Bound}, those codes are all Griesmer optimal.
		
		Additionally, by Theorems \ref{AFER-Two-Dimension} and \ref{Enhanced_GriesmerNound_q nmid}, we can directly derive that
		\begin{itemize}
			\item $e(7s_1,3,2)\ge 2e_{2s_1}(3s_1,2,2)+1=7$.
			\item $e(7s_1+1,3,2)\ge 2e_{2s_1}(3s_1+1,2,2)+1=3$.
			\item $e(7s_1+2,3,2)\ge 2e_{2s_1}(3s_1+2,2,2)+1=1$.
			\item $e(7s_2+3,3,2)\ge e_{2s_2+1}(3s_2+2,2,2)+1=3$.
			\item $e(7s_2+4,3,2)\ge 2e_{2s_2+1}(3s_2+2,2,2)+1=5$.
			\item $e(7s_2+5,3,2)\ge e_{2s_2+1}(3s_2+3,2,2)+1=1$.
			\item $e(7s_2+6,3,2)\ge e_{2s_2+2}(3s_2+3,2,2)+1=4$.
		\end{itemize}

		Therefore, $[7s_1,3,4s_1;7]_2$,  $[7s_1+1,3,4s_1;3]_2$, $[7s_1+2,3,4s_1;1]_2$, $[7s_2+3,3,4s_2+1;3]_2$, $[7s_2+6,3,4s_2+3;4]_2$ linear codes are AFER-optimal, which also demonstrates that Theorem \ref{Enhanced_GriesmerNound_q nmid} is tight in these cases.
		For $[7s_2+4,3,4s_2+2;6]_2$ and $[7s_2+5,3,4s_2+2;2]_2$ linear codes, the lower bound provided by Theorem 4 differs from their error coefficients by only $1$.
	\end{example}
	
	In the next subsections, we will give another lower bounds, which can determine that $[7s_2+4,3,4s_2+2;6]_2$ and $[7s_2+5,3,4s_2+2;2]_2$ linear codes are also AFER-optimal.
	
	\subsection{Iterative lower bound II}
				When restricting the Griesmer optimal types of linear codes, we can get another lower bound on the error coefficients of Griesmer optimal linear codes.

	Firstly, to precisely describe the Griesmer optimality of linear codes, we provide the following definition.
	
	\begin{definition}\label{def_Griesmer optimal}
		For an $[n,k,d]_q$ Griesmer optimal linear code, we define
		$\Gamma _q(n,k,d)= k$ for $ n = g_q(k, d) $. Otherwise, let
		$ \Gamma _q(n,k,d) $ be the smallest integer $k_1$ such that $ n - g_q(k_1, d) \geq g_q\left(k - k_1, \left\lceil \frac{d}{q^{k_1}} \right\rceil + 1 \right)$. 
	\end{definition}
	
	
	The following lemma characterizes the properties of Griesmer optimal linear codes.
	
	\begin{lemma}\label{Lem_critical length-optimal}
		Assume $C$ is an $[n,k,d]_q$ Griesmer optimal linear code with $k\ge 3$.
		Then, the following statements hold.
		
		(1)  If $\Gamma _q(n,k,d)< k$, then $q\mid d$.
		
		(2)  If $q\nmid d$, then $C$ is a Griesmer code.
		
	\end{lemma}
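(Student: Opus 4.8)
The plan is to prove the two statements by contrapositive, exploiting the definition of $\Gamma_q(n,k,d)$ together with the Griesmer bound applied to the residual structure. First I would recall the key dichotomy built into Griesmer optimality: a code is Griesmer optimal if it is either a Griesmer code (meeting the bound with equality, $n = g_q(k,d)$) or merely distance-optimal, where no shorter/longer adjustment improves the minimum distance despite $n > g_q(k,d)$. The quantity $\Gamma_q(n,k,d)$ measures how far one can ``split'' the code before the Griesmer bound forces a contradiction.

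\textbf{Statement (2): if $q \nmid d$ then $C$ is a Griesmer code.} I would argue the contrapositive: suppose $C$ is Griesmer optimal but not a Griesmer code, so $n > g_q(k,d)$, and show $q \mid d$. The idea is that when $n$ strictly exceeds $g_q(k,d)$ but $d$ remains distance-optimal (cannot be increased), the ``slack'' $n - g_q(k,d)$ must be exactly the amount needed to prevent raising $d$ to $d+1$. Concretely, distance-optimality means an $[n,k,d+1]_q$ code does not exist, i.e.\ $n < g_q(k,d+1)$. Combining $g_q(k,d) < n < g_q(k,d+1)$ with the explicit formula $g_q(k,d) = \sum_{i=0}^{k-1}\lceil d/q^i \rceil$, I would analyze when consecutive Griesmer values can differ by more than $1$. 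The difference $g_q(k,d+1) - g_q(k,d) = \sum_{i=0}^{k-1}\left(\lceil (d+1)/q^i\rceil - \lceil d/q^i\rceil\right)$ counts the indices $i$ for which $q^i \mid d$. This gap exceeds $1$ precisely when $q \mid d$ (so that both $i=0$ and $i=1$ contribute), which is exactly the situation allowing a strict intermediate value of $n$. When $q \nmid d$, only the $i=0$ term contributes, the gap is $1$, no strict intermediate $n$ exists, and hence $C$ must be a Griesmer code.

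\textbf{Statement (1): if $\Gamma_q(n,k,d) < k$ then $q \mid d$.} This follows once (2) is in hand, again by contrapositive. By Definition \ref{def_Griesmer optimal}, $\Gamma_q(n,k,d) = k$ whenever $n = g_q(k,d)$, i.e.\ whenever $C$ is a Griesmer code. So $\Gamma_q(n,k,d) < k$ forces $n > g_q(k,d)$, meaning $C$ is \emph{not} a Griesmer code. By the contrapositive of statement (2), a Griesmer optimal code that is not a Griesmer code must satisfy $q \mid d$. Thus $\Gamma_q(n,k,d) < k$ implies $q \mid d$, and statement (1) is immediate.

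The main obstacle will be statement (2): making rigorous the claim that a distance-optimal code with $q \nmid d$ cannot have $n > g_q(k,d)$. The delicate point is the interplay between distance-optimality (the combinatorial nonexistence of a better code) and the purely arithmetic Griesmer gap computation. I expect to need the clean lemma that $g_q(k,d+1) - g_q(k,d) = \#\{0 \le i \le k-1 : q^i \mid d\}$, and then to verify that this equals $1$ exactly when $q \nmid d$ (for $k \ge 2$, the $i=0$ term always contributes $1$ and every higher term contributes $0$ unless $q \mid d$). Care is needed at the boundary where $d$ is a power of $q$ or where $k$ is small, but the hypothesis $k \ge 3$ gives ample room. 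Once this arithmetic lemma is established, both statements fall out cleanly from the definitions.
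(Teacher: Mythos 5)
Your proposal is correct and follows essentially the same route as the paper: both arguments reduce to the two inequalities $g_q(k,d)\le n<g_q(k,d+1)$ (which hold by definition of Griesmer optimality) together with the observation that $g_q(k,d+1)-g_q(k,d)=1$ exactly when $q\nmid d$, forcing $n=g_q(k,d)$ and $\Gamma_q(n,k,d)=k$ in that case. Your count of the gap as $\#\{0\le i\le k-1: q^i\mid d\}$ is in fact slightly more precise than the paper's stated value of $2$ for $q\mid d$, but this makes no difference to the conclusion.
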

	\begin{proof}
		As $C$ is a Griesmer optimal linear code, we have $n\ge g_q(k,d)$ and $n<g_q(k,d+1)$.
		Since         \begin{equation}
			g_q(k,d+1)-g_q(k,d)= \sum\limits_{i=0}^{k-1}\left\lceil  \frac{d+1}{q^{i}}\right\rceil-\sum\limits_{i=0}^{k-1}\left\lceil  \frac{d}{q^{i}}\right\rceil =\left\{\begin{matrix}
				1,& q\nmid d, \\
				2,& q\mid d,
			\end{matrix}\right.
		\end{equation}
		it is easy to see (1) and (2) hold.
	\end{proof}

	\begin{definition}
		Let $C$ be an $[n,k,d]_q$ linear code and $\mathbf{v}$ be a vector of $\mathbb{F}_q^n$. Define
		\begin{equation}
			C+\mathbf{v}=\{ \mathbf{c}+  \mathbf{v}\mid  \mathbf{c}\in C\}.
		\end{equation}
	\end{definition}
	
	Based on the constraints of minimum-weight codewords, the following proposition is proposed.
	
	\begin{proposition}\label{Pro_SubCode_Residual}
		Suppose $C$ is an $[n,k,d]_q$ Griesmer optimal linear code.
		If $q\mid d$ and $\Gamma _q(n,k,d)\ge 2$, then for any codeword $\mathbf{c}$ of weight $d$ of $C$, we have
		\begin{equation}\label{Eq_Eq_minimum_linear_nonlinear_subcode}
			A_\frac{d}{q}(\Upsilon _{\mathbf{c}}(C))\le \min\{A_d(C^{\prime}),\min\{ A_d(C^{\prime}+ \alpha \mathbf{c})\mid \alpha\in \mathbb{F}_q^*\}\},
		\end{equation}
		where $C^{\prime}$ is a supplementary subcode of $\mathbf{c}$ such that $C=\left \langle \mathbf{c},C^{\prime}  \right \rangle $.
	\end{proposition}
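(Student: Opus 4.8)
The plan is to set up an explicit bijection between $C'$ and the residual code $\Upsilon_{\mathbf{c}}(C)$, under which every weight-$\frac{d}{q}$ codeword of the residual forces a \emph{constant-weight} two-dimensional subcode of $C$, and then to read off one weight-$d$ codeword in $C'$ and one in each coset $C'+\alpha\mathbf{c}$ from that subcode. First I would record that, since $C=\langle \mathbf{c}, C'\rangle$ and puncturing on $\mathrm{Supp}(\mathbf{c})$ annihilates $\langle\mathbf{c}\rangle$, the map $\mathbf{c}'\mapsto \Upsilon_{\mathbf{c}}(\mathbf{c}')$ sends $C'$ onto $\Upsilon_{\mathbf{c}}(C)$. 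To see it is injective, suppose some nonzero $\mathbf{x}\in C$ were supported inside $\mathrm{Supp}(\mathbf{c})$ but not a scalar multiple of $\mathbf{c}$; then $\langle\mathbf{c},\mathbf{x}\rangle$ would be a two-dimensional subcode of effective length at most $d$ and minimum weight at least $d$, contradicting $g_q(2,d)=d+\lceil d/q\rceil>d$. Hence the kernel of $\Upsilon_{\mathbf{c}}$ on $C$ is exactly $\langle\mathbf{c}\rangle$, and its restriction to the complement $C'$ is a bijection onto $\Upsilon_{\mathbf{c}}(C)$.

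Next, using the weight identity $wt(\Upsilon_{\mathbf{c}}(\mathbf{c}'))=n(\langle\mathbf{c},\mathbf{c}'\rangle)-d$ from the proof of Theorem \ref{Enhanced_GriesmerNound_q nmid}, a codeword $\mathbf{c}_r=\Upsilon_{\mathbf{c}}(\mathbf{c}')$ has weight $\frac{d}{q}$ exactly when $n(\langle\mathbf{c},\mathbf{c}'\rangle)=d+\frac{d}{q}=g_q(2,d)$; here the hypothesis $q\mid d$ is what makes $\frac{d}{q}=\lceil d/q\rceil$ an integer. Writing $d=sq$, the subcode $\langle\mathbf{c},\mathbf{c}'\rangle$ is then an $[s(q+1),2,sq]_q$ Griesmer code. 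The main step, and the one I expect to be the crux, is to identify this subcode as the $s$-fold replicated Simplex code: by Lemma \ref{Griesmer_Point_multiplicity} each of the $q+1$ points of $PG(1,q)$ has multiplicity at most $s$, while the total multiplicity equals $s(q+1)$, which forces every point to have multiplicity exactly $s$. By Lemma \ref{replicated Simplex code} this code is constant-weight, so \emph{all} of its nonzero codewords have weight $sq=d$.

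Finally I would harvest the conclusion. For each weight-$\frac{d}{q}$ codeword $\mathbf{c}_r$ with preimage $\mathbf{c}'\in C'$, the constant-weight subcode $\langle\mathbf{c},\mathbf{c}'\rangle$ shows that $\mathbf{c}'$ and each $\mathbf{c}'+\alpha\mathbf{c}$ (for $\alpha\in\mathbb{F}_q^*$) has weight $d$. Since $\mathbf{c}'\in C'$ and $\mathbf{c}'+\alpha\mathbf{c}\in C'+\alpha\mathbf{c}$, and since distinct $\mathbf{c}_r$ yield distinct preimages $\mathbf{c}'$ (hence distinct $\mathbf{c}'+\alpha\mathbf{c}$) by the bijection of the first step, we obtain injections from the weight-$\frac{d}{q}$ codewords of $\Upsilon_{\mathbf{c}}(C)$ into the weight-$d$ codewords of $C'$ and of each coset $C'+\alpha\mathbf{c}$. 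This yields $A_{d/q}(\Upsilon_{\mathbf{c}}(C))\le A_d(C')$ and $A_{d/q}(\Upsilon_{\mathbf{c}}(C))\le A_d(C'+\alpha\mathbf{c})$ for every $\alpha\in\mathbb{F}_q^*$, and taking the minimum gives (\ref{Eq_Eq_minimum_linear_nonlinear_subcode}). The standing hypotheses $q\mid d$ and $\Gamma_q(n,k,d)\ge 2$ place us precisely in the Griesmer-optimal regime where $\frac{d}{q}$ is the relevant residual weight and $\Upsilon_{\mathbf{c}}(C)$ is the correct $(k-1)$-dimensional object; beyond fixing this setting, the argument is the bijection together with the replicated-Simplex identification.
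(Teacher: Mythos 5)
Your proof is correct and follows the same overall strategy as the paper's: a weight-$\frac{d}{q}$ codeword of $\Upsilon_{\mathbf{c}}(C)$ forces $\left \langle \mathbf{c},\mathbf{c}^{\prime} \right \rangle$ to have effective length $g_q(2,d)$, hence to be a replicated Simplex code (Lemma \ref{replicated Simplex code}), so that $\mathbf{c}^{\prime}$ and every $\mathbf{c}^{\prime}+\alpha\mathbf{c}$ has weight $d$. The one place you diverge is in how that two-dimensional subcode is identified: the paper obtains $\max\{wt(\mathbf{c}^{\prime}),wt(\mathbf{c}^{\prime}+\alpha\mathbf{c})\}=d$ in one stroke from the modified Griesmer bound (Theorem \ref{Modifications_Griesmer_Bound}), whereas you reach the same conclusion via the point-multiplicity bound (Lemma \ref{Griesmer_Point_multiplicity}) together with a counting argument that pins every point of $PG(1,q)$ at multiplicity exactly $s=\frac{d}{q}$; both are valid, the paper's being slightly more direct and yours more self-contained. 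You also make explicit the injectivity of $\mathbf{c}^{\prime}\mapsto\Upsilon_{\mathbf{c}}(\mathbf{c}^{\prime})$ on $C^{\prime}$, which the paper leaves implicit in ``it is natural to conclude''; that is a small but genuine gap you fill, since without it the passage from ``each weight-$\frac{d}{q}$ residual codeword yields a weight-$d$ codeword in $C^{\prime}$ and in each coset'' to the counting inequality (\ref{Eq_Eq_minimum_linear_nonlinear_subcode}) is not literally a bijective count.
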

	\begin{proof}
		Since $q\mid d$ and $\Gamma _q(n,k,d)\ge 2$, $\Upsilon _{\mathbf{c}}(C)$ is an $[n-d,k-1,\frac{d}{q}]_q$ Griesmer-optimal linear code.
		Let $\mathbf{c}^{\prime}$ be a codeword of $C^{\prime}$, and let $m=\max\{wt(\mathbf{c}^{\prime}), wt(\mathbf{c}^{\prime}+\alpha \mathbf{c})\mid \alpha\in \mathbb{F}_q^*\}$.
		According to Lemma 	\ref{Modifications_Griesmer_Bound}, we have \begin{equation}
			n(\Upsilon _{\mathbf{c}}(\left \langle \mathbf{c},\mathbf{c}^{\prime} \right \rangle ))-d\ge \mathsf{g}_q(2, d,m)-d=\left\lceil \frac{m}{q}\right\rceil.
		\end{equation}
		This formula takes the equality if and only if $m=d$.
		By Lemma \ref{replicated Simplex code}, at this time, $\left \langle \mathbf{c},\mathbf{c}^{\prime}  \right \rangle$ is a two-dimensional replicated Simplex code with possible $0$-coordinates.
		That is, $wt(\mathbf{c}^{\prime})=wt(\mathbf{c}^{\prime}+ \alpha \mathbf{c})=d$ holds for all $\alpha$ in $\mathbb{F}_q^*$. It is natural to conclude that Equation (\ref{Eq_Eq_minimum_linear_nonlinear_subcode}) is true.
	\end{proof}

	By characterizing the relationship between Griesmer distance-optimal linear code and its puncture and residual codes, we deduce the following theorem.
	
	\begin{theorem}\label{Theo_Extend_Code}
		If $C$ is an $[n,k,d]_q$ Griesmer optimal linear code with  $\Gamma _q(n,k,d)< k$.
		then
		\begin{equation}
			A_d(C)\ge L_q^{(2)}(n,k):\triangleq e(n-d-1,k-1,q).
		\end{equation}
	\end{theorem}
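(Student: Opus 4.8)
The plan is to reduce the statement to a question about the residual code of $C$ and then to cash in the one unit of length slack that the hypothesis $\Gamma_q(n,k,d)<k$ forces. First, since $\Gamma_q(n,k,d)<k$, Lemma \ref{Lem_critical length-optimal}(1) gives $q\mid d$, while Definition \ref{def_Griesmer optimal} shows that $n>g_q(k,d)$, so $C$ is distance-optimal but is not a Griesmer code. Using $q\mid d$ one computes $g_q(k-1,\lceil d/q\rceil)=g_q(k,d)-d$, hence $n-d-1\ge g_q(k,d)-d=g_q(k-1,\lceil d/q\rceil)$; this guarantees that a distance-optimal $[n-d-1,k-1]_q$ code exists, so $e(n-d-1,k-1,q)$ is well defined. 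I then fix a minimum-weight codeword $\mathbf c$ with a complement $C'$ so that $C=\langle \mathbf c,C'\rangle$, and form the residual $R=\Upsilon_{\mathbf c}(C)$, which by Theorem 2.7.1 of \cite{huffman2010fundamentals} is an $[n-d,k-1,\ge \lceil d/q\rceil]_q$ code.

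The central local fact is the one already isolated in the proof of Proposition \ref{Pro_SubCode_Residual}: a codeword of $R$ meets the minimum weight $\lceil d/q\rceil$ precisely when the corresponding two-dimensional subcode $\langle \mathbf c,\mathbf c'\rangle$ attains the modified Griesmer bound of Theorem \ref{Modifications_Griesmer_Bound} with equality, which by Lemma \ref{replicated Simplex code} makes it a replicated Simplex code and hence constant-weight $d$. Thus every weight-$\lceil d/q\rceil$ codeword of $R$ lifts to a minimum-weight codeword of $C$, and combining this with the coset bookkeeping of Proposition \ref{Pro_SubCode_Residual} (applied to the partition $C=\bigcup_{\alpha\in\mathbb F_q}(C'+\alpha\mathbf c)$) yields $A_d(C)\ge A_{\lceil d/q\rceil}(R)$. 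This reduces the theorem to a statement purely about $R$: it suffices to prove $A_{\lceil d/q\rceil}(R)\ge e(n-d-1,k-1,q)$.

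For this last inequality I would exploit the slack $n-g_q(k,d)\ge 1$, which by the length computation above is inherited by $R$: its length $n-d$ exceeds $g_q(k-1,\lceil d/q\rceil)$ by at least one. The idea is to delete a suitably chosen coordinate of $R$ to obtain a code $D$ of length $n-d-1$ and dimension $k-1$ that is distance-optimal and whose error coefficient is dominated by $A_{\lceil d/q\rceil}(R)$; then $e(n-d-1,k-1,q)\le e(D)\le A_{\lceil d/q\rceil}(R)\le A_d(C)$ closes the argument. I expect this coordinate deletion to be the main obstacle: one must show that the redundant coordinate can be selected so that the resulting $[n-d-1,k-1]_q$ code remains distance-optimal \emph{and} so that the minimum-weight count moves in the right direction, since a naive puncturing can instead create new minimum-weight codewords. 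A secondary difficulty is the boundary case $\Gamma_q(n,k,d)=1$, where $R$ itself need not be distance-optimal; here I would argue directly that the right-hand side $e(n-d-1,k-1,q)$ is controlled by the two-dimensional values of Theorem \ref{AFER-Two-Dimension} and is small enough for the inequality to survive, splitting this case off separately if necessary.
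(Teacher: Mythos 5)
There is a genuine gap at the step you yourself flag as "the main obstacle," and it is not repairable along the lines you sketch. Your plan is: take the residual $R=\Upsilon_{\mathbf c}(C)$ of length $n-d$, show $A_d(C)\ge A_{\lceil d/q\rceil}(R)$, and then delete one coordinate of $R$ to reach length $n-d-1$ with $e(D)\le A_{\lceil d/q\rceil}(R)$. That last inequality goes the wrong way: puncturing a code either drops the minimum distance or preserves it, and when it preserves it the number of minimum-weight codewords can only grow (weight-$(\lceil d/q\rceil+1)$ words supported on the deleted coordinate become new minimum-weight words). Worse, the intermediate target $A_{\lceil d/q\rceil}(R)\ge e(n-d-1,k-1,q)$ is simply false in the regime where this theorem is most used. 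Take the paper's own application, the $[7s_2+5,3,4s_2+2]_2$ code with $\Gamma_2=1$: here $R$ has length $3s_2+3$ and dimension $2$, but $d(3s_2+3,2,2)=2s_2+2>d/q=2s_2+1$, so $R$ may have \emph{no} codeword of weight $d/q$ at all, i.e.\ $A_{d/q}(R)=0$, while $e(n-d-1,k-1,q)=e(3s_2+2,2,2)=2$. This is not a boundary nuisance to be "split off"; $\Gamma_q(n,k,d)=1$ is the typical case covered by the hypothesis $\Gamma_q(n,k,d)<k$.

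The paper avoids this by reversing your order of operations: it removes one coordinate from the $k$-dimensional code $C$ \emph{first}, and only then takes a residual, so the residual lands directly on length $n-d-1$. Concretely, since $\Gamma_q(n,k,d)<k$ forces $q\mid d$ and $n>g_q(k,d)$, one has $d(n-1,k,q)=d$, and the proof splits on whether some puncturing of $C$ yields an $[n-1,k,d]_q$ code $C'$. If yes, Proposition~\ref{Pro_SubCode_Residual} applied to $C'$ bounds $A_{d/q}(\Upsilon_{\mathbf c'}(C'))$ by the minimum of $A_d$ over $(k-1)$-dimensional subcodes of $C'$, which in turn lower-bounds $A_d(C)$, and $\Upsilon_{\mathbf c'}(C')$ is a Griesmer optimal $[n-d-1,k-1,d/q]_q$ code, so its error coefficient is at least $e(n-d-1,k-1,q)$ by definition of $e$ as a minimum over codes of those parameters (here the comparison is between an actual code and the infimum, so it goes the right way). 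If no, every puncturing gives an $[n-1,k,d-1]_q$ code $C''$ with $q\nmid(d-1)$; the residual of $C''$ is again an $[n-d-1,k-1,d/q]_q$ Griesmer optimal code, the counting argument from Theorem~\ref{Enhanced_GriesmerNound_q nmid} gives $A_{d-1}(C'')\ge(q-1)e(n-d-1,k-1,q)+(q-1)$, and $A_d(C)\ge A_{d-1}(C'')$. You would need to import this dichotomy (or some other mechanism that shortens $C$ before taking the residual) to close your argument; as written, the reduction to $R$ of length $n-d$ cannot reach the length-$(n-d-1)$ quantity in the statement.
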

	\begin{proof}
		By Lemma \ref{Lem_critical length-optimal}, we have that $q\mid d$ and $n> g_q(k,d)$. 
		It follows that $d(n-1,k,q)=d$.
		Let us prove the minimum error coefficient of $C$ in two different scenarios.

		(1) If puncture some position of $C$ yields  an $[n-1,k,d]_q$ linear code $C^{\prime}$, then we have
		\begin{equation}
			A_d(C)\ge  \min\{A_d(\tilde{C} ^{\prime})\mid  \tilde{C} ^{\prime} \text{ is a $(k-1)$-dimensional  linear subcode of } C^{\prime} \}.
		\end{equation}

		Let $\mathbf{c}^{\prime}$ be a codeword of weight $d$ of $C^{\prime}$. As $\Gamma _q(n-1,k,d)\ge 2$, we have  $\Upsilon _{\mathbf{c}^{\prime}}(C^{\prime})$ is an $[n-d-1,k-1,\frac{d}{q}]_q$ Griesmer optimal linear code.
		By Proposition \ref{Pro_SubCode_Residual}, we get
		\begin{equation}
			A_{\frac{d}{q}}\left( \Upsilon _{\mathbf{c}^{\prime}}(C^{\prime}) \right)\le \min\{A_d(\tilde{C} ^{\prime})\mid  \tilde{C} ^{\prime} \text{ is a $(k-1)$-dimensional  linear subcode of } C^{\prime} \}.
		\end{equation}
		
		As $e(n-d-1,k-1,q)\le e\left( \Upsilon _{\mathbf{c}^{\prime}}(C^{\prime}) \right)$, we can directly deduce that $A_d(C)\ge e(n-d-1,k-1,q)$.
		
		(2) If $C$ is not extended from an $[n,k,d]_q$ linear code, then puncture any position on $C$ deduce an $[n-1,k,d-1]_q$ linear code $C^{\prime\prime}$.
		Let $\mathbf{c}^{\prime\prime}$ be a codeword of weight $d-1$ of $C^{\prime \prime}$.
		As $q\nmid (d-1)$, $\Upsilon _{\mathbf{c}^{\prime\prime}}(C^{\prime\prime})$ is also an $[n-d-1,k-1,\frac{d}{q}]_2$ Griesmer optimal linear code.
		According to the proof of Theorem \ref{Enhanced_GriesmerNound_q nmid}, we have
		\begin{equation}
			A_{d-1}(C^{\prime\prime}) \ge (q-1)A_{\frac{d}{q}}(\Upsilon _{\mathbf{c}^{\prime\prime}}(C^{\prime\prime}))+(q-1)\ge (q-1)e(n-d-1,k-1,q)+(q-1).
		\end{equation}
		
		Since $C^{\prime\prime}$ is obtained by puncturing $C$, it follows that $A_d(C)\ge A_{d-1}(C^{\prime\prime})$.
		Thus, in this case, we have  $A_d(C)\ge (q-1)e(n-d-1,k-1,q)+(q-1)$.
	\end{proof}
	
	In Example \ref{Ex.first_bound}, the AFER-optimality of $[7s_2+5,3,4s_2+2;2]_2$ linear code remains undetermined. As the following example shows, its AFER-optimality can be solved by Theorem \ref{Lem_critical length-optimal}.
	\begin{example}
		Set $q=2$.
		Since $[7s_2+5,3,4s_2+2]_2$ linear code is Griesmer optimal and $\Gamma _2(7s_2+5,3,4s_2+2)=1< 3$, by Theorem \ref{Lem_critical length-optimal}, we have
		$e(7s_2+5,3,2)\ge e(3s_2+2,2,2)=2$.
		Therefore, $[7s_2+5,3,4s_2+2;2]_2$ linear code is also AFER-optimal.
	\end{example}
	
	\subsection{Iterative lower bound III}
	
	In this subsection, we present a lower bound on the error coefficients of linear codes by analyzing the relationships between subcode structures of linear codes.

		%
		%

	According to Definition \ref{def_Griesmer optimal}, we can deduce that Griesmer optimal linear codes have a specific subcode structure, as shown in the following proposition.

	\begin{proposition}\label{Generalized_Griesmer_code_code_chain}
		Let $C$ be an $[n,k,d]_q$ Griesmer optimal linear code, and let $k_1=\min\{\Gamma _q(n,k,d),k-1\}$.
		Then, $C$ has a linear subcode chain $C_{1}    \subset C_{2}\subset \cdots \subset C_{{k_1}}\subset C$ satisfies
		\begin{equation}\label{E_Code_Chain}
			\begin{array}{cc}
				n(C_{1} )=g_q(1,d), \\
				n(C_{2})=g_q(2,d),\\
				\vdots\\
				n(C_{{k_1}})=g_q(k_1,d),
			\end{array} 
		\end{equation}
		where $C_{i}$ is an $i$-dimensional linear subcode of $C$.
	\end{proposition}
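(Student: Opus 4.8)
The plan is to build the chain from the bottom up, adjoining one codeword at a time via the residual-code operation, and to use Definition \ref{def_Griesmer optimal} to guarantee at each stage that the current residual code has minimum weight \emph{exactly} $\lceil d/q^{i}\rceil$. Since $g_q(i+1,d)-g_q(i,d)=\lceil d/q^{i}\rceil$, lifting a minimum-weight codeword of that residual code will increase the effective length by precisely the right amount. Concretely, I would prove by induction on $j$, for $0\le j\le k_1$, the statement $P(j)$: there exist codewords $\mathbf{c}_1,\dots,\mathbf{c}_j\in C$ such that $C_j:=\langle \mathbf{c}_1,\dots,\mathbf{c}_j\rangle$ has dimension $j$ with $n(C_j)=g_q(j,d)$, and moreover the punctured code $R_j:=\Upsilon_{\mathrm{Supp}(C_j)}(C)$ is an $[\,n-g_q(j,d),\,k-j,\,\ge\lceil d/q^{j}\rceil\,]_q$ code. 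The base case $j=0$ is immediate with $C_0=\{\mathbf{0}\}$ and $R_0=C$.

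For the inductive step from $P(j-1)$ to $P(j)$ with $1\le j\le k_1$ there are two ingredients. First, residualising a code of minimum distance $e$ on one of its minimum-weight codewords yields a code of minimum distance at least $\lceil e/q\rceil$; this is exactly the two-dimensional Griesmer argument already carried out in the proof of Theorem \ref{Enhanced_GriesmerNound_q nmid}, and it supplies the distance lower bound $\lceil d/q^{j}\rceil$ recorded in $P(j)$. Second, and this is the crux, I would show $d(R_{j-1})=\lceil d/q^{j-1}\rceil$ exactly. Indeed, if $d(R_{j-1})\ge \lceil d/q^{j-1}\rceil+1$, then applying the Griesmer bound (Theorem \ref{Griesmer_Bound}) to $R_{j-1}$, whose length is $n-g_q(j-1,d)$ and dimension $k-(j-1)$, gives $n-g_q(j-1,d)\ge g_q\!\left(k-(j-1),\lceil d/q^{j-1}\rceil+1\right)$, which by Definition \ref{def_Griesmer optimal} forces $\Gamma_q(n,k,d)\le j-1$, contradicting $j\le k_1\le \Gamma_q(n,k,d)$.

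Hence $R_{j-1}$ has a codeword $\bar{\mathbf{c}}_j$ of weight exactly $\lceil d/q^{j-1}\rceil$. I would lift it to a preimage $\mathbf{c}_j\in C$ and set $C_j=\langle C_{j-1},\mathbf{c}_j\rangle$. A short support computation shows that for a coordinate outside $\mathrm{Supp}(C_{j-1})$ the codewords of $C_{j-1}$ vanish, so that coordinate lies in $\mathrm{Supp}(C_j)$ iff it lies in $\mathrm{Supp}(\bar{\mathbf{c}}_j)$; thus $\mathrm{Supp}(C_j)\setminus \mathrm{Supp}(C_{j-1})=\mathrm{Supp}(\bar{\mathbf{c}}_j)$ and $n(C_j)=g_q(j-1,d)+\lceil d/q^{j-1}\rceil=g_q(j,d)$. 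The same computation identifies $R_j$ with the residual of $R_{j-1}$ on $\bar{\mathbf{c}}_j$, giving its claimed parameters, and since $\bar{\mathbf{c}}_j\ne\mathbf{0}$ forces $\mathbf{c}_j\notin C_{j-1}$ we get $\dim C_j=j$. Running the induction to $j=k_1$ and noting $k_1\le k-1$ yields the strictly increasing chain $C_1\subset\cdots\subset C_{k_1}\subset C$.

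The main obstacle — and the only place the hypothesis $j\le k_1$ is genuinely used — is the second ingredient above: translating the combinatorial definition of $\Gamma_q(n,k,d)$ into the statement that the minimum weight of the successive residual codes does not jump above $\lceil d/q^{j-1}\rceil$. The remaining points are routine, but one should verify that the dimension drops by exactly one at each residualisation: the codewords of $R_{j-1}$ supported inside the $e$-set $\mathrm{Supp}(\bar{\mathbf{c}}_j)$ form a length-$e$ code in which every nonzero word has full weight $e=\lceil d/q^{j-1}\rceil$, and such a code must be one-dimensional, since a $2$-dimensional one would be an $[e,2,e]_q$ code violating the Griesmer bound $e\ge g_q(2,e)=e+\lceil e/q\rceil$.
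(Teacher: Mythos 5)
Your proposal is correct and follows essentially the same route as the paper: both arguments build the chain greedily by showing that if the $j$-th residual code had minimum distance exceeding $\lceil d/q^{j-1}\rceil$, its parameters would violate the Griesmer bound (equivalently, would contradict the minimality in the definition of $\Gamma_q(n,k,d)$), so a minimum-weight codeword of the residual can be lifted to extend the subcode chain with $n(C_j)=g_q(j,d)$. Your version merely unifies the paper's two cases ($n=g_q(k,d)$ versus $n>g_q(k,d)$) into a single induction and supplies the routine support/dimension verifications that the paper leaves implicit.
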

	\begin{proof}
Before we formally begin the proof, we give a description of the punctured code of $C$ according to multiple codewords. 
    Let $k^{\prime}\le k_1$ be a positive integer and 
		let $\mathbf{c}_1$, $\mathbf{c}_2$, $\ldots$, $\mathbf{c}_{k^{\prime}}$ be $k^{\prime}$ linearly independent codewords of $C$.
       Suppose $A  =\mathrm{Supp}(\left \langle  \mathbf{c}_1,\mathbf{c}_2,\ldots,\mathbf{c}_{k^{\prime}} \right \rangle ) $.
		Then, it follows that, for all $\mathbf{c}_{r}\in \Upsilon _{A}(C)$, we have
		\begin{equation}
			wt(\mathbf{c}_{r})\ge d_r=\min \left\{
			n(\left \langle  \mathbf{c}_1,\mathbf{c}_2,\ldots,\mathbf{c}_{k^{\prime}},\mathbf{c} \right \rangle ) \mid  \mathbf{c}\in C\setminus \left \langle  \mathbf{c}_1,\mathbf{c}_2,\ldots,\mathbf{c}_{k^{\prime}} \right \rangle \right\}
			-
			n(\left \langle  \mathbf{c}_1,\mathbf{c}_2,\ldots,\mathbf{c}_{k^{\prime}} \right \rangle ).
		\end{equation}
		This is, $\Upsilon _{A}(C)$ is an $[n-|A  |,k-k^{\prime},d_r]_q$ linear code. 
		In the following, we prove that the proposition holds in two cases.
		
		\textbf{Case A}: When $ n = g_q(k, d) $.  
		Let $\mathbf{c}_{1}$ be a codeword of weight $d$ of $C$ and $A  _{1}=\mathrm{Supp}(\mathbf{c}_{1})$.
		Then,  $\Upsilon _{A  _1}(C)$ is an $[n-d,k-1$ linear code with minimum distance
		$\min\{n(\left \langle \mathbf{c}_{1},\mathbf{c}_{2}  \right \rangle )-d\mid \mathbf{c}_{2}\in C \setminus \left \langle  \mathbf{c}_1 \right \rangle   \}.$
		
		According to Theorem \ref{Griesmer_Bound}, for all codeword $\mathbf{c}_{2}$ of $C\setminus \left \langle  \mathbf{c}_1 \right \rangle $, $n(\left \langle \mathbf{c}_{1},\mathbf{c}_{2}  \right \rangle ) \ge g_q(2,d)$.
		Since $n-d=g_q\left(k-1,\left\lceil \frac{d}{q} \right\rceil\right)$, the $\left[n-d,k-1, \left\lceil \frac{d}{q} \right\rceil + 1\right]_q$ linear code violates the Griesmer bound. Thus, there must exist some codeword $\mathbf{c}_{2}$ of $C\setminus \left \langle  \mathbf{c}_1 \right \rangle $ such that $n(\left \langle \mathbf{c}_{1},\mathbf{c}_{2}  \right \rangle ) = g_q(2,d)$.
		By induction, we can find another linearly independent codewords $\mathbf{c}_{3}$, $\mathbf{c}_{4}$, $\ldots$, $\mathbf{c}_{{k-1}}$ of $C$ such that \begin{equation}
			n(\left \langle  \mathbf{c}_1,\mathbf{c}_2,\ldots,\mathbf{c}_{k^{\prime}} \right \rangle )=g_{q}(k^{\prime},d)
		\end{equation}
		holds for all $2\le k^{\prime}\le k-1=k_1$.
		Denote the linear codes $\left \langle \mathbf{c}_{1} \right \rangle $, $\left \langle \mathbf{c}_{1},\mathbf{c}_{2} \right \rangle $, $\ldots$,  $\left \langle \mathbf{c}_{1},\mathbf{c}_{2},\ldots,\mathbf{c}_{{k-1}} \right \rangle $ by $C_{1}$, $C_{2}$, $\ldots$, $C_{{k-1}}$.
		Then, Equation (\ref{E_Code_Chain}) can be derived immediately.
		
		\textbf{Case B}: When $ n > g_q(k, d) $.  We have $ k_1 $ is the smallest integer such that $ n - g_q(k_1, d) \geq g_q(k - k_1, \lceil \frac{d}{q^{k_1}} \rceil + 1) $ and $k_1\ge1$.
		Therefore, the $\left[n-g_{q}(k^{\prime},d),k-k^{\prime}, \left\lceil \frac{d}{q^{k^{\prime}}} \right\rceil  +1\right]_q$ linear code violates the Griesmer bound for $1\le k^{\prime}\le k_1-1$.
		Similar with Case A, we can find $k_1$ linearly independent codewords of $\mathbf{c}_{1}$, $\mathbf{c}_{2}$, $\ldots$, $\mathbf{c}_{{k_1}}$ of $C$ such that
		\begin{equation}
			n(\left \langle  \mathbf{c}_1,\mathbf{c}_2,\ldots,\mathbf{c}_{k^{\prime}} \right \rangle )=g_{q}(k^{\prime},d),
		\end{equation}
		holds for all $2\le k^{\prime}\le k_1$.
		This also implies Equation (\ref{E_Code_Chain}) holds.
	\end{proof}
	
	Relying on Propositions \ref{Pro_SubCode_Residual} and \ref{Generalized_Griesmer_code_code_chain}, the following theorem can be established.
	
	\begin{theorem}\label{Th. Enhanced_GriesmerNound_q mid d_based}
		Let $C$ be an $[n,k,d]_q$ Griesmer optimal linear code with $q\mid d$ and $k_1=\min\{\Gamma _q(n,k,d),k-1\}\ge 2$.
		Let $s=\left \lceil \frac{d}{q^{k-1}} \right\rceil $, and let $\mu_q(n,k) =\max\{ e(g_q(k_1,d),k_1,q),e_d(n-s,k-1,q)\}$.
		Then, we have
		\begin{equation}\label{Griesmer_type_bound_(2).2_based}
			e(n,k,q)\ge L_q^{(3)}(n,k):\triangleq (q-1) (e(n-d,k-1,q)+\Delta) +\mu_q(n,k)+(q-1).
		\end{equation}
		where $\Delta$ is the smallest non-negative integer such that $(q-1) \mid \left( \left\lfloor\frac{L_q^{(3)}(n,k)-\mu_q(n,k)-(q-1)}{q-1}\right\rfloor-\Delta \right)$.
	\end{theorem}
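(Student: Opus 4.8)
The plan is to fix a minimum-weight codeword $\mathbf{c}$ of $C$, decompose $C=\langle\mathbf{c}\rangle\oplus C'$ with $\dim C'=k-1$, and count the weight-$d$ codewords coset by coset across the partition $C=\bigcup_{\alpha\in\mathbb{F}_q}(C'+\alpha\mathbf{c})$, so that $A_d(C)=A_d(C')+\sum_{\alpha\ne 0}A_d(C'+\alpha\mathbf{c})$. Since $q\mid d$ and $k_1\ge 2$, Lemma \ref{Lem_critical length-optimal} together with the residual-code argument of Theorem \ref{Enhanced_GriesmerNound_q nmid} shows that $\Upsilon_{\mathbf{c}}(C)$ is an $[n-d,k-1,\frac{d}{q}]_q$ Griesmer optimal code, whence $A_{\frac{d}{q}}(\Upsilon_{\mathbf{c}}(C))\ge e(n-d,k-1,q)$. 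Separating the zero coset $C'$ from the $q-1$ nonzero cosets is the backbone of the argument.

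For the nonzero cosets I would invoke Proposition \ref{Pro_SubCode_Residual}: every weight-$\frac{d}{q}$ codeword of $\Upsilon_{\mathbf{c}}(C)$ arises from some $\mathbf{c}''\in C'$ for which $\langle\mathbf{c},\mathbf{c}''\rangle$ is a two-dimensional replicated Simplex code (Lemma \ref{replicated Simplex code}), so that $wt(\mathbf{c}''+\alpha\mathbf{c})=d$ for every $\alpha\in\mathbb{F}_q^*$. This embeds $A_{\frac{d}{q}}(\Upsilon_{\mathbf{c}}(C))$ distinct weight-$d$ vectors into each coset $C'+\alpha\mathbf{c}$; moreover the scalar multiple $\alpha\mathbf{c}$ is itself a weight-$d$ vector that maps to $\mathbf{0}$ under $\Upsilon_{\mathbf{c}}$ and is therefore not among those lifts. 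Summing over the $q-1$ nonzero cosets gives $\sum_{\alpha\ne 0}A_d(C'+\alpha\mathbf{c})\ge (q-1)A_{\frac{d}{q}}(\Upsilon_{\mathbf{c}}(C))+(q-1)\ge (q-1)e(n-d,k-1,q)+(q-1)$, which accounts for the first and last summands of $L_q^{(3)}(n,k)$.

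It then remains to bound the zero-coset term $A_d(C')$ from below by $\mu_q(n,k)=\max\{e(g_q(k_1,d),k_1,q),\,e_d(n-s,k-1,q)\}$. Here I would use Proposition \ref{Generalized_Griesmer_code_code_chain} to locate the Griesmer subcode $C_{k_1}$ of the chain, a $[g_q(k_1,d),k_1,d]_q$ code whose weight-$d$ codewords force $A_d(C')\ge e(g_q(k_1,d),k_1,q)$ once $C'$ is chosen to contain $C_{k_1}$, and a shortening at the $s=\lceil d/q^{k-1}\rceil$ coordinates of a maximum-multiplicity point (Lemma \ref{Griesmer_Point_multiplicity}) to exhibit a $[n-s,k-1,d]_q$ subcode yielding $A_d(C')\ge e_d(n-s,k-1,q)$. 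Combining the three contributions gives $A_d(C)\ge (q-1)e(n-d,k-1,q)+\mu_q(n,k)+(q-1)$, and the correction $\Delta$ absorbs the floor incurred when transferring the residual count and rounds the estimate up to respect the divisibility $(q-1)\mid A_d(C)$, producing exactly $L_q^{(3)}(n,k)$.

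The step I expect to be the main obstacle is the lower bound on $A_d(C')$, because it must realize \emph{both} halves of $\mu_q(n,k)$ with a single compatible choice of $\mathbf{c}$ and $C'$: one must exhibit a minimum-weight codeword $\mathbf{c}$ lying outside a complement $C'$ that still captures the Griesmer subcode $C_{k_1}$ and has effective length small enough for the shortening bound, while simultaneously ensuring that no weight-$d$ codeword is double-counted across the coset decomposition, the residual lifts, and the subcode $C_{k_1}$. In particular, verifying $d(C')=d$ (so that the error coefficients appearing in $\mu_q(n,k)$ are attained by genuinely distance-optimal codes) is the delicate point that couples the geometry of Proposition \ref{Generalized_Griesmer_code_code_chain} to the counting.
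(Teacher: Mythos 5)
Your proposal is correct and follows essentially the same route as the paper: a coset decomposition $C=C'\cup\bigcup_{\alpha\ne 0}(C'+\alpha\mathbf{c})$, lifting each weight-$\frac{d}{q}$ residual codeword to a weight-$d$ word in every nonzero coset via Proposition \ref{Pro_SubCode_Residual} (the replicated-Simplex characterization), bounding the zero coset by $\mu_q(n,k)$ through the Griesmer subcode chain of Proposition \ref{Generalized_Griesmer_code_code_chain} together with a shortening at a maximum-multiplicity point, and finishing with the divisibility correction $\Delta$. The compatibility issue you flag at the end (choosing one $\mathbf{c}$ and one complement realizing both halves of $\mu_q(n,k)$) is present in the paper's own proof as well, which likewise asserts rather than verifies the existence of a weight-$d$ codeword outside the chosen $(k-1)$-dimensional subcode.
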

	\begin{proof} 
                		Since $k_1=\min\{\Gamma _q(n,k,d),k-1\}\ge 2$, the residual code of $C$ under codeword of weight $d$ has parameters $[n-d,k-1,\frac{d}{q}]_q$. 	
                With Proposition \ref{Generalized_Griesmer_code_code_chain}, there must exist some codeword $\mathbf{c}$ of weight $d$ in $C\setminus  C_{k-1}$ such that $C=\left \langle \mathbf{c},  C_{k-1} \right \rangle $. 
		It follows that \begin{equation}
			A_d(C)=A_d(C_{k-1})+\sum\limits_{i=0}^{q-2} A_d(C_{k-1}+\alpha^{i}\mathbf{c})+q-1.
		\end{equation} 
        It is easy to conclude that there must be at least one subcode in $ C_{k-1}+\mathbf{c}$, $\ldots$, and $C_{k-1}+\alpha^{q-2}\mathbf{c}$ with error coefficient less than  $\left\lfloor\frac{A_d(C)-A_d(C_{k-1})-(q-1)}{q-1}\right\rfloor$. 
        In the following, we complete the proof of this theorem by proving that the lower bound on the maximum error coefficient of $(k-1)$-dimensional linear subcode of $C$ is $\mu_q(n,k)$. 
    
		According to Proposition \ref{Generalized_Griesmer_code_code_chain}, $C$ has a $k_1$-dimensional linear subcode $C_{{k_1}}$ satisfies $n(C_{{k_1}})=g_q(k_1,d)$.
		Without loss of generality, assume that $C_{k-1}$ is a $(k-1)$-dimensional linear subcode of $C$ such that $C_{{k_1}}\subset C_{k-1}$.
	This implies $A_d(C_{k-1})\ge e(g_q(k_1,d),k_1,q)$.

        In addition, as there exists point $\mathbf{p}$ in $C$ repeats at least $s$ times, $C$ has a $(k-1)$-dimensional linear subcode $C_{k-1}^{\prime}$ satisfies $n(C_{k-1}^{\prime})\le n-s$. 
		Moreover, for the reason that $g_q(k-1,d+1)=g_q(k,d+1)-\left \lceil \frac{d+1}{q^{k-1}} \right\rceil\ge g_q(k,d+1)-s\ge n-s$, $\Phi (C_{k-1}^{\prime})$ is also an $[\le n-s,k-1,d]_q$ Griesmer optimal linear code. 
        Therefore, $C$ has a $(k-1)$-dimensional subcode has error coefficients at least $\mu_q(n,k)$.

		By Proposition \ref{Pro_SubCode_Residual}, we have
		
		\begin{equation}
			A_\frac{d}{q}(\Upsilon _{\mathbf{c}}(C))\le \left\lfloor\frac{A_d(C)-\max\{A_d(C_{k-1}),A_d(C_{k-1}^{\prime})\}-(q-1)}{q-1}\right\rfloor=\left\lfloor\frac{A_d(C)-\mu_q(n,k)-(q-1)}{q-1}\right\rfloor.
		\end{equation}

		Since  $(q-1)\mid A_\frac{d}{q}(\Upsilon _{\mathbf{c}}(C))$, we can make appropriate transformations to this equation to obtain Equation (\ref{Griesmer_type_bound_(2).2_based}).
		Hence, this theorem is true.
	\end{proof}
	
	Under Theorem \ref{Th. Enhanced_GriesmerNound_q mid d_based}, we can prove that
	$[7s_2+4,3,4s_2+2;6]_2$ linear code is AFER-optimal, as shown in the following example.
	\begin{example}
		Let the notation be the same with Example \ref{Ex.first_bound}.
		Since $[7s_2+4,3,4s_2+2;6]_2$ linear code $C_1$ is a Griesmer code, $\Gamma _2(7s_2+4,3,4s_2+2)=3\ge 2$.
		It follows that $C$ has a $2$-dimensional subcode with effective length $ g_2(2,4s_2+2)$.
		According to Theorem \ref{Th. Enhanced_GriesmerNound_q mid d_based}, we have $e(7s_2+4,3,2)\ge e(6s_2+3,2,2)+e_{2s_2+1}(3s_2+2,2,2)+1=6$.
		Thus, $[7s_2+4,3,4s_2+2;6]_2$ linear code is AFER-optimal.
	\end{example}

	\subsection{Iterative lower bound IV} 	
	
	Using Proposition \ref{Generalized_Griesmer_code_code_chain}, we derive another lower bound on the error coefficients of Griesmer optimal linear codes, as presented in the following theorem.
	
	
	\begin{theorem}\label{k_Residue_Grismer_Bound}
		Let $C$ be an $[n,k,d]_q$ Griesmer optimal linear code with $k_1=\min\{\Gamma _q(n,k,d),k-1\}\ge 2$.
		Let $k_2<k_1$ be a positive integer.
		Define $\varsigma_q(k_2,d) =e(g_q(k_2+1,d),k_2+1,q) - e(g_q(k_2,d),k_2,q) $.
		Then, we have
		\begin{equation}\label{Eq_Improved_Griesmer_type_bound_k_Residue}
			A_d(C)\ge L_q^{(4)}(n,k):\triangleq
			\max\left\{
			\frac{\left( e(n-g_q(k_2,d),k-k_2,q)+ \Delta\right)\varsigma_q(k_2,d)}{q-1}
			+ e(g_q(k_2,d),k_2,q)+\Delta^{\prime}
			\mid 1\le k_2< k_1\right\},
		\end{equation}
		where $\Delta$ is the smallest non-negative integer such that $(q-1) \mid  \left( \left\lfloor \frac{ L_q^{(4)}(n,k) - e(g_q(k_2,d),k_2,q) }{\varsigma_q(k_2,d)} \right\rfloor-\Delta \right)$ and
		$\Delta^{\prime}$ is the smallest non-negative integer such that $\varsigma_q(k_2,d) \mid \left( L_q^{(4)}(n,k) - e(g_q(k_2,d),k_2,q) -\Delta^{\prime}\right)$.
	\end{theorem}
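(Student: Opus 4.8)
The plan is to build $C$ up from a $k_2$-dimensional Griesmer subcode one minimum-weight direction at a time, exactly as in the proofs of Theorems \ref{Enhanced_GriesmerNound_q nmid} and \ref{Th. Enhanced_GriesmerNound_q mid d_based}, but now passing through a residual code of codimension $k_2$ rather than $1$. Fix an integer $k_2$ with $1\le k_2<k_1$. By Proposition \ref{Generalized_Griesmer_code_code_chain}, $C$ contains a $k_2$-dimensional subcode $C_{k_2}$ with $n(C_{k_2})=g_q(k_2,d)$; since its effective length meets the Griesmer bound it is a $[g_q(k_2,d),k_2,d]_q$ Griesmer code, so $A_d(C_{k_2})\ge e(g_q(k_2,d),k_2,q)$. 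Setting $A=\mathrm{Supp}(C_{k_2})$, the punctured code $\Upsilon_A(C)$ has length $n-g_q(k_2,d)$ and dimension $k-k_2$, and the minimum-distance computation in the preamble of Proposition \ref{Generalized_Griesmer_code_code_chain}, combined with the Griesmer-violation facts established there (valid because $k_2<k_1$), shows that $\Upsilon_A(C)$ is distance-optimal with $d(\Upsilon_A(C))=\lceil d/q^{k_2}\rceil$. Consequently $A_{\lceil d/q^{k_2}\rceil}(\Upsilon_A(C))\ge e(n-g_q(k_2,d),k-k_2,q)$, which furnishes at least $e(n-g_q(k_2,d),k-k_2,q)/(q-1)$ distinct projective minimum-weight directions in the residual.

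Next I would lift each such direction. The kernel of the puncturing map $C\to\Upsilon_A(C)$ is exactly $C_{k_2}$, since both sides have the correct dimension, so a minimum-weight codeword $\mathbf{c}_r$ of $\Upsilon_A(C)$ is the image of a coset $\mathbf{c}'+C_{k_2}$, and $\left \langle C_{k_2},\mathbf{c}'\right \rangle$ has effective length $g_q(k_2,d)+\lceil d/q^{k_2}\rceil=g_q(k_2+1,d)$. Hence it is a $[g_q(k_2+1,d),k_2+1,d]_q$ Griesmer code and carries at least $e(g_q(k_2+1,d),k_2+1,q)$ minimum-weight codewords. The weight-$d$ codewords of this lift lying outside $C_{k_2}$ project onto nonzero scalar multiples of $\mathbf{c}_r$, so lifts attached to distinct projective directions contribute disjoint families of weight-$d$ codewords of $C$.

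The crux of the argument is to show that each lift contributes at least $\varsigma_q(k_2,d)=e(g_q(k_2+1,d),k_2+1,q)-e(g_q(k_2,d),k_2,q)$ genuinely new minimum-weight codewords outside $C_{k_2}$. Summing this increment over the $e(n-g_q(k_2,d),k-k_2,q)/(q-1)$ directions and adding the $e(g_q(k_2,d),k_2,q)$ codewords already present in $C_{k_2}$ would give
\[
A_d(C)\ge e(g_q(k_2,d),k_2,q)+\frac{e(n-g_q(k_2,d),k-k_2,q)}{q-1}\,\varsigma_q(k_2,d),
\]
and maximizing over $1\le k_2<k_1$ yields the stated bound. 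I expect this increment estimate to be the main obstacle: the per-direction contribution is literally $A_d(\left \langle C_{k_2},\mathbf{c}'\right \rangle)-A_d(C_{k_2})$, and while $A_d(\left \langle C_{k_2},\mathbf{c}'\right \rangle)\ge e(g_q(k_2+1,d),k_2+1,q)$ is immediate, one must control $A_d(C_{k_2})$ so that this difference is at least $\varsigma_q(k_2,d)$ for every direction simultaneously; this is where the Griesmer minihyper structure of the $(k_2+1)$-dimensional lift over its codimension-one subcode $C_{k_2}$ has to be invoked.

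Finally, the corrections $\Delta$ and $\Delta'$ arise exactly as in Theorem \ref{Enhanced_GriesmerNound_q nmid}: the number of minimum-weight codewords of $\Upsilon_A(C)$ is divisible by $q-1$ and the accumulated increment is a multiple of $\varsigma_q(k_2,d)$, so the raw inequality can be tightened by replacing the two counts with the smallest admissible integers meeting these congruences, which is precisely the role of the self-referential $\Delta$ and $\Delta'$ in Equation (\ref{Eq_Improved_Griesmer_type_bound_k_Residue}).
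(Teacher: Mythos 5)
Your architecture coincides with the paper's own proof: Proposition \ref{Generalized_Griesmer_code_code_chain} supplies the subcode $C_{k_2}$ with $n(C_{k_2})=g_q(k_2,d)$; puncturing on $A=\mathrm{Supp}(C_{k_2})$ gives a $(k-k_2)$-dimensional Griesmer-optimal residual; each projective minimum-weight direction of the residual lifts to a $[g_q(k_2+1,d),k_2+1,d]_q$ Griesmer code containing $C_{k_2}$; and the corrections $\Delta,\Delta'$ are handled exactly as in Theorem \ref{Enhanced_GriesmerNound_q nmid}. However, the proposal stops at the one step that carries the entire content of the bound: you assert that each lifted direction must contribute at least $\varsigma_q(k_2,d)$ new weight-$d$ codewords, explicitly call this "the main obstacle," and defer it to an unspecified minihyper argument. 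That is a genuine gap, not a deferred detail: without the increment estimate, the displayed summation in your sketch is an unproven inequality and the theorem does not follow from what you have written.

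For comparison, the paper closes this step with no minihyper input at all. From the two facts that the residual has at least $e(n-g_q(k_2,d),k-k_2,q)$ minimum-weight codewords (hence at least $e(n-g_q(k_2,d),k-k_2,q)/(q-1)$ projective directions) and that each lift has at least $e(g_q(k_2+1,d),k_2+1,q)$ weight-$d$ codewords, it directly writes the "constraint relationship"
\[
\frac{e(n-g_q(k_2,d),k-k_2,q)}{A_d(C)-e(g_q(k_2,d),k_2,q)}\;\le\;\frac{q-1}{\varsigma_q(k_2,d)},
\]
measuring the per-direction surplus against the baseline $e(g_q(k_2,d),k_2,q)$ rather than against $A_d(C_{k_2})$. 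Your instinct that one must control $A_d(C_{k_2})$ from above is exactly right and is the delicate point: the surplus of a single lift over $C_{k_2}$ is $A_d(\langle C_{k_2},\mathbf{c}'\rangle)-A_d(C_{k_2})\ge e(g_q(k_2+1,d),k_2+1,q)-A_d(C_{k_2})$, which equals or exceeds $\varsigma_q(k_2,d)$ only when $A_d(C_{k_2})$ attains the minimum $e(g_q(k_2,d),k_2,q)$; the paper's one-line ratio passes over this case analysis rather than resolving it by geometric structure. So the completion you are missing is the paper's counting step (and, if one wants it airtight, an argument for why the excess of $A_d(C_{k_2})$ over $e(g_q(k_2,d),k_2,q)$ compensates the reduced per-direction surplus), not a minihyper classification; as submitted, the proposal identifies the right skeleton but does not prove the theorem.
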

	\begin{proof}
		According to Proposition \ref{Generalized_Griesmer_code_code_chain}, we can find $k_1$ bases vectors  $\mathbf{c}_{1}$, $\mathbf{c}_{2}$, $\ldots$, $\mathbf{c}_{{k_1}}$ of $C$, such that \begin{equation}
			n(\left \langle  \mathbf{c}_1,\mathbf{c}_2,\ldots,\mathbf{c}_{k_2} \right \rangle )=g_{q}(k_2,d)
		\end{equation}
		holds for all $2\le k_2< k_1$.
		This is, $\left \langle  \mathbf{c}_1,\mathbf{c}_2,\ldots,\mathbf{c}_{k_2} \right \rangle$ is a $[g_{q}(k_2,d),k_2,d]_q$ Griesmer code.
		Let $C_{{k_2}}=\left \langle  \mathbf{c}_1,\mathbf{c}_2,\ldots,\mathbf{c}_{k_2} \right \rangle$ and $A  =\mathrm{Supp}(C_{{k_2}})$.
		Since $k_2<k_1$, the $\left[n-g_{q}(k_2,d),k-k_2,\left\lceil \frac{d}{q} \right\rceil+1\right]_q$ linear code violates the Griesmer bound, and $\Upsilon _{A  }(C) $ is an $\left[n-g_{q}(k_2,d),k-k_2,\left\lceil \frac{d}{q} \right\rceil\right]_q$ Griesmer code.
		It follows that, for any codeword $\mathbf{c}$ of $C\setminus C_{{k_2}}$, $\Upsilon _{A  }(\mathbf{c})$ has weight $\left\lceil \frac{d}{q^{k_2}} \right\rceil$ if and only if $n(\left \langle  \mathbf{c}_1,\mathbf{c}_2,\ldots,\mathbf{c}_{k_2},\mathbf{c} \right \rangle)=g_q(k_2+1,d)$.
		At this time, $\left \langle  \mathbf{c}_1,\mathbf{c}_2,\ldots,\mathbf{c}_{k_2},\mathbf{c} \right \rangle$ is an $[g_{q}(k_2+1,d),k_2+1,d]_q$ Griesmer code with error coefficient at least $e(g_q(k_2+1,d),k_2+1,q)$.
		Therefore, we have the following constrains relationship
		\begin{equation}
			\frac{e(n-g_q(k_2,d),k-k_2,q)}{A_d(C) - e(g_q(k_2,d),k_2,q)}	\le
			\frac{q-1}{e(g_q(k_2+1,d),k_2+1,q) - e(g_q(k_2,d),k_2,q)}=\frac{q-1}{\varsigma_q(k_2,d)}.
		\end{equation}
		
		Since $(q-1)\mid e\left(\Upsilon _{A  }(C) \right)$, we can directly derive that,
		\begin{equation}
			A_d(C)\ge	\frac{\left( e(n-g_q(k_2,d),k-k_2,q)+ \Delta\right)\varsigma_q(k_2,d)}{q-1}
			+ e(g_q(k_2,d),k_2,q)+\Delta^{\prime}.
		\end{equation}
		As this formula holds for $1\le k_2\le k_1$,  we get Equation (\ref{Eq_Improved_Griesmer_type_bound_k_Residue}).
		Hence, we complete the proof.
	\end{proof}

	The implementation of Theorem \ref{k_Residue_Grismer_Bound} depends on the special subcode structure of Griesmer optimal linear codes.
	If $k_2=1$, then Theorems \ref{Enhanced_GriesmerNound_q nmid} and \ref{k_Residue_Grismer_Bound} are the same. 
	When we have more information about AFER-optimal linear codes of low dimensions, Theorem \ref{k_Residue_Grismer_Bound} may perform better than Theorems \ref{Enhanced_GriesmerNound_q nmid} and \ref{Th. Enhanced_GriesmerNound_q mid d_based}, as shown in the following example.
	
	\begin{example}
		Set $q=2$.
		According to Theorem \ref{Griesmer_Bound}, $[15s_2+12,4]_2$ Griesmer optimal linear code $C$ has minimum distance $8s_2+6$.
		Since $\Gamma _2(15s_2+12,4,8s_2+6)=4\ge 2$, by Theorems \ref{Enhanced_GriesmerNound_q nmid} and \ref{Th. Enhanced_GriesmerNound_q mid d_based}, we have $$e(15s_2+12,4,2)\ge \max\{2e_{4s_2+3}(7s_2+6,3,2)+1,e_{4s_2+3}(7s_2+6,3,2)+e(14s_2+11,3,2)+1 \}=\max\{9,11\}=11.$$
		
		By Theorem \ref{k_Residue_Grismer_Bound}, we have
		$$e(15s_2+12,4,2)\ge e(3s_2+3,2,2)\varsigma_2(2,8s_2+6)+e(12s_2+8,3,2)=12 $$ for $k_2=2$.
		Therefore, at this time, Theorem \ref{k_Residue_Grismer_Bound} has better performance.
	\end{example}


	\subsection{Iterative lower bound V}
	Since Griesmer codes and Griesmer minihypers have a one-to-one correspondence, we provide a lower bound for the error coefficients of Griesmer codes by characterizing Griesmer minihypers.
	
	The following lemma reveals how the minihyper's rank acts on the error coefficient.
	
	\begin{lemma}\label{minihyper_kissing_number}
		Suppose $\Im$  is an $\{f, m ; k-1, q\}$-minihyper with rank $k^{\prime}<k$.
		Let $\Im^{\prime}$ be an $\{f, m ; k^{\prime}-1, q\}$-minihyper such that $ \Im^{\prime}= \Im \bigcap PG(k^{\prime}-1,q)$ and $\mathrm{Rank}(\Im^{\prime})=k^{\prime}$.
		Then, we have
		\begin{equation}
			\mathfrak{e}(\Im)=q^{k-k^{\prime}}\mathfrak{e}(\Im^{\prime}).
		\end{equation}
	\end{lemma}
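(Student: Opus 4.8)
The plan is to work entirely on the geometric (minihyper) side, using the definition $\mathfrak{e}(\Im)=|\{H\mid \Im(H)=m\}|$ and reducing the statement to counting hyperplanes; the identification with the associated code $C_{\eth_s(\Im)}$ is not needed. Since $\mathrm{Rank}(\Im)=k'<k$, the points of nonzero multiplicity span a $(k'-1)$-flat $\Pi\cong PG(k'-1,q)$, which by hypothesis is the copy of $PG(k'-1,q)$ with $\Im'=\Im\cap\Pi$. After a change of coordinates I would take $\Pi$ to be the set of points whose last $k-k'$ coordinates vanish. The decisive observation is that $\Im(H)$ depends only on $H\cap\Pi$, because every point carrying nonzero multiplicity lies in $\Pi$; thus $\Im(H)=\Im'(H\cap\Pi)$ whenever $H\cap\Pi$ is a hyperplane of $\Pi$.

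First I would partition the hyperplanes $H$ of $PG(k-1,q)$, equivalently the nonzero functionals $a=(a_1,\dots,a_k)$ up to scaling, according to whether $(a_1,\dots,a_{k'})=\mathbf{0}$ or not. If $(a_1,\dots,a_{k'})=\mathbf{0}$ then $H\supseteq\Pi$ and $\Im(H)=f$. Otherwise $H\cap\Pi$ is a genuine hyperplane $H'$ of $\Pi$, defined by the functional $(a_1,\dots,a_{k'})$, and $\Im(H)=\Im'(H')$. Because $\mathrm{Rank}(\Im')=k'$, the support of $\Im'$ spans $\Pi$, so no hyperplane of $\Pi$ contains all its points; hence $\Im'(H')<f$ for every $H'$, giving $m<f$. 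Consequently the hyperplanes of the first class (value $f$) never attain the minimum, and a hyperplane $H$ of the large space satisfies $\Im(H)=m$ if and only if it lies in the second class with $\Im'(H\cap\Pi)=m$.

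Next I would count the fibres of the assignment $H\mapsto H'=H\cap\Pi$ on the second class. Fixing a hyperplane $H'$ of $\Pi$ with defining functional $a'=(a_1,\dots,a_{k'})$, the functionals on $\mathbb{F}_q^k$ whose restriction to the first $k'$ coordinates is a nonzero multiple of $a'$ are exactly those with $(a_1,\dots,a_{k'})\in\{\lambda a'\mid\lambda\in\mathbb{F}_q^*\}$ and $(a_{k'+1},\dots,a_k)$ arbitrary, which is $(q-1)q^{k-k'}$ functionals. This set is closed under scaling, so modulo scalars it yields $q^{k-k'}$ distinct hyperplanes $H$, each satisfying $H\cap\Pi=H'$. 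Summing over the $\mathfrak{e}(\Im')$ minimum hyperplanes $H'$ of $\Pi$ then gives $\mathfrak{e}(\Im)=q^{k-k'}\,\mathfrak{e}(\Im')$.

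The only genuinely delicate point is the fibre count in the last step: one must pass carefully between nonzero functionals and hyperplanes (the factor $q-1$ cancels between numerator and denominator) and verify that every scalar multiple of an admissible functional remains admissible, so that the fibres all have the same size $q^{k-k'}$. Once coordinates are fixed and the two classes of hyperplanes are separated, everything else is routine bookkeeping.
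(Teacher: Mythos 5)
Your proof is correct and follows essentially the same route as the paper's (much terser) argument: the paper simply puts $[\Im]$ into the block form $\bigl(\begin{smallmatrix}[\Im']\\ \mathbf{0}\end{smallmatrix}\bigr)$ so that every hyperplane's value depends only on its restriction to the first $k'$ coordinates, and reads off the factor $q^{k-k'}$ from the resulting fibre count. Your write-up just makes explicit the two points the paper leaves implicit — that hyperplanes containing the span of the support take the value $f>m$ and so never attain the minimum, and that each hyperplane of $\Pi$ has exactly $q^{k-k'}$ preimages — both of which you handle correctly.
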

	\begin{proof}
		Since $\Im$ is an $\{f, m ; k-1, q\}$-minihyper with rank $k^{\prime}<k$, after the appropriate row transformation, we have
		\begin{equation}
			[\Im]=\left(  \begin{array}{cc}
				[\Im^{\prime}]   \\
				\mathbf{0}_{(k-k^{\prime})\times f}
			\end{array} \right).
		\end{equation}
		It follows that $\mathfrak{e}(\Im)=q^{k-k^{\prime}}\mathfrak{e}(\Im^{\prime})$.
	\end{proof}
	
	Given minihyper with special parameters, we establish an upper bound on its rank, as shown in the following lemma.
	\begin{lemma}\label{Lem_special_minihyper}
		The maximum rank of $\{n,1;k-1,q\}$-minihyper $\Im$ is $n-1$.
	\end{lemma}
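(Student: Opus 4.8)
The plan is to read the statement as an upper bound together with its attainment: I will first show that every $\{n,1;k-1,q\}$-minihyper $\Im$ satisfies $\mathrm{Rank}(\Im)\le n-1$, and then exhibit a minihyper realizing this value. Recall that $[\Im]$ is the $k\times n$ matrix whose columns are representatives of the $n$ points of $\Im$ counted with multiplicity, so that $\mathrm{Rank}(\Im)=\mathrm{Rank}([\Im])\le n$; the whole content is to rule out equality using the blocking condition $\Im(H)\ge m=1$.

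For the upper bound I would argue by contradiction. If $n>k$ the bound is immediate, since then $\mathrm{Rank}(\Im)\le k\le n-1$, so assume $n\le k$ and suppose $\mathrm{Rank}(\Im)=n$. Then the $n$ columns of $[\Im]$ are linearly independent; in particular every point occurs with multiplicity one, and after a change of basis of $\mathbb{F}_q^{k}$ (which maps hyperplanes to hyperplanes and hence preserves all three defining conditions of $\Im$) we may take the points to be the standard basis vectors $\mathbf{e}_1,\ldots,\mathbf{e}_n$. Consider the hyperplane $H_{\mathbf a}=\{\mathbf x:\mathbf a\cdot\mathbf x=0\}$ associated with $\mathbf a=(1,\ldots,1,0,\ldots,0)$, having a $1$ in its first $n$ coordinates. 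Since $\mathbf a\cdot\mathbf e_i=1\ne 0$ for every $i\le n$, no point of $\Im$ lies on $H_{\mathbf a}$, whence $\Im(H_{\mathbf a})=0<1=m$, contradicting condition (2). Thus $\mathrm{Rank}(\Im)\le n-1$. The one observation that makes this work is that a full-rank set of $n$ points is projectively a frame, and a frame always admits an all-nonzero functional whose hyperplane misses every point; the hypothesis $m=1$ is exactly what forbids such a missing hyperplane. I expect this contradiction to be the crux, although it is short once the normal form $\{\mathbf e_1,\ldots,\mathbf e_n\}$ is in place; the only care needed is justifying the reduction to standard coordinates and treating $n>k$ separately.

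For attainment of the value $n-1$ I would give an explicit minihyper; in the binary setting that is the focus of this paper this is clean via the weight/blocking dictionary. Taking $\mathbf h\in\mathbb{F}_2^{n}$ of odd weight with $3\le wt(\mathbf h)\le n$, the code $\mathbf h^{\perp}\subseteq\mathbb{F}_2^{n}$ is $[n,n-1]$ and contains no all-one word, so every hyperplane is blocked, yet it contains a word of weight $n-1$, so $m=1$; moreover the $n$ columns of a generator matrix of $\mathbf h^{\perp}$ are nonzero, pairwise distinct, and span $\mathbb{F}_2^{n-1}$, hence form an $\{n,1;k-1,2\}$-minihyper of rank exactly $n-1$ once embedded in $PG(k-1,2)$ with $k\ge n-1$. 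The main obstacle here is purely book-keeping, namely verifying that this candidate satisfies all three defining properties and has the claimed rank, which is precisely why I would phrase attainment through this weight/blocking correspondence rather than through an ad hoc point configuration.
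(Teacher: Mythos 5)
Your upper-bound argument is correct and is essentially the paper's own proof written out in full: the paper simply asserts that a rank-$n$ multiset of $n$ points would be an $\{n,0;k-1,q\}$-minihyper, i.e.\ some hyperplane misses every point, which is exactly your observation that a projective frame $\{\mathbf{e}_1,\ldots,\mathbf{e}_n\}$ is avoided by the hyperplane of the all-ones functional, contradicting $m=1$. The attainment half goes beyond the paper, which proves (and later uses) only the inequality $\mathrm{Rank}(\Im)\le n-1$; your construction there is fine as far as it goes, but note it only covers $q=2$ with $k\ge n-1$ and $n\ge 3$, so it does not establish sharpness in the full generality the lemma's phrasing might suggest.
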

	\begin{proof}
		If $\Im$ has rank $n$, then $\Im$ will be an $\{n,0;k-1,q\}$-minihyper. This leads to a contradiction.
	\end{proof}
	
	In \cite{helleseth1984further,hamada1985characterization,HAMADA1993229,landjev2007weighted}, the authors characterized properties of special Griesmer minihypers, which will also help us to determine the maximum rank of associated minihyper.
	If the rank of Griesmer minihyper is known, then we have the following theorem.
	
	\begin{theorem}\label{Theo_minihyper_rank_bound}
		Let $C$ be an $[n,k,d]_q$ Griesmer code, and let $s=\left\lceil  \frac{d}{q^{k-1}} \right\rceil$.
		If the $\{sv_k-n, sv_{k-1}-n+d; k-1, q\}$-minihyper $\Im$ has maximum rank $k^{\prime}<k$, then we have
		\begin{equation}
			A_d(C)\ge L_q^{(5)}(n,k):\triangleq q^{k-k^{\prime}}e(n+sv_{k^{\prime}}-sv_k ,k^{\prime},q).
		\end{equation}
	\end{theorem}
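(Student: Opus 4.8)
The plan is to translate the statement into the language of minihypers and reduce the rank-$k$ problem to a full-rank problem in dimension $k'$. Since $C$ is a Griesmer $[n,k,d]_q$ code, it is associated with the Griesmer minihyper $\Im$, so by the correspondence recalled above $A_d(C)=\mathfrak{e}(\Im)$. Because $\Im$ has rank $k'<k$, Lemma~\ref{minihyper_kissing_number} applies and produces a full-rank $\{f,m;k'-1,q\}$-minihyper $\Im^{\prime}=\Im\cap PG(k'-1,q)$ with $\mathfrak{e}(\Im)=q^{k-k'}\mathfrak{e}(\Im^{\prime})$. I would then introduce the code $C^{\prime}=C_{\eth_s(\Im^{\prime})}$ associated with $\Im^{\prime}$; it has length $sv_{k'}-f=n+sv_{k'}-sv_k$, dimension $k'$, minimum distance $d'=sq^{k'-1}-f+m$, and $\mathfrak{e}(\Im^{\prime})=A_{d'}(C^{\prime})$. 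Setting $\delta:=f-m=sq^{k-1}-d$, the reduction records the key identity $A_d(C)=q^{k-k'}A_{d'}(C^{\prime})$ together with $d'=sq^{k'-1}-\delta$.

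The heart of the argument is to show that $C^{\prime}$ is itself a Griesmer code, for then it is distance-optimal and $A_{d'}(C^{\prime})\ge e(n+sv_{k'}-sv_k,k',q)$ by the very definition of $e(\cdot,\cdot,\cdot)$. First I would record, using the elementary identity $\lceil A-x\rceil=A-\lfloor x\rfloor$ for integer $A$ and $x\ge 0$, the two evaluations $g_q(k,d)=sv_k-\sum_{i=0}^{k-1}\lfloor \delta/q^i\rfloor$ and $g_q(k',d')=sv_{k'}-\sum_{i=0}^{k'-1}\lfloor \delta/q^i\rfloor$. Since $C$ is a Griesmer code, $f=sv_k-n=sv_k-g_q(k,d)=\sum_{i=0}^{k-1}\lfloor \delta/q^i\rfloor\ge \sum_{i=0}^{k'-1}\lfloor \delta/q^i\rfloor$. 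On the other hand, applying the Griesmer bound (Theorem~\ref{Griesmer_Bound}) to $C^{\prime}$ gives $n+sv_{k'}-sv_k=sv_{k'}-f\ge g_q(k',d')=sv_{k'}-\sum_{i=0}^{k'-1}\lfloor \delta/q^i\rfloor$, that is $f\le \sum_{i=0}^{k'-1}\lfloor \delta/q^i\rfloor$. These two inequalities sandwich $f$, forcing $sv_{k'}-f=g_q(k',d')$, and hence $C^{\prime}$ meets the Griesmer bound with equality.

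Finally, combining $A_d(C)=q^{k-k'}A_{d'}(C^{\prime})$ with $A_{d'}(C^{\prime})\ge e(n+sv_{k'}-sv_k,k',q)$ yields $A_d(C)\ge q^{k-k'}e(n+sv_{k'}-sv_k,k',q)=L_q^{(5)}(n,k)$, as claimed. I expect the main obstacle to be exactly the Griesmer-optimality of $C^{\prime}$: a priori the restriction of a Griesmer minihyper to its linear span need not be a Griesmer minihyper, and the full-rank hypothesis in Lemma~\ref{minihyper_kissing_number} controls only $\mathfrak{e}$, not the length-distance profile. The sandwich argument above is what makes this step work, and it relies crucially on $C$ attaining the Griesmer bound with equality (so that $f$ equals the full sum $\sum_{i=0}^{k-1}\lfloor \delta/q^i\rfloor$). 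The same computation incidentally shows that equality forces $\sum_{i=k'}^{k-1}\lfloor \delta/q^i\rfloor=0$, i.e.\ $\delta<q^{k'}$, meaning the anti-expansion of $d$ is supported on its lowest $k'$ digits; this is the precise mechanism by which the rank-$k'$ condition on $\Im$ propagates to the minimum distance of $C^{\prime}$.
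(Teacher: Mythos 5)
Your proof follows the same route as the paper's: reduce to the full-rank minihyper $\Im^{\prime}=\Im\bigcap PG(k^{\prime}-1,q)$ via Lemma~\ref{minihyper_kissing_number} and then bound $\mathfrak{e}(\Im^{\prime})$ below by $e(n+sv_{k^{\prime}}-sv_k,k^{\prime},q)$. The only difference is that the paper asserts this final inequality directly, whereas your sandwich argument (comparing $f=\sum_{i=0}^{k-1}\lfloor \delta/q^i\rfloor$ with the Griesmer bound for the associated $k^{\prime}$-dimensional code) explicitly verifies that $C^{\prime}$ is itself a Griesmer, hence distance-optimal, code --- a step the paper leaves implicit but which is genuinely needed for the comparison with $e(\cdot,k^{\prime},q)$ to be valid.
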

	\begin{proof}
		Since $C$ is a Griesmer code, by Lemma \ref{Griesmer_Point_multiplicity}, all points in $C$ repeat at most $s$ times. Thus, $\eth(C)$ is an $\{sv_k-n, sv_{k-1}-n+d; k-1, q\}$-minihyper $\Im$.
		By Lemma \ref{minihyper_kissing_number}, if the $\{sv_k-n, sv_{k-1}-n+d; k-1, q\}$-minihyper $\Im$ has maximum rank $k^{\prime}<k$, then  we have $\mathfrak{e}(\Im)=q^{k-k^{\prime}}\mathfrak{e}(\Im^{\prime})$, where $ \Im^{\prime}= \Im \bigcap PG(k^{\prime}-1,q)$ and $\mathrm{Rank}(\Im^{\prime})=k^{\prime}$.
		It follows that \begin{equation}
			A_d(C)=\mathfrak{e}(\eth(C))=q^{k-k^{\prime}}\mathfrak{e}(\Im^{\prime})\ge q^{k-k^{\prime}}e(n+sv_{k^{\prime}}-sv_k ,k^{\prime},q).
		\end{equation}
	\end{proof}

	\begin{example}
		Set $q=2$. 
		Suppose $C$ is a $[15s_2+11,4,8s_2+5]_2$ linear code.
		Since $C$ is a Griesmer code, the associated minihyper $\Im$ of $C$ has parameters $\{ 4,1;3,2\}$.
		By Lemma \ref{Lem_special_minihyper}, $\mathrm{Rank}(\Im)\le 3$.
		According to Theorem \ref{Theo_minihyper_rank_bound}, we get $e(15s_2+11,4,2)\ge 2 e(7s_2+3,3,2)=6$.
	\end{example}
	
	\begin{remark}
		Our proposed bounds are iterative, progressing from low dimensions to high dimensions.
		\begin{itemize}
			\item General Iterative Bound: Theorem \ref{Enhanced_GriesmerNound_q nmid} provides a general iterative bound without any restrictions.
			\item Constrained Iterative Bounds: Theorems \ref{Theo_Extend_Code}, \ref{Th. Enhanced_GriesmerNound_q mid d_based}, \ref{k_Residue_Grismer_Bound}, and \ref{Theo_minihyper_rank_bound} require constraints on specific optimal types.
		\end{itemize}
		During the iteration process, information from low-dimensional AFER-optimal codes or associated Griesmer minihypers is also required. 
		The more information we obtain, the tighter the lower bounds on the error coefficients of Griesmer optimal linear codes provided by Theorems \ref{Enhanced_GriesmerNound_q nmid}-\ref{Theo_minihyper_rank_bound}.
	\end{remark}

	To help readers in utilizing the five bounds more conveniently, we present the following theorem. This theorem is directly derived by summarizing Theorems \ref{Enhanced_GriesmerNound_q nmid} through \ref{Theo_minihyper_rank_bound}. 
		Notably, the theorems present a progressive structure, differing primarily in their optimality and assumptions about known low-dimensional AFER-optimal linear codes. There is scope overlap in their applicability, for example when $k_2=1$, Theorems \ref{Enhanced_GriesmerNound_q nmid} and \ref{k_Residue_Grismer_Bound} agree, but the latter can derive better lower bounds on the error coefficient for other values of $k_2$. 
	Five theorems are structured in Theorem \ref{Theo_Mix_lower_bound} for utility: Theorems \ref{Enhanced_GriesmerNound_q nmid}-\ref{Theo_Extend_Code} have minimal assumptions on known  AFER-optimal linear codes, providing lower bounds on the error coefficients for all (Griesmer) optimal linear codes (though possibly loose). With more known information about AFER-optimal linear codes, Theorems \ref{Th. Enhanced_GriesmerNound_q mid d_based}–\ref{Theo_minihyper_rank_bound} yield tighter bounds.

	

	\begin{theorem}\label{Theo_Mix_lower_bound}
		With notation as above, let \( C \) be an \( [n,k,d]_q \) optimal linear code for $k\ge 3$. Then
		\begin{equation}\label{Eq_five_lower_bounds}
			e(n,k,q) \ge \max \left\{
			\begin{array}{cl}
				L_q^{(1)}(n,k), & \text{unconditional (Theorem \ref{Enhanced_GriesmerNound_q nmid})}, \\
				L_q^{(2)}(n,k), & \Gamma_q(n,k,d) < k \text{ (Theorem \ref{Theo_Extend_Code})}, \\
				L_q^{(3)}(n,k), & q \mid d \text{ and } \Gamma_q(n,k,d) \ge 2 \text{ (Theorem \ref{Th. Enhanced_GriesmerNound_q mid d_based})}, \\
				L_q^{(4)}(n,k), & \Gamma_q(n,k,d) \ge 2 \text{ (Theorem \ref{k_Residue_Grismer_Bound})}, \\
				L_q^{(5)}(n,k), & n = g_q(k,d) \text{ and the maximum rank of the associated Griesmer minihyper is known (Theorem \ref{Theo_minihyper_rank_bound})},
			\end{array}
			\right.
		\end{equation}
	where the first bound applies to any optimal linear code but performs better for those that are optimal under the one-step Griesmer bound (Remark \ref{one-step Griesmer bound}); the remaining four bounds apply only to Griesmer optimal codes, whose specific optimality is characterized by \( \Gamma_q(n,k,d) \).
	\end{theorem}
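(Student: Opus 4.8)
The plan is to treat this statement as a consolidation of the five preceding theorems rather than as a result requiring new machinery: each of Theorems \ref{Enhanced_GriesmerNound_q nmid}--\ref{Theo_minihyper_rank_bound} already supplies one of the five candidate quantities $L_q^{(i)}(n,k)$ under a precise hypothesis, so the task reduces to verifying that the conditions listed in the right-hand column of Equation~(\ref{Eq_five_lower_bounds}) are exactly those under which the corresponding theorem is valid, and then observing that the maximum of finitely many valid lower bounds is again a valid lower bound. First I would recall the elementary but essential translation: since $e(n,k,q)=\min\{A_d(C)\mid C\text{ is an }[n,k,d(n,k,q)]_q\text{ linear code}\}$, any inequality of the form $A_d(C)\ge L$ that holds for \emph{every} optimal $C$ immediately yields $e(n,k,q)\ge L$. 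Thus the per-code bounds proved in Theorems \ref{Enhanced_GriesmerNound_q nmid}, \ref{Theo_Extend_Code}, \ref{k_Residue_Grismer_Bound}, and \ref{Theo_minihyper_rank_bound} descend to bounds on the extremal quantity $e(n,k,q)$, matching the direct bound on $e(n,k,q)$ already proved in Theorem \ref{Th. Enhanced_GriesmerNound_q mid d_based}.

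Next I would step through the conditions one at a time, using $\Gamma_q(n,k,d)$ from Definition \ref{def_Griesmer optimal} as the single parameter governing admissibility. The bound $L_q^{(1)}(n,k)$ from Theorem \ref{Enhanced_GriesmerNound_q nmid} carries no optimality restriction, so it is always admissible; as noted in Remark \ref{one-step Griesmer bound}, it is informative precisely when the relevant residual code is itself optimal. For the remaining four: by Lemma \ref{Lem_critical length-optimal}, the hypothesis $\Gamma_q(n,k,d)<k$ forces $q\mid d$, which is exactly the regime of Theorem \ref{Theo_Extend_Code} giving $L_q^{(2)}$; the hypotheses $q\mid d$ together with $\Gamma_q(n,k,d)\ge 2$ are those of Theorem \ref{Th. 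Enhanced_GriesmerNound_q mid d_based} giving $L_q^{(3)}$; the weaker hypothesis $\Gamma_q(n,k,d)\ge 2$ alone drives Theorem \ref{k_Residue_Grismer_Bound} and $L_q^{(4)}$; and $n=g_q(k,d)$ with a known maximum rank of the associated Griesmer minihyper is the setting of Theorem \ref{Theo_minihyper_rank_bound} giving $L_q^{(5)}$. In each case I would confirm that the cited theorem applies verbatim, so that whenever a condition in Equation~(\ref{Eq_five_lower_bounds}) holds, the corresponding $L_q^{(i)}(n,k)$ is a genuine lower bound for $e(n,k,q)$.

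Finally, since the right-hand side of Equation~(\ref{Eq_five_lower_bounds}) is the maximum taken only over those rows whose conditions are satisfied, and every such row contributes a valid lower bound, the maximum is itself a lower bound and the inequality follows. The only point demanding care --- and the one I would flag as the main, if modest, obstacle --- is bookkeeping rather than mathematics: I must make sure the conditions as transcribed in Equation~(\ref{Eq_five_lower_bounds}) are neither stronger nor weaker than those actually used in the source theorems, for instance that the standing requirement $k\ge 3$ and the role of $k_1=\min\{\Gamma_q(n,k,d),k-1\}$ are consistent across Theorems \ref{Th. Enhanced_GriesmerNound_q mid d_based}--\ref{Theo_minihyper_rank_bound}, so that no row is invoked outside its proven range. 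Once this consistency check is complete, no further argument is needed.
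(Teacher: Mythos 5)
Your proposal is correct and matches the paper's treatment: the paper gives no separate proof of this theorem, stating only that it is directly derived by summarizing Theorems \ref{Enhanced_GriesmerNound_q nmid} through \ref{Theo_minihyper_rank_bound}, which is precisely the consolidation argument you lay out (each row's condition is the hypothesis of the corresponding theorem, per-code bounds descend to $e(n,k,q)$, and the maximum of valid lower bounds is a valid lower bound). Your explicit bookkeeping check of the admissibility conditions is, if anything, more careful than what the paper records.
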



	Numerical results indicate that our bounds are tight for binary linear codes, as shown in Tables \ref{Three_Binary_AFER}, \ref{Four_Binary_AFER}, and \ref{Five_Binary_AFER}. These tables include generator matrices or multisets of related AFER-optimal linear codes. Additional codes are listed in Appendix. Specifically, our bounds are tight for dimensions $k < 5$. When $k = 5$, our bounds are not tight in only a few cases, and these codes are not included in Table \ref{Five_Binary_AFER}. In the next section, we will determine the parameters of these unlisted AFER-optimal linear codes and evaluate the performance of our bounds in these cases.




	\begin{remark}
		It should be noted that the bounds presented in this section can directly determine that Griesmer optimal linear codes in \cite{abdullah2023some,abdullah2023new} are AFER-optimal.
		According to the calculation results of linear programming bounds listed in Table I of \cite{abdullah2023some} and Table II of \cite{abdullah2023new}, our bounds are tighter than those of linear programming bound \cite{sole2021linear} in those cases.
	\end{remark}

	\begin{table*}[ht]
		\caption{Three-dimensional AFER-optimal binary linear codes with $s_1\ge 1$ and $s_2\ge 0$}\label{Three_Binary_AFER}
		\centering
				\begin{tabular}{cccc}
					\toprule
					Lengths  &       Parameters        &                    Constructions                     & AFER-Optimality  \\ \midrule
					$7s_1$  &   $[7s_1,3,4s_1;7]_2$   &               $s_1\cdot \mathcal{P}_{[3]}$               & Theorem \ref{Theo_Mix_lower_bound},  Case 1          \\
					$7s_1+1$ &  $[7s_1+1,3,4s_1;3]_2$  &         $s_1\cdot \mathcal{P}_{[3]}\bigcup  \mathcal{P}_{I_1}$         &  Theorem \ref{Theo_Mix_lower_bound},  Case 1      \\
					$7s_1+2$ &  $[7s_1+2,3,4s_1;1]_2$  &         $s_1\cdot \mathcal{P}_{[3]}\bigcup  \mathcal{P}_{I_2}$         &  Theorem \ref{Theo_Mix_lower_bound},  Case 1      \\
					$7s_2+3$ & $[7s_2+3,3,4s_2+1;3]_2$ &         $s_2\cdot \mathcal{P}_{[3]}\bigcup  \mathcal{P}_{I_3}$         & Theorem \ref{Theo_Mix_lower_bound},  Case 1                      \\
					$7s_2+4$ & $[7s_2+4,3,4s_2+2;6]_2$ & $s_1\cdot \mathcal{P}_{[3]}\setminus  \mathcal{P}_{[2]}$ & Theorem \ref{Theo_Mix_lower_bound},  Case 3                               \\
					$7s_2+5$ & $[7s_2+5,3,4s_2+2;2]_2$ &    $s_1\cdot \mathcal{P}_{[3]}\setminus  \mathcal{P}_{I_2}$    &               Theorem \ref{Theo_Mix_lower_bound},  Case 2                 \\
					$7s_2+6$ & $[7s_2+6,3,4s_2+3;4]_2$ &    $s_1\cdot \mathcal{P}_{[3]}\setminus  \mathcal{P}_{I_1}$    &         Theorem \ref{Theo_Mix_lower_bound},  Case 1                       \\ \bottomrule
				\end{tabular}
		\end{table*}

		\begin{table*}[ht]
			\caption{Four-dimensional AFER-optimal binary linear codes with $s_1\ge 1$ and $s_2\ge 0$}\label{Four_Binary_AFER}
			\centering
			\begin{tabular}{cccc}
				\toprule
				Lengths   &         Parameters         &                                    Constructions                                    &             AFER-Optimality              \\ \midrule
				$15s_1$   &   $[15s_1,4,8s_1;15]_2$    &                              $s_1\cdot \mathcal{P}_{[4]}$                               &          Theorem \ref{Theo_Mix_lower_bound},  Case 1          \\
				$15s_1+1$  &   $[15s_1+1,4,8s_1;7]_2$   &                        $s_1\cdot \mathcal{P}_{[4]}\bigcup  \mathcal{P}_{I_1}$                         &          Theorem \ref{Theo_Mix_lower_bound},  Case 1          \\
				$15s_1+2$  &   $[15s_1+2,4,8s_1;3]_2$   &                        $s_1\cdot \mathcal{P}_{[4]}\bigcup  \mathcal{P}_{I_2}$                         &          Theorem \ref{Theo_Mix_lower_bound},  Case 1          \\
				$15s_1+3$  &   $[15s_1+3,4,8s_1;1]_2$   &                        $s_1\cdot \mathcal{P}_{[4]}\bigcup  \mathcal{P}_{I_3}$                         &          Theorem \ref{Theo_Mix_lower_bound},  Case 1          \\
				$15s_2+4$  &  $[15s_2+4,4,8s_2+1;4]_2$  &                        $s_2\cdot \mathcal{P}_{[4]}\bigcup  \mathcal{P}_{I_4}$                         &          Theorem \ref{Theo_Mix_lower_bound},  Case 1          \\
				$15s_2+5$  & $[15s_2+5,4,8s_2+2;10]_2$  &                        $s_2\cdot \mathcal{P}_{[4]}\bigcup  \mathcal{P}_{T_4}$                         & Theorem \ref{Theo_Mix_lower_bound},  Case 3 \\
				$15s_2+6$  &  $[15s_2+6,4,8s_2+2;3]_2$  &        $s_2\cdot \mathcal{P}_{[4]}\bigcup  \mathcal{P}_{T_4}^{\prime}$       &           Theorem \ref{Theo_Mix_lower_bound},  Case 2          \\
				$15s_2+7$  &  $[15s_2+7,4,8s_2+3;7]_2$  & $s_1\cdot \mathcal{P}_{[4]}\setminus (\mathcal{P}_{[3]} \bigcup  \mathcal{P}_{\{4\}}) $ &          Theorem \ref{Theo_Mix_lower_bound},  Case 1          \\
				$15s_2+8$  & $[15s_2+8,4,8s_2+4;14]_2$  &                $s_1\cdot \mathcal{P}_{[4]}\setminus  \mathcal{P}_{[3]}$                 & Theorem \ref{Theo_Mix_lower_bound},  Case 3 \\
				$15s_2+9$  &  $[15s_2+9,4,8s_2+4;6]_2$  &      $s_1\cdot \mathcal{P}_{[4]}\setminus  (\mathcal{P}_{[3]} \setminus \mathcal{P}_{I_1}) $       &             Theorem \ref{Theo_Mix_lower_bound},  Case 2            \\
				$15s_2+10$ & $[15s_2+10,4,8s_2+4;2]_2$  &      $s_1\cdot \mathcal{P}_{[4]}\setminus  (\mathcal{P}_{[3]} \setminus \mathcal{P}_{I_2}) $       &           Theorem \ref{Theo_Mix_lower_bound},  Case 2           \\
				$15s_2+11$ & $[15s_2+11,4,8s_2+5;6]_2$  &      $s_1\cdot \mathcal{P}_{[4]}\setminus  (\mathcal{P}_{[3]} \setminus \mathcal{P}_{I_3}) $       &         Theorem \ref{Theo_Mix_lower_bound},  Case 5          \\
				$15s_2+12$ & $[15s_2+12,4,8s_2+6;12]_2$ &                $s_1\cdot \mathcal{P}_{[4]}\setminus  \mathcal{P}_{[2]} $                 &             Theorem \ref{Theo_Mix_lower_bound},  Case 4            \\
				$15s_2+13$ & $[15s_2+13,4,8s_2+6;4]_2$  &              $s_1\cdot \mathcal{P}_{[4]}\setminus  \mathcal{P}_{I_2} $               &           Theorem \ref{Theo_Mix_lower_bound},  Case 2            \\
				$15s_2+14$ & $[15s_2+14,4,8s_2+7;8]_2$  &              $s_1\cdot \mathcal{P}_{[4]}\setminus  \mathcal{P}_{I_1} $               &           Theorem \ref{Theo_Mix_lower_bound},  Case 1           \\ \bottomrule
			\end{tabular}
			\begin{tablenotes}    
				\footnotesize               
				\item[1]	To simplify, $[\mathcal{P}_{I_t}]=I_t$, $\mathcal{P}_{T_t}=\mathcal{P}_{I_t}\bigcup  \{ \mathbf{1}_t^T\}$, and
				$\mathcal{P}_{T_4}^{\prime}=\mathcal{P}_{I_4}\bigcup  \{ (1,0,1,1)^T,(0,1,1,1)^T\}$.
			\end{tablenotes}            
		\end{table*}
		

		\begin{table*}[ht]
			\caption{Five-dimensional AFER-optimal binary linear codes with $s_1\ge 1$ and $s_2\ge 0$}\label{Five_Binary_AFER}
			\centering	
			\begin{tabular}{cccc}
				\toprule
				Length   &         Parameters          &                                  Constructions                                  &               AFER-Optimality               \\\midrule
				8      &        $[8,5,2;1]_2$         &                          $G_{[8,5,2;1]_2}$                           &                                                Theorem \ref{Theo_Mix_lower_bound},  Case 1      \\
				12     &        $[12,5,4;1]_2$        &                         $	G_{[12,5,4;1]_2}$                     &            Theorem \ref{Theo_Mix_lower_bound},  Case 1            \\
				9      &        $[9,5,3;4]_2$        &                                $G_{[9,5,3;4]_2}$                                &                    Theorem \ref{Theo_Mix_lower_bound},  Case 1                    \\
				$31s_1$     &   $[31s_1,5,16s_1;31]_2$    &                            $s_1\cdot  \mathcal{P}_{[5]}$                             &           Theorem \ref{Theo_Mix_lower_bound},  Case 1            \\
				$31s_1+1$    &  $[31s_1+1,5,16s_1;15]_2$   &                      $s_1\cdot  \mathcal{P}_{[5]}\bigcup  \mathcal{P}_{I_1}$                       &           Theorem \ref{Theo_Mix_lower_bound},  Case 1            \\
				$31s_1+2$    &   $[31s_1+2,5,16s_1;7]_2$   &                      $s_1\cdot  \mathcal{P}_{[5]}\bigcup  \mathcal{P}_{I_2}$                       &           Theorem \ref{Theo_Mix_lower_bound},  Case 1            \\
				$31s_1+3$    &   $[31s_1+3,5,16s_1;3]_2$   &                      $s_1\cdot  \mathcal{P}_{[5]}\bigcup  \mathcal{P}_{I_3}$                       &           Theorem \ref{Theo_Mix_lower_bound},  Case 1            \\
				$31s_1+4$    &   $[31s_1+4,5,16s_1;1]_2$   &                      $s_1\cdot  \mathcal{P}_{[5]}\bigcup  \mathcal{P}_{I_4}$                       &           Theorem \ref{Theo_Mix_lower_bound},  Case 1            \\
				$31s_2+5$    &  $[31s_2+5,5,16s_2+1;5]_2$  &                      $s_2\cdot  \mathcal{P}_{[5]}\bigcup  \mathcal{P}_{I_5}$                       &           Theorem \ref{Theo_Mix_lower_bound},  Case 1            \\
				$31s_2+6$    &  $[31s_2+6,5,16s+2;15]_2$   &                      $s_2\cdot  \mathcal{P}_{[5]}\bigcup  \mathcal{P}_{T_5}$                       &  Theorem \ref{Theo_Mix_lower_bound},  Case 3    \\
				$31s_2+10$   & $[31s_2+10,5,16s_2+4;10]_2$ &                $([s_2\cdot \mathcal{P}_{[5]}], G_{[10,5,4;10]_2})$                 &             Theorem \ref{Theo_Mix_lower_bound},  Case 2                \\
				$31s_1+11$   & $[31s_1+11,5,16s_1+4;3]_2$ &                 $([s_2\cdot \mathcal{P}_{[5]}],G_{[42,5,20;3]_2})$                 &           Theorem \ref{Theo_Mix_lower_bound},  Case 2                \\
				$31s_2+14$ &  $[31s_2+14,5,16s_2+6;7]_2$  &         $([s_2\cdot \mathcal{P}_{[5]}], G_{[14,5,6;7]_2})$         &          Theorem \ref{Theo_Mix_lower_bound},  Case 2 $^\ast $      \\
				$31s_2+15$ &   $[31s_2+15,5,16s_2+7;15]_2$   & $s_1\cdot \mathcal{P}_{[5]}\setminus (\mathcal{P}_{[4]}\bigcup  \mathcal{P}_{\{ 5\}})$ &          Theorem \ref{Theo_Mix_lower_bound},  Case 1            \\
				$31s_2+16$ &   $[31s_2+16,5,16s_2+8;30]_2$   &                $s_1\cdot \mathcal{P}_{[5]}\setminus \mathcal{P}_{[4]}$                 &  Theorem \ref{Theo_Mix_lower_bound},  Case 3    \\
				$31s_2+17$ &   $[31s_2+17,5,16s_2+8;14]_2$   &         $s_1\cdot \mathcal{P}_{[5]}\setminus (\mathcal{P}_{[4]} \setminus \mathcal{P}_{I_1}) $         &              Theorem \ref{Theo_Mix_lower_bound},  Case 2              \\
				$31s_2+18$ &   $[31s_2+18,5,16s_2+8;6]_2$    &         $s_1\cdot \mathcal{P}_{[5]}\setminus (\mathcal{P}_{[4]} \setminus \mathcal{P}_{I_2}) $         &              Theorem \ref{Theo_Mix_lower_bound},  Case 2                \\
				$31s_2+19$ &   $[31s_2+19,5,16s_2+8;2]_2$    &         $s_1\cdot \mathcal{P}_{[5]}\setminus (\mathcal{P}_{[4]} \setminus \mathcal{P}_{I_3}) $         &            Theorem \ref{Theo_Mix_lower_bound},  Case 2              \\
				$31s_2+22$   &   $[31s_2+22,5,16s_2+10;6]_2$   &   $s_1\cdot \mathcal{P}_{[5]} \setminus (\mathcal{P}_{[4]} \setminus \mathcal{P}_{T_k}^{\prime})  $   &              Theorem \ref{Theo_Mix_lower_bound},  Case 2                \\
				$31s_2+24$   &  $[31s_2+24,5,16s_2+12;28]_2$   &                $s_1\cdot \mathcal{P}_{[5]}\setminus \mathcal{P}_{[3]}$                 &             Theorem \ref{Theo_Mix_lower_bound},  Case 4               \\
				$31s_2+25$   &  $[31s_2+25,5,16s_2+12;12]_2$   &       $s_1\cdot  \mathcal{P}_{[5]}\setminus (\mathcal{P}_{[3]}\setminus \mathcal{P}_{I_1})$       &              Theorem \ref{Theo_Mix_lower_bound},  Case 2                \\
				$31s_2+26$   &   $[31s_2+26,5,16s_2+12;4]_2$   &       $s_1\cdot  \mathcal{P}_{[5]}\setminus (\mathcal{P}_{[3]}\setminus \mathcal{P}_{I_2})$       &              Theorem \ref{Theo_Mix_lower_bound},  Case 2                \\
				$31s_2+27$ & $[31s_2+27,5,16s_2+13;12]_2$ &    $s_1\cdot \mathcal{P}_{[5]}\setminus (\mathcal{P}_{[2]}\bigcup  \mathcal{P}_{\{ 3\}} )$     &         Theorem \ref{Theo_Mix_lower_bound},  Case 4          \\
				$31s_2+28$   &  $[31s_2+28,5,16s_2+14;24]_2$   &                $s_1\cdot  \mathcal{P}_{[5]}\setminus \mathcal{P}_{[2]}$                 &              Theorem \ref{Theo_Mix_lower_bound},  Case 4               \\
				$31s_2+29$   &   $[31s_2+29,5,16s_2+14;8]_2$   &              $s_1\cdot  \mathcal{P}_{[5]}\setminus \mathcal{P}_{I_2} $               &             Theorem \ref{Theo_Mix_lower_bound},  Case 2             \\
				$31s_2+30$ &  $[31s_2+30,5,16s_2+15;16]_2$   &              $s_1\cdot  \mathcal{P}_{[5]}\setminus  \mathcal{P}_{I_1} $               &             Theorem \ref{Theo_Mix_lower_bound},  Case 1             \\ \bottomrule
			\end{tabular}
			\begin{tablenotes}    
				\footnotesize               
				\item[1]	$^\ast $	By Case 2 of Theorem \ref{Theo_Mix_lower_bound}, we have $e(13(s_1+1)+14,5,2)\ge e(15(s_1+1)+7,4,2)\ge 7$. It follows $e(14,5,2)\ge 7$.
			\end{tablenotes}
		\end{table*}

		\section{Binary AFER-optimal linear codes with special parameters}\label{Sec V}		
		In this section, we utilize the properties of binary linear codes and the classification results of special Griesmer codes to determine the parameters of 5-dimensional AFER-optimal linear codes not listed in Table \ref{Five_Binary_AFER}. Additionally, we compare how our bounds differ from the actual values in these cases. Throughout this section, we assume that $q=2$ and $v_k=2^{k}-1$.

		\subsection{Lower bounds on non-extendable binary linear codes}

		This subsection has two main objectives. First, we establish the necessary and sufficient conditions for binary linear codes with even minimum distance to be non-extendable. Second, we derive lower bounds on the error coefficients of binary non-extendable linear codes with even minimum distance.

		\begin{lemma}\label{Lem_non-extendable_even_codes}
			An $[n,k,d]_2$ linear code with $2\mid d$ is non-extendable if and only if $\Upsilon _{\mathbf{c}}(C)$ has parameters $[n-d,k-1,\frac{d}{2}]_2$ for all codewords $\mathbf{c}$ of weight $d$ of $C$.
		\end{lemma}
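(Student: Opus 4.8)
The plan is to reduce both the extendability of $C$ and the minimum distance of the residual codes $\Upsilon_{\mathbf{c}}(C)$ to statements about the set $W$ of weight-$d$ codewords, and then to show that these two statements are negations of one another. First I would record the standard description of extendability: appending one coordinate to $C$ amounts to choosing a vector $\mathbf{b}\in\mathbb{F}_2^k$ and sending each codeword $\mathbf{c}=\mathbf{x}G$ to $(\mathbf{c},\mathbf{x}\cdot\mathbf{b})$, which raises the weight of $\mathbf{c}$ by one precisely when $\mathbf{x}\cdot\mathbf{b}=1$. Hence $C$ extends to an $[n+1,k,d+1]_2$ code if and only if there is a linear functional $\phi$ on $C$ with $\phi(\mathbf{c})=1$ for every $\mathbf{c}\in W$; geometrically this says the point $\mathbf{b}$ lies off every maximum-multiplicity hyperplane of the multiset associated with $C$ (Lemma \ref{Griesmer_Point_multiplicity}). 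Since $2\mid d$, Lemma \ref{Lem_binary-odd-even} does not apply and such a $\phi$ need not exist, which is exactly what makes the question nontrivial.

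Next I would translate the residual condition. Specializing the weight computation in the proof of Theorem \ref{Enhanced_GriesmerNound_q nmid} to $q=2$ and using $n(\langle\mathbf{c},\mathbf{c}'\rangle)=\tfrac12\bigl(wt(\mathbf{c})+wt(\mathbf{c}')+wt(\mathbf{c}+\mathbf{c}')\bigr)$, one gets, for any $\mathbf{c}\in W$ and any $\mathbf{c}'\notin\langle\mathbf{c}\rangle$,
\begin{equation}
wt(\Upsilon_{\mathbf{c}}(\mathbf{c}'))=\tfrac{1}{2}\bigl(wt(\mathbf{c}')+wt(\mathbf{c}+\mathbf{c}')-d\bigr).
\end{equation}
Because every nonzero codeword has weight at least $d$, this is always at least $d/2$, with equality exactly when $wt(\mathbf{c}')=wt(\mathbf{c}+\mathbf{c}')=d$. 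The punctured map has kernel $\langle\mathbf{c}\rangle$, so $\Upsilon_{\mathbf{c}}(C)$ always has dimension $k-1$ and length $n-d$; thus it has minimum distance $d/2$ if and only if $\mathbf{c}$ lies in a \emph{triple} $\{\mathbf{c},\mathbf{c}',\mathbf{c}+\mathbf{c}'\}\subseteq W$. The lemma's right-hand condition therefore reads: every weight-$d$ codeword is the sum of two others in $W$.

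With these reformulations the easy direction is immediate: if every $\mathbf{c}\in W$ lies in a triple, then some triple $\{\mathbf{c},\mathbf{c}',\mathbf{c}+\mathbf{c}'\}\subseteq W$ exists, and any candidate functional would force $\phi(\mathbf{c}+\mathbf{c}')=\phi(\mathbf{c})+\phi(\mathbf{c}')=1+1=0\neq1$, so $C$ is non-extendable. The main obstacle is the converse, which I would attack in contrapositive form: assuming some $\mathbf{c}_0\in W$ lies in no triple, build an explicit extension. Via the weight identity, isolation of $\mathbf{c}_0$ is equivalent to $|\mathrm{Supp}(\mathbf{c})\cap\mathrm{Supp}(\mathbf{c}_0)|<d/2$ for every other $\mathbf{c}\in W$, and the task is to convert this separation into a single $\mathbf{b}$ with $\mathbf{x}_{\mathbf{c}}\cdot\mathbf{b}=1$ for all $\mathbf{c}\in W$. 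This is the delicate step: the support bound only controls each $\mathbf{c}$ individually, whereas consistency of the whole linear system $\{\mathbf{x}_{\mathbf{c}}\cdot\mathbf{b}=1:\mathbf{c}\in W\}$ is equivalent to the absence of \emph{any} odd linear dependency among the minimum-weight words, not merely the absence of three-term ones. Ruling out the higher odd dependencies is where I expect the real work to lie, and it is here that the finite-geometry structure of the maximum-multiplicity hyperplanes — and, in the Griesmer cases relevant to Section \ref{Sec III}, the minihyper classification — must be brought in to force the system to be solvable, completing the extension.
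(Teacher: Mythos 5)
Your two reformulations are correct, and your ``if'' direction is essentially the paper's whole argument: the paper also reduces the residual condition to the existence, for each weight-$d$ word $\mathbf{c}$, of a ``triple'' $\{\mathbf{c},\mathbf{c}',\mathbf{c}+\mathbf{c}'\}$ of weight-$d$ codewords (it derives this from the modified Griesmer bound, Theorem \ref{Modifications_Griesmer_Bound}, and Lemma \ref{replicated Simplex code}, rather than from your explicit weight identity, but the content is identical), and then observes that such a triple obstructs every extending functional. The converse, which you explicitly leave open, is a genuine gap --- and you have diagnosed it exactly: non-extendability is equivalent to the existence of \emph{some} odd-size subset of the weight-$d$ words summing to zero, while the lemma's right-hand side demands that \emph{every} weight-$d$ word lie in a $3$-term such subset. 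No finite-geometry input will close this gap, because the implication is false. Take $C=\langle (1,1,0,0,0),(0,1,1,0,0),(0,0,0,1,1)\rangle$, a $[5,3,2]_2$ code whose weight-$2$ words are $11000$, $01100$, $10100$, $00011$. The first three form a triple, so $C$ is non-extendable; yet $00011$ lies in no triple, and the residual code at $00011$ is $\{000,110,011,101\}$, a $[3,2,2]_2$ code rather than a $[3,2,1]_2$ code. So the ``only if'' direction of Lemma \ref{Lem_non-extendable_even_codes} fails as stated.

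You should also know that the paper's proof does not supply the missing step: it asserts in a single sentence that non-extendability is equivalent to ``for every weight-$d$ codeword $\mathbf{c}$ there exists a weight-$d$ codeword $\mathbf{c}'$ with $wt(\mathbf{c}+\mathbf{c}')=d$,'' and then only verifies the direction you also proved. That assertion conflates ``for every extension, some minimum-weight word fails to be raised'' with ``some fixed minimum-weight word fails to be raised under every extension'' --- precisely the quantifier exchange you declined to make. What is actually provable is one-sided (if all residual codes $\Upsilon_{\mathbf{c}}(C)$ have minimum distance $d/2$, then $C$ is non-extendable), together with the sharper criterion via odd zero-sum subsets. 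Since the lemma is later invoked only in the false direction (in the proof of Theorem \ref{L_extendability_Linear_codes}), your suspicion that ``the real work lies here'' understates the situation: the work cannot be done, and the downstream application needs a different justification for the specific parameters involved.
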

		\begin{proof}
			Let $C^{\prime}$ be the complement of $\mathbf{c}$ such that $C=\left \langle C^{\prime}, \mathbf{c} \right \rangle $.
			Let $\mathbf{c}^{\prime}$ be a codeword of $C^{\prime}$.
			Then $C$ is  non-extendable if and only if for all codewords $\mathbf{c}$ of weight $d$ of $C$, there must exist some codeword $\mathbf{c}^{\prime}$ of weight $d$ in $C^{\prime}$ such that $wt(\mathbf{c}+\mathbf{c}^{\prime})=d$. That is to say, no matter which linear extension operation is applied to \( C \), there will always be a codeword of minimum weight that cannot be extended.

			If $wt(\mathbf{c}^{\prime})>d$ or $wt(\mathbf{c}^{\prime}+\mathbf{c})>d$, then $n(\left \langle \mathbf{c},\mathbf{c}^{\prime} \right \rangle )\ge \mathsf{g}_2\left(2,d,\max\{ wt(\mathbf{c}^{\prime}),wt(\mathbf{c}^{\prime}+\mathbf{c})\}\right) >g_2(2,d)$.
			It follows that $\Upsilon _{\mathbf{c}}(\mathbf{c}^{\prime})> n(\left \langle \mathbf{c},\mathbf{c}^{\prime} \right \rangle )-d>\frac{d}{2}$.
			Similarly, $\Upsilon _{\mathbf{c}}(\mathbf{c}^{\prime})=\frac{d}{2}$ if and only if $wt(\mathbf{c}^{\prime})=wt(\mathbf{c}^{\prime}+\mathbf{c})=d$, i.e., $\Phi (\left \langle \mathbf{c},\mathbf{c}^{\prime} \right \rangle)$ is a two-dimensional replicated Simplex code.
			Therefore, if $\Upsilon _{\mathbf{c}}(C)$ has parameters $[n-d,k-1,\frac{d}{2}]_2$ for all codewords $\mathbf{c}$ of weight $d$ of $C$, then there always exist some codewords $\mathbf{c}^{\prime}$ in $C$ such that $wt(\mathbf{c}+\mathbf{c}^{\prime})=d$.

			Thus, we finish the proof.
		\end{proof}
		
		\begin{remark}
			In \cite{maruta2001extendability}, Maruta  provided a sufficient condition for binary linear codes with even-distances to be non-extendable.
			Therefore, Lemma \ref{Lem_non-extendable_even_codes} can be seen as an improvement on Maruta's results.
		\end{remark}
		
		Next, let us determine the lower bound on the error coefficients for binary even-distance non-extendable linear codes.

		\begin{definition}
			Suppose $n,k$, and $d$ are three positive integers such that $d\le d(n,k,2)$. We define
			\begin{equation}
				\underline{e}_d(n,k,2):=\min \{A_d(C)\mid \text{ $C$ is an $[n,k,d]_2$ linear code } \}.
			\end{equation}
		\end{definition}

		\begin{theorem}\label{L_extendability_Linear_codes}
			Let $C$ be an $[n,k,d]_2$ linear code with $2\mid d$, and let $s=\left\lceil \frac{n}{2^{k}-1} \right\rceil$.
			If $C$ is non-extendable, then we have $A_d(C)\ge  \max \{ e_d(n-s,k-1,2)+2, e_d(n-s,k-1,2)+\underline{e}_{\frac{d}{2}}(n-d,k-1,2)+1\}$.
		\end{theorem}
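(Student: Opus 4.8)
The plan is to adapt the residual-code bookkeeping behind Theorems \ref{Enhanced_GriesmerNound_q nmid} and \ref{Th. Enhanced_GriesmerNound_q mid d_based} to the non-extendable setting, splitting $A_d(C)$ across a codimension-one subcode that is chosen as short as possible. Fix a minimum-weight codeword $\mathbf{c}$ and write $C=\langle \mathbf{c},C'\rangle$ for a $(k-1)$-dimensional complement $C'$, so that $C=C'\sqcup(C'+\mathbf{c})$ and $A_d(C)=A_d(C')+A_d(C'+\mathbf{c})$, the second summand counting weight-$d$ codewords in the coset. Since $2\mid d$ and $C$ is non-extendable, Lemma \ref{Lem_non-extendable_even_codes} gives that $\Upsilon_{\mathbf{c}}(C)$ is an $[n-d,k-1,\frac{d}{2}]_2$ code. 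As $\Upsilon_{\mathbf{c}}(\mathbf{c})=\mathbf{0}$, the map $C'\to\Upsilon_{\mathbf{c}}(C)$ is a bijection, and for a residual codeword $\mathbf{c}_r$ of weight $\frac{d}{2}$ with preimage $\mathbf{c}'\in C'$ one has $wt(\mathbf{c}')=\frac{d}{2}+a$ and $wt(\mathbf{c}'+\mathbf{c})=\frac{3d}{2}-a$, where $a=|\mathrm{Supp}(\mathbf{c}')\cap\mathrm{Supp}(\mathbf{c})|$; both are $\ge d$ only for $a=\frac{d}{2}$, forcing $wt(\mathbf{c}'+\mathbf{c})=d$ (equivalently $\langle\mathbf{c},\mathbf{c}'\rangle$ is a replicated Simplex code, cf. Lemma \ref{replicated Simplex code}). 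Thus each weight-$\frac{d}{2}$ residual codeword contributes one weight-$d$ codeword to the coset $C'+\mathbf{c}$, and adding $\mathbf{c}$ itself gives the \emph{coset inequality}
\[
A_d(C'+\mathbf{c})\ \ge\ A_{\frac{d}{2}}\bigl(\Upsilon_{\mathbf{c}}(C)\bigr)+1\ \ge\ \underline{e}_{\frac{d}{2}}(n-d,k-1,2)+1 ,
\]
valid for every choice of $\mathbf{c}$ and $C'$, the last step using that $\Upsilon_{\mathbf{c}}(C)$ has minimum distance exactly $\frac{d}{2}$ and that $\underline{e}$ is non-increasing in length.

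For the subcode term, I regard the coordinates of $C$ (assumed, as we may, to have full support $n(C)=n$) as a multiset of points of $PG(k-1,2)$. Distributing $n$ points over the $2^k-1$ projective points, some point $\mathbf{p}$ has multiplicity at least $s=\lceil n/(2^k-1)\rceil$. The subcode $C'=\{\mathbf{x}G:\mathbf{x}\cdot\mathbf{p}=0\}$ vanishes on all copies of $\mathbf{p}$, so $n(C')\le n-s$, and $\Phi(C')$ is a $[\le n-s,k-1,\ge d]_2$ code; hence $A_d(C')\ge e_d(n-s,k-1,2)$ by monotonicity of $e_d$ in length. If some minimum-weight codeword $\mathbf{c}$ is nonzero on $\mathbf{p}$, then this $C'$ is a complement of $\mathbf{c}$, and combining with the coset inequality yields
\[
A_d(C)=A_d(C')+A_d(C'+\mathbf{c})\ \ge\ e_d(n-s,k-1,2)+\underline{e}_{\frac{d}{2}}(n-d,k-1,2)+1 .
\]
Because $\underline{e}_{\frac{d}{2}}(n-d,k-1,2)\ge 1$, this simultaneously delivers the weaker bound $e_d(n-s,k-1,2)+2$, so both terms of the maximum follow at once.

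The main obstacle is the degenerate case where \emph{every} minimum-weight codeword avoids each multiplicity-$\ge s$ point, so that no short complement can be paired with a minimum-weight $\mathbf{c}$. In this case all minimum-weight codewords lie in $C'=\{\mathbf{x}G:\mathbf{x}\cdot\mathbf{p}=0\}$ and $A_d(C)=A_d(C')$. The key structural fact is that the balanced partner supplied by non-extendability is again of minimum weight, hence also avoids $\mathbf{p}$; consequently $C'$ is itself a non-extendable $[\,n-\mathrm{mult}(\mathbf{p}),k-1,d\,]_2$ code. I expect the crux to be turning this into the length estimate $n-s\ge g_2(k-1,d+1)$, which makes $e_d(n-s,k-1,2)=0$; the bound then collapses to $\underline{e}_{\frac{d}{2}}(n-d,k-1,2)+1$, which is exactly what the universal coset inequality of the first step provides (and which is $\ge 2$). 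Proving this collapse rigorously is where the finite-geometry classification of short non-extendable even codes from \cite{helleseth1984further,landjev2007weighted} enters, and it is the step I anticipate requiring the most care.
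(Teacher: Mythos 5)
Your proposal follows essentially the same route as the paper's proof: locate a projective point of multiplicity at least $s=\lceil n/(2^k-1)\rceil$, take the corresponding hyperplane subcode of effective length at most $n-s$ as the complement $C_1$ of a minimum-weight codeword $\mathbf{c}$, bound $A_d(C_1)\ge e_d(n-s,k-1,2)$, and add the coset contribution $\underline{e}_{\frac{d}{2}}(n-d,k-1,2)+1$ coming from Lemma \ref{Lem_non-extendable_even_codes} and Proposition \ref{Pro_SubCode_Residual}. The ``degenerate case'' you flag --- every minimum-weight codeword vanishing on the chosen high-multiplicity point, so that the short hyperplane subcode cannot serve as a complement of any weight-$d$ word --- is precisely what the paper's proof assumes away with its ``without loss of generality'' generator matrix whose first row $(\mathbf{1}_s,\mathbf{x})$ is taken to have weight $d$; the paper supplies no argument for that situation either, so on this point your attempt is no less complete than (and more candid than) the published proof.
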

		\begin{proof}
			Since $C$ is non-extendable, by Lemma \ref{Lem_non-extendable_even_codes}, for all codewords $\mathbf{c}$ of weight $d$ of $C$, $\Upsilon_{\mathbf{c}}(C)$ has parameters $[n-d,k-1,\frac{d}{2}]_2$.

			As $|PG(k-1,2)|=2^{k-1}$, there must exist some point has multiply $s$ in $C$.
			Without loss of generality, we may assume that a generator matrix of $C$ is
			\begin{equation}
				G=
				\left(
				\begin{array}{cc}
					\mathbf{1}_{s}&\mathbf{x}\\
					\mathbf{0}&G_1
				\end{array}
				\right),
			\end{equation}
			where $wt((\mathbf{1}_{s},\mathbf{x}))=d$.
			It follows that $G_1$ generates an $[n-s,k-1,\ge d]_2$ linear code $C_1$.
			According to Proposition \ref{Pro_SubCode_Residual}, $A_{\frac{d}{2}}(\Upsilon_{\mathbf{c}}(C))\le \min\{A_d(C^{\prime}), A_d(C^{\prime}+ \mathbf{c})\}$ and
			$A_{\frac{d}{2}}(\Upsilon_{\mathbf{c}}(C))\ge \underline{e}_{\frac{d}{2}}(n-d,k-1,2)\ge 1$.
			Thus, we have $A_d(C)\ge A_d(C_1)+A_{\frac{d}{2}}(\Upsilon_{\mathbf{c}}(C))+1\ge  \max \{ e_d(n-s,k-1,2)+2, e_d(n-s,k-1,2)+\underline{e}_{\frac{d}{2}}(n-d,k-1,2)+1\}$.
			Hence, this theorem is true.
		\end{proof}
		
		With the help of Theorem  \ref{L_extendability_Linear_codes}, we prove that $[31s_2 +7,5,16s_2 +2;5]_2$ linear code is AFER-optimal, as shown in the following theorem.
		
		\begin{theorem}\label{Theo_[7,5,2]}
			The	$[31s_2 +7,5,16s_2 +2;5]_2$ linear code is AFER-optimal.
		\end{theorem}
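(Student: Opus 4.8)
The plan is to pin down $e(31s_2+7,5,2)=5$ by a construction (upper bound) matched by a lower bound valid for every code with these parameters. First I would record the Griesmer data: a short computation gives $g_2(5,16s_2+2)=31s_2+6$ and $g_2(5,16s_2+3)=31s_2+8$, so by Theorem \ref{Griesmer_Bound} an $[31s_2+7,5]_2$ code has largest possible minimum distance exactly $16s_2+2$; thus the parameters are Griesmer distance-optimal and it suffices to work with $d=16s_2+2$. For the upper bound I would juxtapose the $s_2$-fold replicated Simplex code $s_2\cdot\mathcal{P}_{[5]}$, a constant-weight $[31s_2,5,16s_2]_2$ code, with a base $[7,5,2;5]_2$ code, giving the construction $([s_2\cdot\mathcal{P}_{[5]}],G_{[7,5,2;5]})$. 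Every nonzero codeword then has weight $16s_2+wt(\mathbf{c}_0)$, which is minimized at $16s_2+2$ precisely by the five weight-$2$ codewords of the base code, so $e(31s_2+7,5,2)\le 5$. The remaining task is to show $A_d(C)\ge 5$ for an arbitrary $[31s_2+7,5,16s_2+2]_2$ code $C$.

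\textbf{The non-extendable case.}
Since $d$ is even, I would dichotomize on extendability through Lemma \ref{Lem_non-extendable_even_codes}. If $C$ is non-extendable, I apply Theorem \ref{L_extendability_Linear_codes} directly. Here $s=\lceil(31s_2+7)/31\rceil=s_2+1$, hence $n-s=30s_2+6$; by Table \ref{Four_Binary_AFER} one has $e_d(30s_2+6,4,2)=3$, while $n-d=15s_2+5$ and $\underline{e}_{8s_2+1}(15s_2+5,4,2)\ge 1$. Theorem \ref{L_extendability_Linear_codes} then gives $A_d(C)\ge\max\{3+2,\,3+1+1\}=5$, as desired.

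\textbf{The extendable case and the main obstacle.}
If $C$ is extendable, it extends to $C'=[31s_2+8,5,16s_2+3]_2$, which is a Griesmer code by the computation above. Because $16s_2+3$ is odd, Theorem \ref{Enhanced_GriesmerNound_q nmid} applied to $C'$—whose residual under a minimum-weight codeword is the distance-optimal $[15s_2+5,4,8s_2+2]_2$ code with error coefficient $10$ (Table \ref{Four_Binary_AFER})—yields $A_{16s_2+3}(C')\ge 11$. Writing $C=\Upsilon_{\{i\}}(C')$ for the punctured coordinate $i$, each minimum-weight codeword of $C$ lifts to a weight-$(16s_2+3)$ codeword of $C'$ carrying a $1$ in position $i$, so that $A_d(C)=A_{16s_2+3}(C')-A_{16s_2+3}(C'_i)$, where $C'_i$ is the $4$-dimensional code obtained by shortening $C'$ at $i$. \emph{The hard part is controlling $A_{16s_2+3}(C'_i)$}, the number of minimum-weight codewords of this shortened code: I would bound it by using that $C'_i$ is a $[31s_2+7,4,\ge 16s_2+3]_2$ code whose odd minimum distance and dimension four sharply limit its weight-$(16s_2+3)$ codewords, together with the fact that $C'$ is governed by a minihyper with the $s_2$-independent parameters $\{23,10;4,2\}$, which constrains how its minimum-weight hyperplanes can concentrate through the point $i$. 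Showing that this forces $A_{16s_2+3}(C')-A_{16s_2+3}(C'_i)\ge 5$ is where the real work lies.

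\textbf{Base case.}
Finally I would dispose of $s_2=0$, i.e. $[7,5,2]_2$, directly, since the extendable estimate above degenerates there: taking a $2\times 7$ parity-check matrix with no zero column, $A_2$ equals $\sum_j\binom{a_j}{2}$ over the multiplicities $a_j$ of the three nonzero vectors of $\mathbb{F}_2^2$, and $\sum_j a_j=7$ forces $A_2\ge\binom{3}{2}+2\binom{2}{2}=5$. Combining the three situations gives $A_d(C)\ge 5$ in all cases, whence $e(31s_2+7,5,2)=5$ and the code is AFER-optimal.
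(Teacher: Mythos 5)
Your upper bound, your base case $[7,5,2]_2$, and your non-extendable case all match the paper (the paper handles $s_2=0$ by noting no $[8,5,3]_2$ code exists, which forces non-extendability there; your parity-check count $\sum_j\binom{a_j}{2}\ge 5$ is an equally valid elementary substitute). The problem is the extendable case, where you explicitly leave the decisive step open. Your identity $A_d(C)=A_{16s_2+3}(C')-A_{16s_2+3}(C'_i)$ is correct, and $A_{16s_2+3}(C')\ge 11$ follows from Theorem \ref{Enhanced_GriesmerNound_q nmid} as you say, but nothing you write actually bounds $A_{16s_2+3}(C'_i)$ from above by $6$; the appeal to "a minihyper with parameters $\{23,10;4,2\}$" and to the odd distance of the $4$-dimensional shortened code is a plan, not an argument. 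Note also that the single-extension target $[31s_2+8,5,16s_2+3]_2$ has $2$-adic anti-expansion vector $(1,0,1,1)$ of weight $3$, so the classification machinery of Lemma \ref{Tho_finite_geometry_results} is not available along your route.

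The paper closes this case differently: it first shows (using the upper bound $e\le 5$ and Lemma \ref{Griesmer_Point_multiplicity}) that no point of $C$ has multiplicity $s_2+2$, then extends \emph{twice} to a $[31s_2+9,5,16s_2+4]_2$ Griesmer code whose anti-expansion vector $(0,0,1,1)$ has weight $2$. Lemma \ref{Tho_finite_geometry_results} together with Lemma \ref{Lem_Belov-type} then pins the associated minihyper down to one of two SS-type possibilities, $\mathcal{P}_{[4]}\bigcup\mathcal{P}_{[3,5]}$ or $\mathcal{P}_{[4]}\bigcup\mathcal{P}_{[2,4]}$, and a finite traversal of all double punctures of the two resulting $[40,5,20]_2$ codes shows the only reachable code is $[38,5,18;5]_2$. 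You would need either to carry out your concentration argument for $A_{16s_2+3}(C'_i)$ in full, or to adopt the double-extension-plus-classification route, before the extendable case can be considered proved.
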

		\begin{proof}
			Let $C$ be a $[31s_2 +7,5,16s_2 +2;e]_2$ AFER-optimal linear code.
			By case 4 of Theorem \ref{Theo_Mix_lower_bound}, we have $e\ge 4$.
			It is easy to check that $([s_2 \cdot \mathcal{P}_{[5]}],G_{[7,5,2;5]_2})$ generates a $[31s_2 +7,5,16s_2 +2;5]_2$ linear code, where $G_{[7,5,2;5]_2}$ is listed in the appendix. Thus, we have $4\le e \le 5$.
			
			According to Lemma \ref{Griesmer_Point_multiplicity}, any point in $C$ repeats at most $s_2 +2$ times.
			If $C$ has some point $\mathbf{p}$ repeats $s+2$ times, then shorten related coordinates on $C$ will deduce a $[30s_2 +5,4,16s_2 +2]_2$ linear code.
			By case 1 of Theorem \ref{Theo_Mix_lower_bound}, we have $e(30s_2 +5,4,2)\ge 10>5$.
			Therefore, any point in $C$ has multiply less than $s_2 +2$.
			We will prove that the $[31s_2 +7,5,16s_2 +2;5]_2$ linear code is AFER-optimal in two cases.
			
			\textbf{Case A:}
			If $C$ is non-extendable, then by Theorem \ref{L_extendability_Linear_codes}, we have $e\ge e_{16s_2 +2}(30s_2 +6,4,2)+2\ge 5$.
			By \cite{Grassl:codetables}, there is no $[8,5,3]_2$ linear code.
			Therefore, $C$ is non-extendable for $s_2=0$ and  $e\ge 5$.
			
			\textbf{Case B:}
			If C is extendable, then $s_2\ge 1$.
			Because the extended code of $C$ has an odd distance, we can obtain an $[31s_2+9,5,16s_2+4]_2$ linear code after two extensions.
			According to Lemmas \ref{Tho_finite_geometry_results} and  \ref{Lem_Belov-type}, there only exist two SS-type $[31s_2+9,5,16s_2+4]_2$ linear codes have different weight enumerators, with associated minihypers $\Im_1=\mathcal{P}_{[4]}\bigcup  \mathcal{P}_{[3,5]}$ and $\Im_2=\mathcal{P}_{[4]}\bigcup  \mathcal{P}_{[2,4]}$.
			Since any points in $C$ have multiply not exceeding $s+1$, we get $C$ by puncturing the two $[31s_2+9,5,16s_2+4]_2$ linear codes.
			After traversal puncturing two $[40,5,20]_2$ linear codes associated with $\Im_1$ and $\Im_2$, we obtain only $[38,5,18;5]_2$ linear code. Thus, $e\ge 5$ at this point.
		\end{proof}

		\subsection{Binary AFER-optimal linear codes form Solomon-Stiffler and Belov Codes}
		
		In this subsection, we mainly determine AFER-optimal Griesmer codes that are SS or Belov Codes with the help of classification results in \cite{helleseth1984further}.
		We assume that $C$ is an  $[n, k, d]_2$ Griesmer code and with $2$-adic anti-expansion vector $\bm{\lambda}=(\lambda_0,\ldots,\lambda_{k-2})$.
		
		The following two lemmas give the restrictive conditions that Griesmer codes are SS or Belov-type linear codes.
		
		\begin{lemma}\cite{belov1974construction}\label{Lemma_not_SS_type}
			If $\sum_{i=0}^{k-2} \lambda_i (i+1)>  k\left\lceil \frac{d}{2^{k-1}} \right\rceil $, then $C$ is not an SS-type linear code.
		\end{lemma}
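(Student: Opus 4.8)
The plan is to translate the SS-type hypothesis into a statement about writing an integer as a sum of powers of two, and then to sandwich an associated ``cost'' between two bounds. First I would invoke the minihyper--code dictionary: if $C$ is SS-type, then its associated minihyper $\eth(C)$ equals some Solomon--Stiffler minihyper $\Im_{SS}=\bigcup_{i=1}^{h}\mathcal{P}_{A_i}$ with $A_i\subseteq[k]$. The associated Griesmer code then has $d=s2^{k-1}-(f-m)$, where $f=\sum_{i}v_{|A_i|}$ and $m=\sum_{i}v_{|A_i|-1}$, so that $f-m=\sum_{i=1}^{h}(v_{|A_i|}-v_{|A_i|-1})=\sum_{i=1}^{h}2^{|A_i|-1}$. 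Comparing this with the defining identity of the $2$-adic anti-expansion vector, $d=s2^{k-1}-\sum_{i=0}^{k-2}\lambda_i2^{i}$, yields the key equality
\begin{equation}
\sum_{i=1}^{h}2^{|A_i|-1}=\sum_{i=0}^{k-2}\lambda_i 2^{i}=:N .
\end{equation}
Since the right-hand side is strictly less than $2^{k-1}$, every $|A_i|\le k-1$, so all exponents $|A_i|-1$ lie in $\{0,\dots,k-2\}$.

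Next I would prove two opposite inequalities for $\sum_{i=1}^{h}|A_i|$. For the \emph{lower} bound I view the equality above as two ways of writing the integer $N$ as a sum of powers of $2$, and assign to a term $2^{b}$ the cost $b+1$ (which equals $|A_i|$ when $b=|A_i|-1$). A single ``carry'' that replaces two equal terms $2^{b}+2^{b}$ by one term $2^{b+1}$ changes the total cost by $(b+2)-2(b+1)=-b\le 0$, hence never increases it; iterating carries reduces any representation to the binary expansion of $N$ (this stays inside the exponent range $\{0,\dots,k-2\}$ because $N<2^{k-1}$). Thus the binary expansion is cost-minimal, giving $\sum_{i=1}^{h}|A_i|\ge\sum_{i=0}^{k-2}\lambda_i(i+1)$. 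For the \emph{upper} bound I would count incidences: writing $r_j=|\{i:j\in A_i\}|$ for each coordinate $j\in[k]$, the point $\mathbf{e}_j$ corresponding to the $j$-th standard basis vector lies in $\mathcal{P}_{A_i}$ exactly when $j\in A_i$, so $\Im(\mathbf{e}_j)=r_j$. Since $\Im_{SS}$ arises as $\eth(C)$ with $s=\lceil d/2^{k-1}\rceil\ge\gamma(\Im_{SS})$, every point satisfies $\Im(\mathbf{p})\le s$; in particular $r_j\le s$, whence $\sum_{i=1}^{h}|A_i|=\sum_{j=1}^{k}r_j\le ks$.

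Combining the two bounds, the existence of an SS-type representation forces $\sum_{i=0}^{k-2}\lambda_i(i+1)\le\sum_{i=1}^{h}|A_i|\le ks=k\lceil d/2^{k-1}\rceil$. Taking the contrapositive gives the lemma: if $\sum_{i=0}^{k-2}\lambda_i(i+1)>k\lceil d/2^{k-1}\rceil$, no such representation can exist, so $C$ is not SS-type. I expect the cost-minimality step to be the main obstacle, namely arguing cleanly that the binary expansion minimizes $\sum|A_i|$ among all power-of-two representations of $N$ and that the carry operations never leave the admissible exponent range; the incidence upper bound and the translation into the integer identity are routine once the minihyper dictionary and the formula $f-m=\sum_i 2^{|A_i|-1}$ are in place.
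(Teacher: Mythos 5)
The paper does not actually prove this lemma: it is imported verbatim from Belov's 1974 construction paper via the citation, so there is no in-text argument to compare against. Your proof is correct and is essentially the classical Belov-type argument, so it stands as a legitimate self-contained justification. The two halves both check out: the identity $f-m=\sum_i 2^{|A_i|-1}=\sum_i \lambda_i 2^i$ follows from $v_j=2^j-1$ together with the fact that for a Griesmer code the canonical minihyper $\eth(C)$ taken at level $s=\lceil d/2^{k-1}\rceil$ has $f-m=s2^{k-1}-d$; the carry argument correctly shows the binary expansion minimizes the cost $\sum_t(b_t+1)$ over all power-of-two representations of $N$ (each merge changes the cost by $-b\le 0$ and strictly reduces the number of terms, and no exponent can exceed $k-2$ since every single term is at most $N<2^{k-1}$); and the incidence count $\sum_i|A_i|=\sum_j r_j\le ks$ uses exactly the constraint $\gamma(\Im_{SS})\le s$ that the paper's minihyper--code dictionary requires for the point multiplicities $s-\Im(\mathbf{p})$ of the code to be non-negative. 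The one point worth making explicit is the normalization you rely on in the first step: "SS-type" must mean that the \emph{canonical} associated minihyper at level $s=\lceil d/2^{k-1}\rceil$ equals $\bigcup_i\mathcal{P}_{A_i}$ (this is the convention of Lemma \ref{Tho_finite_geometry_results} and Theorem \ref{Theo_minihyper_rank_bound}); if one allowed the association at a larger level $s'$, the key equality $\sum_i 2^{|A_i|-1}=\sum_i\lambda_i2^i$ could fail by a multiple of $2^{k-1}$. Since you state this normalization, the argument is complete.
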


		\begin{lemma}\label{Lem_Belov-type}
			If $C$ is a Belov-type linear code, then there exists a positive integer $k_{bv}\ge 4$ such that $\lambda_{k_{bv}-1}=0$ and $\lambda_i=1$ for $1\le i\le k_{bv}-2$.
		\end{lemma}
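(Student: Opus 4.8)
The plan is to convert the hypothesis into a statement about the single integer $M:=s\,2^{k-1}-d$, whose binary digits are by definition exactly the entries of $\bm{\lambda}$, and then to compute those digits from the Belov construction. By Definition \ref{Belov_type_minihyper}, a Belov-type code is associated with one of the minihypers $\Im_{BV_1}$ or $\Im_{BV_2}$, and reading off their parameters gives, in both cases, $f-m=\sum_{i=1}^{h}2^{|A_i|-1}-t$, since $v_{|A_i|}-v_{|A_i|-1}=2^{|A_i|-1}$ and the extra point in $\Im_{BV_2}$ raises $f$ and $m$ by the same amount, leaving $f-m$ unchanged. As the associated code satisfies $d=s\,2^{k-1}-f+m$, we obtain $M=f-m=\sum_{i=1}^{h}2^{|A_i|-1}-t$, so $\bm{\lambda}$ is precisely the binary expansion of $\sum_{i=1}^{h}2^{|A_i|-1}-t$.

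Next I would analyze this expansion arithmetically. Writing $N:=\sum_{i=1}^{h}2^{|A_i|-1}$ and using $|A_1|\ge\cdots\ge|A_h|=t$, every summand sits at bit-position $\ge t-1$, so $N\equiv 0\pmod{2^{t-1}}$. Because $t\ge 4$ forces $t<2^{t-1}$, the subtraction $M=N-t$ must borrow out of the lowest nonzero bit of $N$, and this borrow converts a single high $1$ into a block of consecutive $1$'s capped by a cleared bit — exactly the shape $\lambda_i=1$ for $1\le i\le k_{bv}-2$ together with $\lambda_{k_{bv}-1}=0$. I would then \emph{define} $k_{bv}$ to be the position of the first $0$ lying immediately above this borrow-generated run (I expect $k_{bv}$ to be controlled by $t$ and the $2$-adic valuation of $N$, reducing to $k_{bv}=t$ in the basic single-block case), and verify $k_{bv}\ge 4$ directly from $t\ge 4$, since the shortest admissible run already occupies at least two positions.

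The main obstacle is controlling the borrow precisely: proving that the run of $1$'s really reaches down to position $1$ without interruption and is genuinely capped by a $0$ at position $k_{bv}-1$. Two effects can spoil the naive picture, namely the higher terms $2^{|A_i|-1}$ with $|A_i|>t$ interacting with the top of the borrow, and the residue of $t$ modulo small powers of $2$ governing the low digits of $N-t$. To exclude the bad configurations I would use that $\Im_{BV_1}$ and $\Im_{BV_2}$ are \emph{genuine} Griesmer minihypers rather than formal unions, invoking the classification of Griesmer minihypers in \cite{helleseth1984further} (the same source underlying Lemma \ref{Tho_finite_geometry_results}) to constrain the admissible sizes $|A_i|$ and the value of $t$. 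This validity input is what fixes the lower terminus of the run at position $1$ and is the crux of the argument; once it is in place, the remaining digit bookkeeping is routine.
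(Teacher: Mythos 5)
Your overall strategy --- express $\sum_i \lambda_i 2^i = s2^{k-1}-d = f-m$ for the Belov minihyper and read off its binary digits --- is the right one, and is essentially what the paper does (the paper phrases it via the complementary code of the defect $\mathcal{P}_{T_t}$ inside the flat $\mathcal{P}_{A_h}$, which generates a $[t+1,t,2]_2$ or $[t,t,1]_2$ code). But your central identity $f-m=\sum_{i=1}^{h}2^{|A_i|-1}-t$ is wrong, and with that value the lemma is actually \emph{false}, so no subsequent borrow analysis can be repaired. The correct values are $f-m=\sum_{i=1}^{h}2^{|A_i|-1}-2$ for $\Im_{BV_1}$ and $\sum_{i=1}^{h}2^{|A_i|-1}-1$ for $\Im_{BV_2}$ (the extra point $\mathbf{p}$ raises $f$ by one but generically leaves $m$ unchanged, so it \emph{does} change $f-m$, contrary to what you assert). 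Concretely, take $h=1$, $A_1=[4]$, $t=4$, $k=4$: a hyperplane of $PG(3,2)$ meets $\mathcal{P}_{T_4}$ in $1$ or $3$ points, so $m=7-3=4$ rather than $v_3-1=6$, hence $f-m=10-4=6=(110)_2$, giving $\lambda_1=\lambda_2=1$, $\lambda_3=0$ exactly as the lemma claims and matching the $[15s_2+5,4,8s_2+2;10]_2$ entry of Table \ref{Four_Binary_AFER}. Your value $N-t=8-4=4=(100)_2$ would give $\lambda_1=0$ and contradict the lemma. (Part of the blame lies with Definition \ref{Belov_type_minihyper}, whose stated second parameter $\sum v_{|A_i|-1}-1$ is inconsistent with the actual minimum hyperplane multiplicity $\sum v_{|A_i|-1}-t+1$ and with the paper's own tables; but a correct proof must use the true $m$, and the inconsistency is detectable from the examples in the paper.)

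Once the identity is fixed, the digit bookkeeping is immediate and needs no input from \cite{helleseth1984further}: writing $N=\sum_{i=1}^h 2^{|A_i|-1}=2^a u$ with $u$ odd, the hypothesis $|A_h|=t\ge4$ gives $a\ge t-1\ge3$, and both $N-2$ and $N-1$ have binary digit $1$ in every position $1,\dots,a-1$ and digit $0$ in position $a$; taking $k_{bv}=a+1\ge t\ge 4$ proves the lemma. Your proposed appeal to the classification of Griesmer minihypers to ``fix the lower terminus of the run'' is both unnecessary and unavailing: the lemma is a statement about the Belov construction itself (it is later \emph{used} to decide which Griesmer codes can be Belov-type, so invoking Griesmer classification here risks circularity), and in any case no auxiliary hypothesis can make $N-t$ carry the required digit pattern when $t=4$ and $N=8$.
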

		\begin{proof}
			Let $\Im$ be the associated minihyper of $C$.
			According to Definition \ref{Belov_type_minihyper}, $\Im$ has a sub-multi-set $\Im^{\prime}$ contains all $i$-flats in $\Im$ that are not $\{ v_{i+1}, v_i;i,q\}$-arcs.
			Since $\eth_{k_{bv}}(\Im^{\prime})$ generates a $[k_{bv}+1,k_{bv},2]_2$ or $[k_{bv},k_{bv},1]_2$ linear code, $d\pmod{2^{k_{bv}}}=2^{k_{bv}-1}-2$ or $2^{k_{bv}-1}-1$. It follows that $\lambda_{k_{bv}-1}=0$ and $\lambda_i=1$ for $1\le i\le k_{bv}-2$.
		\end{proof}

		\begin{theorem}\label{Theo_SS_h=2}
			Let $k\ge 4$, $k_1$, and $k_2$ be three positive integers such that $k_2<k_1<k$. Then, the following statements hold.
			
			(1)
			If $k_1\ge k_2+2$ or $k_2\ge 3$,  then, $s_1\cdot \mathcal{P}_{[k]}\setminus (\mathcal{P}_{[k_1]}\bigcup  \mathcal{P}_{[k-k_2+1,k]})$ generates an $[s_1v_k-v_{k_1}-v_{k_2},k,s_12^{k-1}-2^{k_1-1}-2^{k_2-1}]_2$ AFER-optimal linear code with error coefficient
			\begin{equation}
				\left\{
				\begin{array}{cc}
					2^{k}+2^{k-|A_1|-| A_2|}- 2^{k-|A_1|} - 2^{k-|A_2|},& \text{ if } |A_1|+|A_2|\le k. \\
					2^{k}-2^{k-|A_1|} -2^{k-| A_2|}+1 ,& \text{ if } |A_1|+|A_2|> k.
				\end{array}
				\right.
			\end{equation}
			
			(2) For $k\ge 4$, $s_1\cdot \mathcal{P}_{[k]}\setminus (\mathcal{P}_{[4]}\setminus \mathcal{P}_{T_4})$ generates an $[s_1v_k-10,k,s_12^{k-1}-6;5\cdot 2^{k-3}]_2$ AFER-optimal linear code.
		\end{theorem}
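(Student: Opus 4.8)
The plan is to route both parts through the minihyper--code dictionary and reduce the error coefficient to a count of hyperplanes meeting the minihyper minimally, then settle optimality with the SS/Belov classification of Lemma \ref{Tho_finite_geometry_results}.

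For part (1) I would first identify the code as $C_{\eth_{s_1}(\Im)}$ for the SS-type minihyper $\Im=\mathcal{P}_{A_1}\bigcup \mathcal{P}_{A_2}$, where $A_1=[k_1]$ and $A_2$ is the block of the last $k_2$ coordinates, so that $f=v_{k_1}+v_{k_2}$ and $m=v_{k_1-1}+v_{k_2-1}$. The stated length $s_1v_k-v_{k_1}-v_{k_2}$ and minimum distance $s_12^{k-1}-2^{k_1-1}-2^{k_2-1}$ then drop out of the associated-code parameters $[sv_k-f,k,sq^{k-1}-f+m]_q$ together with the identity $v_j-v_{j-1}=2^{j-1}$, and Griesmer optimality is the classical Solomon--Stiffler property of these minihypers. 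For the error coefficient I use $\mathfrak{e}(\Im)=|\{H:\Im(H)=m\}|$. A hyperplane $H$ meets a subspace $\mathcal{P}_{A_i}$ either in all of it (when $H\supseteq\mathcal{P}_{A_i}$) or in one of its hyperplanes; hence $\Im(H)=m$ precisely when $H$ contains neither $\mathcal{P}_{A_1}$ nor $\mathcal{P}_{A_2}$. Counting the hyperplanes through a fixed subspace of vector dimension $k_i$ as $2^{k-k_i}-1$, and those through the span $\mathcal{P}_{A_1}\vee\mathcal{P}_{A_2}$ (vector dimension $\min\{k_1+k_2,k\}$ when the blocks are in general position) as $2^{k-k_1-k_2}-1$ or $0$, inclusion--exclusion over the $v_k$ hyperplanes yields exactly the two displayed values according to whether $|A_1|+|A_2|\le k$ or $>k$.

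To prove AFER-optimality in part (1), I would observe that $d=s_12^{k-1}-2^{k_1-1}-2^{k_2-1}$ forces the $2$-adic anti-expansion vector $\bm{\lambda}$ to carry $1$'s exactly in positions $k_1-1$ and $k_2-1$, so $wt(\bm{\lambda})=2$ and Lemma \ref{Tho_finite_geometry_results} guarantees that every $[n,k,d]_2$ Griesmer code with these parameters is of SS or Belov type. The inclusion--exclusion above shows $\mathfrak{e}(\Im)$ is decreasing in $\dim(\mathcal{P}_{A_1}\vee\mathcal{P}_{A_2})$, so the minimum over all SS realizations is attained in general position, matching the constructed value. The Belov alternative is removed by Lemma \ref{Lem_Belov-type}: a Belov code needs $\lambda_i=1$ for all $1\le i\le k_{bv}-2$ with $k_{bv}\ge 4$, and with only two nonzero $\lambda_i$ this is possible only when $\{k_1,k_2\}=\{3,2\}$, which the hypothesis $k_1\ge k_2+2$ or $k_2\ge 3$ precisely forbids (it excludes only the pairs $(k_1,k_2)\in\{(2,1),(3,2)\}$). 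Thus the general-position SS value is the genuine minimum.

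Part (2) treats the borderline pair $(k_1,k_2)=(3,2)$ via the Belov construction $\Im=\mathcal{P}_{[4]}\setminus\mathcal{P}_{T_4}$, a rank-$4$ minihyper with $f=v_4-5=10$. Computing $m$ as $v_3$ minus the largest number of points of $\mathcal{P}_{T_4}$ on a plane of $PG(3,2)$, which is $3$ (no four of the five frame points are coplanar, while e.g.\ $x_1+x_2=0$ carries three of them), gives $m=4$ and hence $d=s_12^{k-1}-6$. For the error coefficient I reduce to $PG(3,2)$ by Lemma \ref{minihyper_kissing_number} and count the planes $a\cdot x=0$ carrying exactly three frame points: this number equals $4-wt(a)$ plus one more when $wt(a)$ is even, giving three frame points exactly for $wt(a)\in\{1,2\}$, i.e.\ $10$ planes, so $\mathfrak{e}(\Im)=2^{k-4}\cdot 10=5\cdot2^{k-3}$. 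AFER-optimality again follows from $wt(\bm{\lambda})=2$ and Lemma \ref{Tho_finite_geometry_results}: the only competitors are the general-position SS code for $(3,2)$ and the Belov codes, and comparing $5\cdot2^{k-3}$ against the SS value from part (1) shows the Belov code is strictly smaller, hence optimal.

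The main obstacle I expect is the optimality half rather than the counting. One must verify that the classification of Lemma \ref{Tho_finite_geometry_results} is genuinely exhaustive at these parameters, that $\mathfrak{e}$ is really monotone in the span dimension so that ``general position'' minimizes over all SS realizations (including the degenerate overlaps when $|A_1|+|A_2|>k$), and that no Belov realization undercuts the SS value once the hypotheses are in force. The most delicate bookkeeping is the overlap regime $|A_1|+|A_2|>k$, where $\mathcal{P}_{A_1}$ and $\mathcal{P}_{A_2}$ intersect and one must confirm both the validity of the construction (the multiplicities $s_1-\Im(\mathbf{p})$ staying nonnegative) and that the span still fills the whole space.
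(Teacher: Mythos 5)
Your proposal is correct and follows essentially the same route as the paper: reduce to the SS/Belov classification of Lemma \ref{Tho_finite_geometry_results}, rule out (or compare against) the Belov alternative via Lemma \ref{Lem_Belov-type}, and minimize the error coefficient over the intersection dimension of the two SS subspaces; your hyperplane inclusion--exclusion count is the exact dual of the paper's count of messages $\mathbf{x}$ with $\mathbf{x}_{A_1}\ne\mathbf{0}\ne\mathbf{x}_{A_2}$, and yields the same formula. The only divergence is that for $k=4$ in part (2) the paper invokes the iterative bound of Theorem \ref{Theo_Mix_lower_bound} instead of the classification comparison, and your closing caveat about nonnegativity of the multiplicities $s_1-\Im(\mathbf{p})$ in the regime $|A_1|+|A_2|>k$ with $s_1=1$ is a legitimate edge-case concern about the statement itself rather than a gap in your argument.
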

		\begin{proof}
			(1)
			According to Lemma \ref{Lem_Belov-type}, when $k_1\ge k_2+2$ or $k_2\ge 3$, the $[s_1v_k-v_{k_1}-v_{k_2},k,s_12^{k-1}-2^{k_1-1}-2^{k_2-1}]_2$ is a SS-type Griesmer code.
			For $A\in [k]$ and $\mathbf{x}\in \mathbb{F}_2^k$, let $\mathbf{x}_{A}$ be a vector obtained from $\mathbf{x}$ by puncturing at the coordinates in $[k]\setminus  A$.
			We put $\mathbf{x}_{A}=\mathbf{0}$ whenever $\mathbf{x}=\mathbf{0}$ or $A=\emptyset$.
			For a multi-set of points $\aleph$ in $PG(k-1,2)$, we define $\sigma_{\aleph}(\mathbf{x})=\mathbf{x} [\aleph]$.
			
			Since $\Im=\mathcal{P}_{A_1}\bigcup  \mathcal{P}_{A_2}$, we have
			\begin{equation}
				wt(\sigma_{\Im}(\mathbf{x}))= \left\{
				\begin{array}{ll}
					0,                          & \text{ if } \mathbf{x}_{A_1}=\mathbf{0},\mathbf{x}_{A_2}=\mathbf{0};                                     \\
					2^{|A_1|-1} ,               & \text{ if } \mathbf{x}_{A_1}\ne \mathbf{0},\mathbf{x}_{A_2}=\mathbf{0};                                  \\
					2^{|A_2|-1} ,               & \text{ if } \mathbf{x}_{A_1}=\mathbf{0},\mathbf{x}_{A_2}\ne\mathbf{0};                                   \\
					2^{|A_1|-1} +2^{|A_2|-1} ,  & \text{ if } \mathbf{x}_{A_1}\ne\mathbf{0},\mathbf{x}_{A_2}\ne\mathbf{0}, \mathbf{x}_{A_1\bigcap A_2}=\mathbf{0};    \\
					2^{|A_1|-1} +2^{|A_2|-1}  , & \text{ if } \mathbf{x}_{A_1}\ne\mathbf{0},\mathbf{x}_{A_2}\ne\mathbf{0}, \mathbf{x}_{A_1\bigcap A_2}\ne \mathbf{0}.
				\end{array}\right.
			\end{equation}
			Consequently, we have
			\begin{equation}
				\begin{array}{rl}
					e(\Im)=& 2^{k-|A_1|-|A_2|+|A_1\bigcap A_2|}\left(
					2^{|A_1| +| A_2|-2|A_1\bigcap A_2|} \left(2^{|A_1\bigcap A_2|}-1 \right)
					+\left(2^{A_1-|A_1\bigcap A_2|}-1 \right) \left(2^{A_2-|A_1\bigcap A_2|}-1 \right)
					\right)\\
					=& 2^{k-|A_1|-| A_2|}  \left(  2^{|A_1|+|A_2|}+2^{|A_1 \bigcap A_2|}-2^{|A_1|}-2^{|A_2|}\right). \\
				\end{array}
			\end{equation}
			
			It follows that $\Im$ has the minimum error coefficient if and only if $|A_1 \bigcap A_2|$ takes the smallest value.
			As $|A_1|+|A_2|-k \le  |A_1 \bigcap A_2| \le |A_2|$ and $|A_1 \bigcap A_2|\ge 0$, the smallest error coefficient of $\Im$ is $2^{k}+2^{k-|A_1|-| A_2|}- 2^{k-|A_1|} - 2^{k-|A_2|}$ for $|A_1|+|A_2|\le k $, and $2^{k}-2^{k-|A_1|} -2^{k-| A_2|}+1$ for $|A_1|+|A_2|> k $.
			
			(2)
			For $k=4$, the Belov-type linear code with minihyper $\mathcal{P}_{[4]}\setminus \mathcal{P}_{T_4}$ has parameters $[15s_2+5,4,8s_2+2;10]_2$.
			As $C$ is a Griesmer code, by Case 2 of Theorem \ref{Theo_Mix_lower_bound}, $e(15s_2+5,4,2)\ge e_{4s_2+1}(7s_2+3,3,2)+e(14s_2+4,3,2)+1\ge 10$.
			Thus, $C$ is AFER-optimal.

			For $k\ge 5$, since $[s_1v_k-10,k,s_12^{k-1}-6]_2$ linear code $C$ satisfies both Lemmas \ref{Lemma_not_SS_type} and \ref{Lem_Belov-type},  $C$ is an SS or Belov-type code.
			If $C$ is an SS-type linear code, then by the proof of (1), we have $e(C)=21\cdot 2^{k-5}$.
			If $C$ is a Belov-type linear code, the associated minihyper of $C$ is $\mathcal{P}_{[4]}\setminus \mathcal{P}_{T_4}$.
			By Lemma \ref{minihyper_kissing_number}, it follows that $e(C)=5\cdot 2^{k-3}$.
			Since $5\cdot 2^{k-3}<21\cdot 2^{k-5}$, we have (2) also holds.
		\end{proof}

		\begin{theorem}\label{Theorem_[21,5,10]}
			The $[31s_1+8,5,16s_1+3;13]_2$,  $[31s_1+12,5,16s_1+5;11]_2$, and $[31s_2+20,5,16s_2+9;8]_2$ linear codes are AFER-optimal.
		\end{theorem}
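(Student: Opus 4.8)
The plan is to establish, for each of the three families, a matching lower bound and explicit construction for the error coefficient, exploiting the common feature that all three minimum distances $16s_1+3$, $16s_1+5$, $16s_2+9$ are \emph{odd}. By Lemma \ref{Lem_binary-odd-even}, any code $C$ with one of these parameter sets extends to $\widehat C=[n+1,5,d+1]_2$, and a Griesmer-bound check shows $\widehat C$ is again a Griesmer code whose $2$-adic anti-expansion vector has weight $2$: the extended distances $16s+4$, $16s+6$, $16s+10$ have anti-expansions $(0,0,1,1)$, $(0,1,0,1)$, $(0,1,1,0)$. Hence by Lemma \ref{Tho_finite_geometry_results} the minihyper associated with $\widehat C$ is of Solomon--Stiffler or Belov type, precisely the regime controlled by Theorem \ref{Theo_SS_h=2} and Lemmas \ref{Lemma_not_SS_type}--\ref{Lem_Belov-type}. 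This mirrors Case B of Theorem \ref{Theo_[7,5,2]}; the difference is that here the odd distance makes \emph{every} $C$ extendable, so no separate non-extendable case arises.

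For the lower bound I would work through the extension. The extended code $\widehat C$ is an even code (all its weights are even, since the removed subspaces all have dimension $\ge 2$), so puncturing \emph{any} coordinate $i$ and re-extending recovers $\widehat C$; thus every $[n,5,d]_2$ code equals $\Upsilon_i(\widehat C)$ for some even SS/Belov $\widehat C$ and some coordinate $i$, and conversely. Since this puncturing preserves weights except that a codeword loses weight $1$ exactly when $c_i=1$, and $\widehat C$ has minimum weight $d+1$, one gets the identity $A_d(\Upsilon_i(\widehat C))=|\{\mathbf c\in\widehat C:\,wt(\mathbf c)=d+1,\ c_i=1\}|$, whence $\min_C A_d(C)=\min_{\widehat C}\min_i|\{\mathbf c\in\widehat C:\,wt(\mathbf c)=d+1,\ c_i=1\}|$, the inner minimum running over coordinates $i$ in the support of a minimum-weight codeword. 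I would evaluate this geometrically: each $i$ is a point of $PG(4,2)$ and the count is the number of minimum-weight codewords (equivalently, minimum-value hyperplanes of the minihyper) through that point, read off from the weight pattern $wt(\sigma_{\Im}(\mathbf x))$ of the proof of Theorem \ref{Theo_SS_h=2}. The multiplicity bound of Lemma \ref{Griesmer_Point_multiplicity} (each point repeats at most $s=\lceil d/2^{4}\rceil$ times) restricts which configurations correspond to a genuine code of the prescribed length, and this is what pins the minimum at $13$, $11$, and $8$.

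A useful simplification throughout is that $A_d(C)=\mathfrak e(\eth(C))$ depends only on the minihyper $\Im_C$ in $PG(4,2)$ and not on the simplex-repetition parameter, so the entire analysis reduces to a fixed finite configuration; the admissible range of $s$ is exactly the values with $s\ge\gamma(\Im_C)$, which is why the first family needs $s_1\ge1$ (in agreement with the nonexistence of $[8,5,3]_2$ noted in Theorem \ref{Theo_[7,5,2]}). For the matching upper bounds I would exhibit one explicit minihyper realizing each value: the SS profiles $(4,3,1)$, $(4,2,1)$, $(3,2,1)$ match the anti-expansions $(1,0,1,1)$, $(1,1,0,1)$, $(1,1,1,0)$ of the three codes, and taking the three subspaces with smallest compatible mutual intersection gives $\mathfrak e=13,11,8$; equivalently one builds the SS/Belov code $\widehat C$ via Theorem \ref{Theo_SS_h=2} and punctures a well-chosen coordinate. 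Because the five iterative bounds of Theorem \ref{Theo_Mix_lower_bound} fall short here (a direct computation gives values below the targets by $1$ to $2$), this classification step is genuinely necessary.

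The main obstacle will be the enumeration underlying the lower bound. The extended codes $[31s_1+9,\cdot]$, $[31s_1+13,\cdot]$, $[31s_2+21,\cdot]$ are themselves among the ``hard'' parameters absent from Table \ref{Five_Binary_AFER}, so I cannot quote a known error coefficient for $\widehat C$ and must instead track how a single-coordinate puncture redistributes the minimum-weight codewords over \emph{every} admissible SS and Belov configuration. Separating the SS case from the Belov case---using Lemmas \ref{Lemma_not_SS_type} and \ref{Lem_Belov-type} to decide which minihyper type is realizable for each anti-expansion, and Lemma \ref{minihyper_kissing_number} when the minihyper has deficient rank---and then confirming that no configuration undercuts the claimed minimum is the delicate part; once the admissible configurations are fixed, the arithmetic of $wt(\sigma_{\Im}(\mathbf x))$ over $PG(4,2)$ is routine.
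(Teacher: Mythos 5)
Your proposal follows essentially the same route as the paper's proof: since all three minimum distances are odd, extend each code to a Griesmer code whose $2$-adic anti-expansion has weight $2$, invoke Lemma \ref{Tho_finite_geometry_results} to classify the extension as Solomon--Stiffler or Belov type, and obtain the lower bound by enumerating the error coefficients of all single-coordinate punctures of the finitely many admissible minihyper configurations, with explicit constructions supplying the matching upper bounds. Both your sketch and the paper ultimately defer to the same finite case check over these configurations (the paper verifies it by Magma), so no further comparison is needed.
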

		\begin{proof}
			According to the Griesmer bound, the three codes are all Griesmer codes.
			Firstly, it is easy to use Magma \cite{bosma1997magma} to check that  $([s_2\cdot \mathcal{P}_{[5]}],G_{[43,5,21;11]_2})$,  $([s_2\cdot \mathcal{P}_{[5]}],G_{[39,5,19;13]_2})$, and  $[s_1\cdot \mathcal{P}_{[5]}\setminus (\mathcal{P}_{[4]} \setminus \mathcal{P}_{I_4})] $ generate $[31s_1+8,5,16s_1+3;13]_2$,  $[31s_1+12,5,16s_1+5;11]_2$, and $[31s_2+20,5,16s_2+9;8]_2$ AFER-optimal linear codes, where
			\begin{equation*}
				{\scriptsize
					\setlength{\arraycolsep}{0pt}
					G_{[39,5,19;13]_2}= \left( \begin{array}{*{39}{c}}
						1&0&0&0&0&0&1&1&1&1&1&1&0&0&0&0&0&0&1&1&1&1&1&1&0&0&0&0&0&0&1&1&1&0&1&1&1&0&1 \\
						0&1&0&0&1&1&1&0&0&0&1&1&0&1&1&1&0&0&0&1&1&1&0&0&1&0&0&0&1&1&1&0&0&1&0&1&1&0&1 \\
						0&0&1&0&1&1&1&0&0&1&0&0&0&1&1&0&1&1&1&0&0&1&0&0&0&1&1&0&1&1&1&0&0&1&1&0&0&1&0 \\
						0&0&0&1&1&1&1&1&1&0&0&0&0&0&0&1&1&1&1&1&1&0&0&0&0&0&0&1&1&1&1&1&1&0&0&0&0&1&1 \\
						0&0&0&0&0&0&0&0&0&0&0&0&1&1&1&1&1&1&1&1&1&1&1&1&1&1&1&1&1&1&1&1&1&0&1&1&1&0&0
					\end{array} \right),
					G_{[43,5,21;11]_2}= \left( \begin{array}{*{43}{c}}
						1&0&0&0&1&1&0&1&1&1&0&0&1&0&0&0&0&1&1&1&1&0&1&1&1&0&0&1&1&0&0&0&1&1&1&0&1&1&1&0&0&0&0 \\
						0&1&0&0&1&1&0&1&1&0&1&1&0&1&1&1&0&0&0&1&1&1&1&0&0&0&0&1&1&0&0&1&0&0&0&1&1&0&0&1&1&0&0 \\
						0&0&1&0&0&0&0&0&0&0&0&0&0&0&0&1&0&1&1&0&0&1&1&1&1&1&1&0&1&1&1&1&1&1&0&1&1&1&1&0&0&1&1 \\
						0&0&0&1&1&1&0&0&0&1&1&1&0&0&0&1&1&1&1&1&1&0&1&0&0&1&1&0&0&0&0&1&1&1&1&0&1&0&0&1&1&1&1 \\
						0&0&0&0&0&0&1&1&1&0&0&0&1&1&1&1&1&1&1&1&1&0&0&0&0&0&0&0&1&1&1&0&0&0&1&1&1&1&1&1&1&1&1
					\end{array} \right).}
			\end{equation*}

			Since those three linear codes all have odd minimum distances, their extended codes $\widehat{C}_1$, $\widehat{C}_2$, and $\widehat{C}_3$  have parameters $[31s_1+9,5,16s_1+4]_2$, $[31s_1+13,5,16s_1+6]_2$, and $[31s_1+21,5,16s_1+10]_2$. By Lemmas \ref{Tho_finite_geometry_results} and \ref{minihyper_kissing_number}, the first two codes are SS-type linear codes, and the third is SS or Belov-type linear codes.
            
            Let $\Im_i$, $1\le i\le 3$, be the associated minihyper of $\widehat{C}_i$.
            Then, it follows that
			$\Im_1=\mathcal{P}_{[4]}\bigcup  \mathcal{P}_{[3,5]}$ or $ \mathcal{P}_{[4]}\bigcup  \mathcal{P}_{[2,4]}$,
			$\Im_2=\mathcal{P}_{[4]}\bigcup  \mathcal{P}_{[4,5]}$ or $ \mathcal{P}_{[4]}\bigcup  \mathcal{P}_{[3,4]}$, and
			$\Im_3=  \mathcal{P}_{[3]}\bigcup  \mathcal{P}_{[4,5]}$, $\mathcal{P}_{[3]}\bigcup  \mathcal{P}_{[3,4]}$, $\mathcal{P}_{[3]}\bigcup  \mathcal{P}_{[2,3]}$ or $\mathcal{P}_{[4]}\setminus \mathcal{P}_{T_4}$. 
            It is easy to calculate the error coefficients of the punctured code $\widehat{C}_i^{\prime}$ of $\widehat{C}_i$. 
			After iterating over all feasible solutions, we obtain $e (\widehat{C}_1^{\prime})\ge 13$, $e (\widehat{C}_2^{\prime})\ge 11$, and $e (\widehat{C}_3^{\prime})\ge 8$. 
			Thus, $e_1=13$, $e_2=11$, and $e_3=8$. 
		\end{proof}

		Thus far, we determine the parameters of all binary $5$-dimensional AFER-optimal linear codes not listed in Table \ref{Five_Binary_AFER}. We present their parameters and specific constructions in Table \ref{Five_Binary_AFER_2}.
		
		Additionally, to evaluate the performance of our bounds when they are not tight, we include the lower bounds derived from Theorem \ref{Theo_Mix_lower_bound} in the fourth column of Table \ref{Five_Binary_AFER_2}. It is evident that the difference between our lower bounds and the actual values is less than or equal to $2$. This shows that our bounds perform well even when they are not tight.

		\begin{remark}\label{remark_datebase} 
            For binary Griesmer optimal linear codes with higher dimensions, we can use the AFER-optimal linear codes from Tables \ref{Three_Binary_AFER} to \ref{Five_Binary_AFER_2} as a foundation and iteratively apply Theorem \ref{Theo_Mix_lower_bound} to derive lower bounds on their error coefficients.
            The implementation can be broken down into the following steps:

            \textbf{Step 1:}  Create a parameter database $\mathcal{B}$ for binary AFER-optimal linear codes with dimensions $2$, $3$, and $5$.

            \textbf{Step 2:}  Using the available information (the database $\mathcal{B}$ and classification results on the maximum rank of Griesmer minihypers), apply Theorem \ref{Theo_Mix_lower_bound} to iteratively derive the error coefficient lower bounds for $6$-dimensional Griesmer optimal linear codes, and update the database $\mathcal{B}$ with the results. 

            \textbf{Step 3:}  Repeat step 2 in a similar way, expanding database $\mathcal{B}$ repeatedly. 
    

This method can help us build a database containing the error lower bounds of all Griesmer optimal linear codes. The same idea also applies to the non-binary case. 
The most immediate application is to help improve Grassl's codetable \cite{Grassl:codetables} and Maruta's codetable \cite{Maruta:codetable}.
		\end{remark}

		\begin{table*}[ht]
			\caption{ Five-dimensional AFER-optimal linear codes whose AFER-Optimality cannot be directly determined by Theorem \ref{Theo_Mix_lower_bound}.
			}\label{Five_Binary_AFER_2}
			\centering	
			\begin{tabular}{ccccc}
				\toprule
				Length   &          Parameters          &                            Constructions                             & \makecell[c]{Lower bounds on $A_d(C)$\\ from Theorems \ref{Theo_Mix_lower_bound} and \ref{L_extendability_Linear_codes}} &              AFER-Optimality               \\ \midrule
				
				11     &        $[11,5,4;4]_2$        &                          $G_{[11,5,4;4]_2}$                          &                                       $\ge 4$ (Theorem \ref{L_extendability_Linear_codes})             & \cite{ShitaoLi2024} $^\ast$ \\
				13     &        $[13,5,5;3]_2$        &                          $G_{[13,5,5;3]_2}$                          &                                        -                                         &            \cite{ShitaoLi2024} $^\ast$             \\
				$31s_2+7$  &  $[31s_2+7,5,16s_2+2;5]_2$   &            $([s_2\cdot \mathcal{P}_{[5]}],G_{[7,5,2;5]_2})$             &                                 $\ge 4$ (Theorem \ref{Theo_Mix_lower_bound}, Case 2)                                 &         Theorem \ref{Theo_[7,5,2]}         \\
				$31s_1+8$  &  $[31s_1+8,5,16s_1+3;13]_2$  &           $([s_2\cdot \mathcal{P}_{[5]}],G_{[39,5,19;13]_2})$           &                                $\ge 11$ (Theorem \ref{Theo_Mix_lower_bound}, Case 1)                                 &      Theorem \ref{Theorem_[21,5,10]}       \\
				$31s_1+9$  &  $[31s_1+9,5,16s_1+4;27]_2$  &   $(s_1+1)\cdot \mathcal{P}_{[5]}\setminus (\mathcal{P}_{[4]} \bigcup  \mathcal{P}_{[3,5]})$   &                                $\ge 25$ (Theorem \ref{Theo_Mix_lower_bound}, Case 3)                                 &         Theorem \ref{Theo_SS_h=2}          \\
				$31s_1+12$ & $[31s_1+12,5,16s_1+5;11]_2$  &           $([s_2\cdot \mathcal{P}_{[5]}],G_{[43,5,21;11]_2})$           &                                 $\ge 9$ (Theorem \ref{Theo_Mix_lower_bound}, Case 4)                                 &      Theorem \ref{Theorem_[21,5,10]}       \\
				$31s_1+13$ & $[31s_1+13,5,16s_1+6;23]_2$  &   $(s_1+1)\cdot \mathcal{P}_{[5]}\setminus (\mathcal{P}_{[4]} \bigcup  \mathcal{P}_{[4,5]})$   &                                $\ge 21$ (Theorem \ref{Theo_Mix_lower_bound}, Case 4)                                 &         Theorem \ref{Theo_SS_h=2}          \\
				$31s_2+20$ &  $[31s_2+20,5,16s_2+9;8]_2$  &   $s_1\cdot \mathcal{P}_{[5]}\setminus (\mathcal{P}_{[4]} \setminus \mathcal{P}_{I_4}) $    &                                 $\ge 7$ (Theorem \ref{Theo_Mix_lower_bound}, Case 1)                                 &      Theorem \ref{Theorem_[21,5,10]}       \\
				$31s_2+21$ & $[31s_2+21,5,16s_2+10;20]_2$ &   $s_1\cdot \mathcal{P}_{[5]}\setminus (\mathcal{P}_{[4]} \setminus \mathcal{P}_{T_4}) $    &                                $\ge 18$ (Theorem \ref{Theo_Mix_lower_bound}, Case 1)                                 &         Theorem \ref{Theo_SS_h=2}          \\
				$31s_2+23$ & $[31s_2+23,5,16s_2+11;14]_2$ &    $s_1\cdot \mathcal{P}_{[5]}\setminus (\mathcal{P}_{[3]}\bigcup  \mathcal{P}_{\{ 4\}} )$     &                                $\ge 13$ (Theorem \ref{Theo_Mix_lower_bound}, Case 1)                                 &         Theorem \ref{Theo_SS_h=2}          \\ \bottomrule
			\end{tabular}
			\begin{tablenotes}    
				\footnotesize               
				\item[1]	$^\ast $	These codes are not Griesmer optimal linear codes.
			\end{tablenotes}
		\end{table*}

		
		\section{Conclusion}\label{Sec VI}

		In this paper, we proposed five iterative lower bounds for the error coefficients of Griesmer optimal linear codes, as presented in Theorems \ref{Enhanced_GriesmerNound_q nmid}-\ref{Theo_minihyper_rank_bound} and Theorem \ref{Theo_Mix_lower_bound}. Numerical results demonstrated that our bounds are tight. Compared to the linear programming bound in \cite{sole2021linear}, our bounds are both computationally more efficient and tighter.
		Furthermore, we have fully determined the parameters of all binary AFER-optimal linear codes with dimensions up to $5$ and provided their explicit constructions, as shown in Tables \ref{Three_Binary_AFER}-\ref{Five_Binary_AFER_2}. 
        Finally, we also provided a method to establish lower bounds of the error coefficients of high-dimensional Griesmer optimal linear codes based on the results in this paper, as noted in Remark \ref{remark_datebase}. 

        Our work also raises some questions in this field. Here, we briefly discuss two questions directly related to our results. 
		
		\begin{itemize}
			\item 	
			Let $s$ be a non-negative integer and $C$ be an $[sv_k+n,k,sq^{k-1}+d;e_s]_q$ AFER-optimal linear code.
			Is there an integer $\imath $ such that $e_s$ is a constant integer for $s\ge \imath $?
			\item  Theorem \ref{Theo_minihyper_rank_bound} shows that the rank of associated Griesmer minihyper significantly affects the error coefficients of linear codes.
			A natural question is, can we establish bounds for minihyper to constrain its maximum rank?
			%
		\end{itemize}

			\section*{Appendix} \label{Appendix}

			To guarantee the integrity of this paper, we present generator matrices that are not listed in Tables \ref{Five_Binary_AFER} and \ref{Five_Binary_AFER_2} here.
			Specifically, it is easy to check that the following matrices generate $[7,5,2;5]_2$, $[8,5,2;1]_2$, $[9,5,3;4]_2$, $[10,5,4;10]_2$, $[11,5,4;4]_2$, $[12,5,4;1]_2$, $[13,5,5;3]_2$, $[14,5,6;7]_2$, $[39,5,19;13]_2$, $[42,5,20;3]_2$ and $[43,5,21;11]_2$ linear codes.

			\begin{equation*}
				{\scriptsize
					\setlength{\arraycolsep}{0pt}
					G_{[7,5,2;5]_2}=
					\setlength{\arraycolsep}{0pt}
					\left( {\begin{array}{*{10}{c}}
							1&0&0&0&0&1&0\\
							0&1&0&0&0&0&1\\
							0&0&1&0&0&1&1\\
							0&0&0&1&0&1&1\\
							0&0&0&0&1&1&1
					\end{array}} \right),
					G_{[8,5,2;1]_2}=
					\setlength{\arraycolsep}{0pt}
					\left( {\begin{array}{*{8}{c}}
							1& 0& 0& 0& 0& 1& 1& 1 \\
							0& 1& 0& 0& 0& 0& 1& 1 \\
							0& 0& 1& 0& 0& 1& 0& 1 \\
							0& 0& 0& 1& 0& 1& 1& 0 \\
							0& 0& 0& 0& 1& 1& 1& 1
					\end{array}} \right)
					,
					G_{[9,5,3;4]_2}=
					\left( {\begin{array}{*{9}{c}}
							1& 0& 0& 0& 0& 1& 1& 1& 1 \\
							0& 1& 0& 0& 0& 0& 1& 1& 1 \\
							0& 0& 1& 0& 0& 1& 0& 1& 1 \\
							0& 0& 0& 1& 0& 1& 1& 0& 1 \\
							0& 0& 0& 0& 1& 1& 1& 1& 0
					\end{array}} \right),
					G_{[10,5,4;10]_2}=
					\left( {\begin{array}{*{10}{c}}
							0&0& 0& 0& 1& 1& 1& 1& 0& 1\\
							1& 0& 0& 0& 0& 1& 1& 1& 1& 0 \\
							0& 1& 0& 0& 0& 0& 1& 1& 1& 1 \\
							0& 0& 1& 0& 1& 0& 1& 0& 1& 0 \\
							0& 0& 0& 1& 1& 0& 0& 1& 1& 0 \\
					\end{array}} \right),
				}
			\end{equation*}
			
			\begin{equation*}
				{\scriptsize
					\setlength{\arraycolsep}{0pt}
					G_{[11,5,4;4]_2}=
					\left( {\begin{array}{cccccccccccc}
							0&0& 0& 0& 1& 1& 1& 1& 0& 1& 1\\
							1& 0& 0& 0& 0& 1& 1& 1& 1& 0 &0 \\
							0& 1& 0& 0& 0& 0& 1& 1& 1& 1 & 0\\
							0& 0& 1& 0& 1& 0& 1& 0& 1& 0 & 0\\
							0& 0& 0& 1& 1& 0& 0& 1& 1& 0 & 0\\
					\end{array}} \right),
					G_{[12,5,4;1]_2}=
					\left( {\begin{array}{ccccccccccccc}
							0& 0& 0& 0& 1& 1& 1& 1& 0& 1& 1 &1\\
							1& 0& 0& 0& 0& 1& 1& 1& 1& 0& 0 &0\\
							0& 1& 0& 0& 1& 1& 0& 0& 1& 0& 1 &0\\
							0& 0& 1& 0& 0& 1& 0& 1& 1& 1& 1 &0\\
							0& 0& 0& 1& 1& 0& 0& 1& 1& 0& 0 &0
					\end{array}} \right),
					G_{[13,5,5;3]_2}=
					\left( {\begin{array}{ccccccccccccc}
							1 & 0 & 0 & 1 & 0 & 1 & 0 & 1 & 0 & 1 & 1 & 0 & 1 \\
							0 & 1 & 0 & 1 & 0 & 1 & 0 & 1 & 0 & 1 & 0 & 1 & 0 \\
							0 & 0 & 1 & 1 & 0 & 0 & 0 & 0 & 1 & 1 & 0 & 0 & 1 \\
							0 & 0 & 0 & 0 & 1 & 1 & 0 & 0 & 1 & 1 & 1 & 1 & 0 \\
							0 & 0 & 0 & 0 & 0 & 0 & 1 & 1 & 1 & 1 & 1 & 1 & 1
					\end{array}} \right),
				}
			\end{equation*}
			
			\begin{equation*}
				{\scriptsize
					\setlength{\arraycolsep}{0pt}
					G_{[14,5,6;7]_2}=
					\left( {\begin{array}{cccccccccccccc}
							1 & 0 & 0 & 0 & 0 & 1 & 1 & 0 & 0 & 1 & 1 & 1 & 1 & 0 \\
							0 & 1 & 0 & 0 & 1 & 0 & 1 & 0 & 1 & 0 & 1 & 1 & 0 & 1 \\
							0 & 0 & 1 & 0 & 1 & 1 & 0 & 0 & 1 & 1 & 0 & 1 & 0 & 0 \\
							0 & 0 & 0 & 1 & 1 & 1 & 1 & 0 & 0 & 0 & 0 & 1 & 1 & 1 \\
							0 & 0 & 0 & 0 & 0 & 0 & 0 & 1 & 1 & 1 & 1 & 1 & 1 & 1
					\end{array}} \right),
					G_{[42,5,20;3]_2}= \left( {\begin{array}{*{42}{c}}
							1& 0& 0& 0& 1& 1& 0& 1& 1& 1& 0& 0& 1& 0& 0& 0& 0& 1& 1& 1& 1& 0& 1& 1& 0& 0& 1& 1& 0& 0& 0& 1& 1& 1& 0& 1& 1& 1& 0& 0& 0& 0 \\
							0& 1& 0& 0& 1& 1& 0& 1& 1& 0& 1& 1& 0& 1& 1& 1& 0& 0& 0& 1& 1& 1&1& 0& 0& 0& 1& 1& 0& 0& 1& 0& 0& 0& 1& 1& 0& 0& 1& 1& 0& 0 \\
							0& 0& 1& 0& 0& 0& 0& 0& 0& 0& 0& 0& 0& 0& 0& 1& 0& 1& 1& 0& 0& 1&1& 1& 1& 1& 0& 1& 1& 1& 1& 1& 1& 0& 1& 1& 1& 1& 0& 0& 1& 1 \\
							0& 0& 0& 1& 1& 1& 0& 0& 0& 1& 1& 1& 0& 0& 0& 1& 1& 1& 1& 1& 1& 0&1& 0& 1& 1& 0& 0& 0& 0& 1& 1& 1& 1& 0& 1& 0& 0& 1& 1& 1& 1 \\
							0& 0& 0& 0& 0& 0& 1& 1& 1& 0& 0& 0& 1& 1& 1& 1& 1& 1& 1& 1& 1& 0&0& 0& 0& 0& 0& 1& 1& 1& 0& 0& 0& 1& 1& 1& 1& 1& 1& 1& 1& 1
					\end{array}} \right).
				}
			\end{equation*}

			
			\bibliographystyle{IEEEtran}
			\bibliography{reference}

\begin{thebibliography}{10}
\providecommand{\url}[1]{#1}
\csname url@samestyle\endcsname
\providecommand{\newblock}{\relax}
\providecommand{\bibinfo}[2]{#2}
\providecommand{\BIBentrySTDinterwordspacing}{\spaceskip=0pt\relax}
\providecommand{\BIBentryALTinterwordstretchfactor}{4}
\providecommand{\BIBentryALTinterwordspacing}{\spaceskip=\fontdimen2\font plus
\BIBentryALTinterwordstretchfactor\fontdimen3\font minus
  \fontdimen4\font\relax}
\providecommand{\BIBforeignlanguage}[2]{{%
\expandafter\ifx\csname l@#1\endcsname\relax
\typeout{** WARNING: IEEEtran.bst: No hyphenation pattern has been}%
\typeout{** loaded for the language `#1'. Using the pattern for}%
\typeout{** the default language instead.}%
\else
\language=\csname l@#1\endcsname
\fi
#2}}
\providecommand{\BIBdecl}{\relax}
\BIBdecl

\bibitem{Singleton1964}
R.~C. Singleton, ``Maximum distance $q$-ary codes,'' \emph{IEEE {T}ransactions
  on {I}nformation {T}heory}, vol.~10, p. 116–118, 1964.

\bibitem{griesmer1960bound}
J.~H. Griesmer, ``A bound for error-correcting codes,'' \emph{IBM Journal of
  Research and Development}, vol.~4, no.~5, pp. 532--542, 1960.

\bibitem{plotkin1960binary}
M.~Plotkin, ``Binary codes with specified minimum distance,'' \emph{IRE
  Transactions on Information Theory}, vol.~6, no.~4, pp. 445--450, 1960.

\bibitem{delsarte1972bounds}
P.~Delsarte, ``Bounds for unrestricted codes, by linear programming,''
  \emph{Philips Res. Rep.}, vol.~27, p. 272–289, 1972.

\bibitem{HengZiling2020}
Z.~Heng, Q.~Wang, and C.~Ding, ``Two families of optimal linear codes and their
  subfield codes,'' \emph{IEEE Transactions on Information Theory}, vol.~66,
  no.~11, pp. 6872--6883, 2020.

\bibitem{ShiMinjiaSO2023}
M.~Shi, S.~Li, and J.-L. Kim, ``Two conjectures on the largest minimum
  distances of binary self-orthogonal codes with dimension $5$,'' \emph{IEEE
  Transactions on Information Theory}, vol.~69, no.~7, pp. 4507--4512, 2023.

\bibitem{li2023hull}
Y.~Li, S.~Zhu, and E.~Mart{\'\i}nez-Moro, ``The hull of two classical
  propagation rules and their applications,'' \emph{IEEE Transactions on
  Information Theory}, vol.~69, no.~10, pp. 6500--6511, 2023.

\bibitem{HuZhao2024}
Z.~Hu, Y.~Xu, N.~Li, X.~Zeng, L.~Wang, and X.~Tang, ``New constructions of
  optimal linear codes from simplicial complexes,'' \emph{IEEE Transactions on
  Information Theory}, vol.~70, no.~3, pp. 1823--1835, 2024.

\bibitem{swaszek1995lower}
P.~F. Swaszek, ``A lower bound on the error probability for signals in white
  {G}aussian noise,'' \emph{IEEE {t}ransactions on {i}nformation {t}heory},
  vol.~41, no.~3, pp. 837--841, 1995.

\bibitem{abdullah2023some}
M.~Abdullah and W.~H. Mow, ``Some new constructions of {AFER}-optimal binary
  linear block codes,'' in \emph{2023 IEEE International Symposium on
  Information Theory (ISIT)}.\hskip 1em plus 0.5em minus 0.4em\relax IEEE,
  2023, pp. 1261--1265.

\bibitem{arikan2009channel}
E.~Arikan, ``Channel polarization: A method for constructing capacity-achieving
  codes for symmetric binary-input memoryless channels,'' \emph{IEEE
  {T}ransactions on {I}nformation {T}heory}, vol.~55, no.~7, pp. 3051--3073,
  2009.

\bibitem{arikan2019sequential}
E.~Ar{\i}kan, ``From sequential decoding to channel polarization and back
  again,'' \emph{arXiv preprint arXiv:1908.09594}, 2019.

\bibitem{polyanskiy2010channel}
Y.~Polyanskiy, H.~V. Poor, and S.~Verd{\'u}, ``Channel coding rate in the
  finite blocklength regime,'' \emph{IEEE Transactions on Information Theory},
  vol.~56, no.~5, pp. 2307--2359, 2010.

\bibitem{Rowshan2023}
M.~Rowshan and J.~Yuan, ``On the minimum weight codewords of {PAC} codes: The
  impact of pre-transformation,'' \emph{IEEE Journal on Selected Areas in
  Information Theory}, vol.~4, pp. 487--498, 2023.

\bibitem{Rowshan2023a}
M.~Rowshan, S.~H. Dau, and E.~Viterbo, ``On the formation of min-weight
  codewords of polar/{PAC} codes and its applications,'' \emph{IEEE
  Transactions on Information Theory}, vol.~69, no.~12, pp. 7627--7649, 2023.

\bibitem{Gu2024}
X.~Gu, M.~Rowshan, and J.~Yuan, ``Reverse {PAC} codes: Look-ahead list
  decoding,'' in \emph{2024 IEEE International Symposium on Information Theory
  (ISIT)}, 2024, pp. 2844--2849.

\bibitem{Dragoi2024}
V.-F. Drăgoi, M.~Rowshan, and J.~Yuan, ``On the closed-form weight enumeration
  of polar codes: $1.5d$-weight codewords,'' \emph{IEEE Transactions on
  Communications}, vol.~72, no.~10, pp. 5972--5987, 2024.

\bibitem{moradi2024polarization}
M.~Moradi, ``Polarization-adjusted convolutional ({PAC}) codes as a
  concatenation of inner cyclic and outer polar-and {R}eed-{M}uller-like
  codes,'' \emph{Finite Fields and Their Applications}, vol.~93, p. 102321,
  2024.

\bibitem{huffman2010fundamentals}
W.~C. Huffman and V.~Pless, \emph{Fundamentals of Error-Correcting
  Codes}.\hskip 1em plus 0.5em minus 0.4em\relax Cambridge university press,
  2003.

\bibitem{lin2020transformation}
C.-Y. Lin, Y.-C. Huang, S.-L. Shieh, and P.-N. Chen, ``Transformation of binary
  linear block codes to polar codes with dynamic frozen,'' \emph{IEEE Open
  Journal of the Communications Society}, vol.~1, pp. 333--341, 2020.

\bibitem{khebbou2023decoding}
D.~Khebbou, I.~Chana, and H.~Ben-Azza, ``Decoding of the extended {G}olay code
  by the simplified successive-cancellation list decoder adapted to
  multi-kernel polar codes,'' \emph{TELKOMNIKA (Telecommunication Computing
  Electronics and Control)}, vol.~21, no.~3, pp. 477--485, 2023.

\bibitem{khebbou2023single}
------, ``Single parity check node adapted to polar codes with dynamic frozen
  bit equivalent to binary linear block codes,'' \emph{Indonesian Journal of
  Electrical Engineering and Computer Science}, vol.~29, no.~2, pp. 816--824,
  2023.

\bibitem{macwilliams1977theory}
F.~J. MacWilliams and N.~J.~A. Sloane, \emph{The theory of error-correcting
  codes}.\hskip 1em plus 0.5em minus 0.4em\relax New York: Elsevier/North
  Holland, 1978.

\bibitem{sole2021linear}
P.~Sol{\'e}, Y.~Liu, W.~Cheng, S.~Guilley, and O.~Riou, ``Linear programming
  bounds on the kissing number of $q$-ary codes,'' in \emph{2021 IEEE
  Information Theory Workshop (ITW)}.\hskip 1em plus 0.5em minus 0.4em\relax
  IEEE, 2021, pp. 1--5.

\bibitem{abdullah2023new}
M.~Abdullah and W.~H. Mow, ``New search for the polarization-adjusted
  convolutional codes with respect to the {AFER}-optimality criterion,'' in
  \emph{2023 IEEE International Symposium on Information Theory (ISIT)}.\hskip
  1em plus 0.5em minus 0.4em\relax IEEE, 2023, pp. 1723--1728.

\bibitem{ShitaoLi2024}
S.~Li, G.~Luo, M.~Shi, and S.~Ling, ``On the error coefficients of asymptotic
  frame errorrate optimal binary linear codes,'' \emph{IEEE Transactions on
  Information Theory}, 2025, early access.

\bibitem{li2024characterization}
S.~Li and M.~Shi, ``Characterization and classification of binary linear codes
  with various hull dimensions from an improved mass formula,'' \emph{IEEE
  Transactions on Information Theory}, vol.~70, no.~5, pp. 3357--3372., 2024.

\bibitem{liu2023kissing}
Y.~Liu, W.~Cheng, O.~Rioul, S.~Guilley, and P.~Sol{\'e}, ``Kissing number of
  codes: A survey,'' \emph{Coding Theory and applications}, 2023.

\bibitem{helleseth1984further}
T.~Helleseth, ``Further classifications of codes meeting the {G}riesmer
  bound,'' \emph{IEEE {T}ransactions on {I}nformation {T}heory}, vol.~30,
  no.~2, pp. 395--403, 1984.

\bibitem{mceliece1991modifications}
R.~McEliece and G.~Solomon, ``Modifications of the {G}riesmer bound,''
  \emph{The Telecommunications and Data Acquisition Report}, 1991.

\bibitem{landjev2001geometric}
I.~Landjev, ``The geometric approach to linear codes,'' in \emph{Finite
  Geometries: Proceedings of the Fourth Isle of Thorns Conference}.\hskip 1em
  plus 0.5em minus 0.4em\relax Springer, 2001, pp. 247--256.

\bibitem{ward1999introduction}
H.~N. Ward, ``An introduction to divisible codes,'' \emph{Designs, Codes and
  Cryptography}, vol.~17, no. 1-3, pp. 73--79, 1999.

\bibitem{HAMADA1993229}
N.~Hamada, ``A characterization of some $[n,k,d;q]$-codes meeting the
  {G}riesmer bound using a minihyper in a finite projective geometry,''
  \emph{Discrete Mathematics}, vol. 116, no.~1, pp. 229--268, 1993.

\bibitem{solomon1965algebraically}
G.~Solomon and J.~J. Stiffler, ``Algebraically punctured cyclic codes,''
  \emph{{I}nformation and {C}ontrol}, vol.~8, no.~2, pp. 170--179, 1965.

\bibitem{belov1974construction}
B.~Belov, V.~Logachev, and V.~Sandimirov, ``Construction of a class of linear
  binary codes achieving the {V}arshamov-{G}riesmer bound,'' \emph{Problemy
  Peredachi Informatsii}, vol.~10, no.~3, pp. 36--44, 1974.

\bibitem{landjev2007weighted}
I.~Landjev and L.~Storme, ``A weighted version of a result of {H}amada on
  minihypers and on linear codes meeting the {G}riesmer bound,'' \emph{Designs,
  Codes and Cryptography}, vol.~45, no.~1, pp. 123--138, 2007.

\bibitem{Grassl:codetables}
M.~Grassl, ``{Bounds on the minimum distance of linear codes and quantum
  codes},'' Online available at \url{http://www.codetables.de}, 2007, accessed
  on 2025-1-7.

\bibitem{hamada1985characterization}
N.~Hamada, ``Characterization resp. nonexistence of certain {$q$}-ary linear
  codes attaining the {G}riesmer bound,'' \emph{Bull. Osaka Women's Univ.},
  vol.~22, pp. 1--47, 1985.

\bibitem{maruta2001extendability}
T.~Maruta, ``On the extendability of linear codes,'' \emph{Finite Fields and
  Their Applications}, vol.~7, no.~2, pp. 350--354, 2001.

\bibitem{bosma1997magma}
W.~Bosma, J.~Cannon, and C.~Playoust, ``The {M}agma algebra system {I}: The
  user language,'' \emph{Journal of Symbolic Computation}, vol.~24, no. 3-4,
  pp. 235--265, 1997.

\bibitem{Maruta:codetable}
T.~Maruta, ``{Bounds on $n_q(k,d)$ for linear codes of small dimensions},''
  Online available at \url{http://mars39.lomo.jp/opu/griesmer.htm}, 2022,
  accessed on 2025-1-7.

\end{thebibliography}

		\end{document}